\documentclass[11pt]{article}
\usepackage{times}
\usepackage{caption}
\captionsetup{font=small}
\usepackage[margin=0.8in]{geometry}

\usepackage{setspace}
\usepackage{cite}
\usepackage{graphicx}
\usepackage{subfigure}
\usepackage{amsmath}
\usepackage{amssymb}
\usepackage{amsthm}
\usepackage{algorithm,algorithmic}
\usepackage{booktabs}
\usepackage{multirow}
\usepackage{multicol}
\usepackage{color}
\usepackage{paralist}
\usepackage{xiaoli}
\usepackage{wrapfig}

\newcommand{\GF}{\mathbb{F}_2}

\begin{document}

\title{\huge {\bf SPRIGHT}: A Fast and Robust Framework for Sparse Walsh-Hadamard Transform}

\author{
Xiao Li, Joseph K. Bradley, Sameer Pawar and Kannan Ramchandran\thanks{This work was supported by grants NSF CCF EAGER 1439725, and NSF CCF 1116404 and MURI CHASE Grant No. 556016.}\\
Department of Electrical Engineering and Computer Science (EECS)\\
University of California, Berkeley\\
\{xiaoli,~josephkb,~spawar,~kannanr\}@eecs.berkeley.edu
}
\date{}

\maketitle

\begin{abstract}
We consider the problem of stably computing the Walsh-Hadamard Transform (WHT) of some $N$-length input vector in the presence of noise, where the $N$-point {\it Walsh spectrum} is $K$-sparse  with $K = {O}(N^{\delta})$ scaling sub-linearly in the input dimension $N$ for some $0<\delta<1$. Note that $K$ is linear in $N$ (i.e. $\delta = 1$), then similar to the standard Fast Fourier Transform (FFT) algorithm, the classic Fast WHT (FWHT) algorithm offers an ${O}(N)$ sample cost and ${O}(N\log N)$ computational cost, which are order optimal. Over the past decade, there has been a resurgence in research related to the computation of Discrete Fourier Transform (DFT) for some length-$N$ input signal that has a $K$-sparse $N$-point {\it Fourier spectrum}. In particular, through a sparse-graph code design, our earlier work on the {\it Fast Fourier Aliasing-based Sparse Transform} (FFAST) algorithm \cite{pawar2013computing} computes the $K$-sparse DFT in time ${O}(K\log K)$ by taking ${O}(K)$ noiseless samples. Inspired by the coding-theoretic design framework in \cite{pawar2013computing}, Scheibler et al. in \cite{scheibler2013fast} proposed the {\it Sparse Fast Hadamard Transform (SparseFHT)} algorithm that elegantly computes the $K$-sparse WHT in the {\it absence} of noise using ${O}(K\log N)$ samples in time ${O}(K\log^2 N)$. However, the SparseFHT algorithm explicitly exploits the noiseless nature of the problem, and is not equipped to deal with scenarios where the observations are corrupted by noise, as is true in general. Therefore, a question of critical interest is whether this coding-theoretic framework can be made robust to noise. Further, if the answer is yes, what is the extra price that needs to be paid for being robust to noise?

In this paper, we show, quite interestingly, that there is {\it no extra price} that needs to be paid for being robust to noise other than a constant factor. In other words, we can maintain the same scaling for the sample complexity ${O}(K\log N)$ and the computational complexity ${O}(K\log^2 N)$ as those of the noiseless case, using our proposed {\it {\bf SP}arse {\bf R}obust {\bf I}terative {\bf G}raph-based {\bf H}adamard {\bf T}ransform} ({\bf SPRIGHT}) algorithm. Similar to the FFAST algorithm \cite{pawar2013computing} and the SparseFHT algorithm \cite{scheibler2013fast}, the proposed SPRIGHT framework succeeds with high probability with respect to a random ensemble of signals with sparse Walsh spectra, where the support of the non-zero WHT coefficients is uniformly random. Experiments further corroborate the robustness of the SPRIGHT framework as well as its scaling performance.

\end{abstract}

\section{Introduction}
Ever since the introduction of orthonormal Walsh functions, the Walsh-Hadamard Transform (WHT) has gained traction for signal analysis in place of the Discrete Fourier Transform (DFT) because of its simplicity in computations and applicability in the design of practical systems like digital circuits. Starting off as the ``poor man's fast Fourier Transform'', the WHT has been further deployed over the past few decades in image and video compression \cite{pratt1969hadamard}, spreading code design in multiuser systems such as CDMA and GPS \cite{wgispreading}, and compressive sensing \cite{haghighatshoar2013polarization}. More recently, sparsity in the {\it Walsh spectrum} is found in many real-world applications involving the processing of large datasets, such as learning (pseudo) Boolean functions, decision trees and disjunctive normative form (DNF) formulas, etc.  Therefore, it is of practical and theoretical interest to develop fast algorithms for computing the WHT of signals with sparse or approximately sparse Walsh spectra. Traditionally, the WHT can be computed using $N$ samples and ${O}(N\log N)$ operations via a recursive algorithm \cite{lee1986fast,johnson2000in} analogous to the Fast Fourier Transform (FFT). However, these costs can be significantly reduced if the signal has a sparse Walsh spectrum \cite{horadam2007hadamard,hedayat1978hadamard}.

\subsection{Motivation and Contributions}
There has been a recent resurgence in research on computing the Discrete Fourier Transform (DFT) of signals that have sparse {\it Fourier spectra} \cite{hassanieh2012simple, hassanieh2012nearly,ghazi2013sample, iwen2007empirical,iwen2010combinatorial,pawar2013computing}. Since the WHT is a special case of a multidimensional DFT over the binary field, recent advances in computing $K$-sparse $N$-point DFTs have provided insights in designing algorithms for computing sparse WHTs. In particular, major progress has been made in breaking the ``$N$-barrier'' for computing an $N$-point sparse DFTs, which means that the sample complexity and computational complexity do not depend on the signal dimension $N$. In particular, using a sparse-graph code design, the {\it Fast Fourier Aliasing-based Sparse Transform (FFAST)} algorithm \cite{pawar2013computing} uses ${O}(K)$ samples and ${O}(K\log K)$ operations for any sub-linear sparsity $K={O}(N^{\delta})$ with $0<\delta<1$ assuming a uniform support distribution. Under a similar uniform support distribution for the WHT coefficients, the Sparse Fast Hadamard Transform (SparseFHT) algorithm developed in\cite{scheibler2013fast} elegantly computes a $K$-sparse $N$-point WHT with $K={O}(N^{\delta})$ using ${O}(K\log (N/K))$ samples and ${O}(K\log K \log N/K)$ operations by following the sparse-graph code design in \cite{pawar2013computing} for DFTs. When $K$ is scales sub-linearly in $N$ as $K={O}(N^\delta)$ for some constant $0<\delta<1$, these results are hereby interpreted as achieving a sample complexity ${O}(K\log N)$ and a computational complexity ${O}(K\log^2 N)$. A limitation of the SparseFHT algorithm is that it is designed to explicitly exploit the {\it noiseless} nature of the underlying signals and it is not clear how to generalize it to noisy settings. A key question of theoretical and practical interest in this paper is: what price must be paid to be robust to noise? Interestingly, in this paper we show that {\it there are no extra costs in sample complexity and computational complexity for being robust to noise, other than a constant factor determined by the signal-to-noise ratio (SNR).} 

Inspired by the algorithm design from the FFAST algorithm in \cite{pawar2013computing} and the noisy FFAST analysis in \cite{pawar2013thesis}, we consider the problem of computing a $K$-sparse $N$-point WHT from the input vector {\it in the presence of noise}, when the sparsity $K={O}(N^{\delta})$ is sub-linear in the signal dimension $N$ for some $0<\delta<1$ assuming a uniform support distribution. We develop a {\it SParse Robust Iterative Graph-based Transform (SPRIGHT)} framework to stably compute the $K$-sparse $N$-length WHT at any constant SNRs with high probability. In particular, our framework achieves sub-linear run-time ${O}(K\log^2 N)$ using ${O}(K\log N)$ noisy samples, which maintains the same sample and computational scaling as the noiseless case. This result also contrasts with the work on computing the sparse DFT in the presence of noise  \cite{pawar2013thesis}, where the robustness to noise incurs an extra factor of ${O}(\log N)$ in terms of the sample complexity from ${O}(K)$ to ${O}(K \log N)$ (the same extra factor is manifested in the run-time as well). This can be intuitively explained by the fact that the complex-valued $N$-point Fourier transform kernel has a ``$1/N$ precision'' while the binary-valued WHT kernel has a ``bit precision''.

\subsection{Notation and Organization}
Throughout this paper, the set of integers $\{0,1, \cdots, N-1\}$ for some integer $N$ is denoted by $[N]$. Lowercase letters, such as $x$, are used for the time domain expressions and uppercase letters, such as ${X}$, are used for the transform domain signal. Any boldface lowercase letter such as $\mathbf{x}\in\mathbb{R}^N$ represents a column vector containing the corresponding $N$ samples. The operator $\mathrm{supp}(\mathbf{x})$ takes the support set of the vector $\mathbf{x}$ and $|\cdot|$ takes the cardinality of a certain set. The notation $\mathbb{F}_2$ refers to the finite field consisting of $\{0, 1\}$, with defined operations such as summation and multiplication modulo 2. Furthermore, we let $\GF^n$ be the $n$-dimensional column vector with each element taking values from $\mathbb{F}_2$. For any vector $\mathbf{i}\in\GF^n$, denote by $\mathbf{i}=[i[1],\cdots,i[n]]^T\in\GF^n$ the index vector containing the binary representation of some integer $i$, with $i[1]$ and $i[n]$ being the least significant bit (LSB) and the most significant bit (MSB), respectively. The inner product of two binary indices $\mathbf{i}\in\GF^n$ and $\mathbf{j}\in\GF^n$ is defined by $\ip{\mathbf{i}}{\mathbf{j}}=\sum_{t=0}^{n-1} i[t]j[t]$ with arithmetic over $\mathbb{F}_2$, and the inner product between two vectors $\mathbf{x},\mathbf{y}\in\mathbb{R}^N$ is defined as $\ip{\mathbf{x}}{\mathbf{y}}=\sum_{t=1}^{N}x[t]u[t]$ with arithmetic over $\mathbb{R}$. The sign function here is defined as 
\begin{align}
	\sgn{x} =
	\begin{cases}
	1, & x<0\\
	0, & x>0
	\end{cases}
\end{align}
such that $x=|x|(-1)^{\sgn{x}}$.

This paper is organized as follows. In Section \ref{sec:main_result}, we present our input (signal) model and our goal, followed by a summary of our main results. To motivate our design, we explain in Section \ref{sec:simple_example} the main idea of our SPRIGHT framework through a simple example. Then, we generalize the simple example and present the framework in Section \ref{fig:schematic}, followed by detailed discussions in Section \ref{sec:robust_bin_detection} about the noisy scenarios in our framework. Last but not least, in Section \ref{sec:applications} we briefly mention some machine learning applications that can be potentially cast as a sparse WHT computation problem, followed by numerical experiments in Section \ref{sec:simulations}.


\section{Problem Setup and Main Results}\label{sec:main_result}
Given a signal $\mathbf{x}\in\mathbb{R}^N$ containing $N=2^n$ samples $x[\mathbf{m}]$ indexed by $\mathbf{m}\in\GF^n$ (i.e. the $n$-bit binary representation of $m\in[N]$), its WHT coefficient is computed as
\begin{align}
	X[\mathbf{k}] = \frac{1}{\sqrt{N}} \sum_{\mathbf{m}\in\GF^n} (-1)^{\ip{\mathbf{k}}{\mathbf{m}}}x[\mathbf{m}],
\end{align}
where $\mathbf{k}=[k[1],\cdots,k[n]]^T\in\GF^n$ denotes the $n$-tuple index in the transform domain. Likewise, each sample $x[\mathbf{m}]$ has a WHT expansion as
\begin{align}
	x[\mathbf{m}] = \frac{1}{\sqrt{N}} \sum_{\mathbf{k}\in\GF^n} (-1)^{\ip{\mathbf{m}}{\mathbf{k}}}X[\mathbf{k}].
\end{align}

\subsection{Problem Setup}
In this work, we consider the noisy scenario where the samples $x[\mathbf{m}]$ are corrupted by additive noise $w[\mathbf{m}] \sim \mathcal{N}(0,\sigma^2)$, which is independent and normally distributed for all $\mathbf{m}\in\GF^n$. Thus,  we have access to only the noise-corrupted samples:
\begin{align}\label{time_domain_samples}
	u[\mathbf{m}] 	
	= \frac{1}{\sqrt{N}} \sum_{\mathbf{k}\in\GF^n} (-1)^{\ip{\mathbf{m}}{\mathbf{k}}}X[\mathbf{k}] + w[\mathbf{m}],\quad \mathbf{m}\in\GF^n.
\end{align}

\begin{ass}\label{random_support_assumption}
Let $\mathbf{{X}}\in\mathbb{R}^N$ be the WHT coefficient vector with support $\mathcal{K}\defn\supp{\mathbf{{X}}}$. Throughout this paper, we make the following assumptions:
\begin{itemize}
	\item[\bf A1] Each element in the support set $\mathcal{K}$ is chosen independently and uniformly at random from $[N]$.
	\item[\bf A2] The sparsity $K=\left|\supp{\mathbf{{X}}}\right|={O}(N^{\delta})$ is sub-linear in the dimension $N$ for some $0<\delta<1$.
	\item[\bf A3] Each coefficient $X[\mathbf{k}]$ for $\mathbf{k}\in\mathcal{K}$ is chosen from a finite set $\mathcal{X}\defn\{\pm \rho\}$ uniformly at random. 
	\item[\bf A4] The signal-to-noise ratio (SNR) is defined as
	\begin{align}\label{def_SNR}
		\mathsf{SNR}=\frac{\|\mathbf{x}\|^2/N}{\sigma^2}=\frac{\rho^2}{\sigma^2 N/K}
	\end{align}
	and is assumed to be an arbitrary constant value (i.e., $\rho$ scales with $\sqrt{N/K}$).
\end{itemize}
\end{ass}
\begin{rmk}
While the uniform distribution assumption {\bf A1} on the support $\mathcal{K}$ is essential to the analysis of our algorithm (see also \cite{pawar2013computing} and \cite{scheibler2013fast}), it can be generalized to accommodate non-uniform distributions that are of practical interest in real world applications. If we fail to insist on the sub-linear sparsity regime imposed in {\bf A2}, our results reduce to ${O}(N)$ samples in time ${O}(N\log N)$, which is well understood in classic WHT computations. Further, the binary constellation assumption {\bf A3} is imposed to simplify our analysis and can be readily extended to any arbitrarily large but finite constellation, which subsumes all practical digital signals that have been quantized with finite precision (essentially any signal processed by a digital computer). Last but not least, the constant SNR assumption {\bf A4} covers all regimes of interest.
\end{rmk}

The goal of this paper is to develop a robust and efficient algorithm that reliably recovers {\it exactly} the entire support $\mathcal{K}$ of the sparse WHT of a signal as well as the associated non-zero coefficients $X[\mathbf{k}]$ for $\mathbf{k}\in\mathcal{K}$ in the presence of noise. The questions of interest are
\begin{enumerate}
	\item How many noisy samples are needed to reliably recover the support of the sparse WHT?
	\item Can we reduce the computational complexity of the sparse WHT over that of the conventional WHT algorithm, even in the presence of noise?
\end{enumerate}

In the following, we first provide a summary of our main technical results, followed by a brief mention of previous work on computing sparse transforms.

\subsection{Main Result}
Our design is characterized by the triplet $(M, T, \Pf)$, where $M$ is the {\it sample complexity}\footnote{Note that the sample complexity is the number of raw samples needed as input for computations, as opposed to the measurement complexity in compressed sensing, where each measurement may potentially require all the samples from the input vector.}, $T$ is the {\it computational complexity} in terms of arithmetic operations, and $\Pf$ is the {\it probability of failure} in recovering the exact support of the sparse WHT, given by
\begin{align}\label{def_Pe}
	\Pf \defn \mathbb{E}\left[1\{{\supp{\widehat{\mathbf{{X}}}}\neq\supp{\mathbf{{X}}}}\}\right],
\end{align}
where $1\{\cdot\}$ is the indicator function and $\supp{\cdot}$ represents the support of some vector and the expectation is obtained with respect to the randomization of our algorithm, the noise distribution as well as the random signal ensemble in Assumption \ref{random_support_assumption}.

\begin{thm}
\label{thm_main.result_sublinear}
Let Assumption \ref{random_support_assumption} hold for the signal of interest $\mathbf{x}$ and its WHT vector $\mathbf{{X}}$. Then for any sparsity regime $K={O}(N^\delta)$ with $0<\delta<1$, the {\bf SPRIGHT} framework computes the $K$-sparse $N$-point WHT $\mathbf{{X}}$ with a vanishing failure probability $\Pf \rightarrow 0$ asymptotically in $K$ and $N$ using the following two algorithm options: 
\begin{itemize}
	\item the {\bf Sample Optimal (SO) SPRIGHT} algorithm with a sample complexity of $M={O}(K\log N)$ and a computational complexity of $T={O}(K\log^2N)$;
	\item the {\bf Near Sample Optimal (NSO) SPRIGHT} algorithm with a sample complexity of $M={O}(K\log^2 N)$ and a computational complexity of $T={O}(K\log^3N)$.
\end{itemize}
\end{thm}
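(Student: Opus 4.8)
The plan is to reduce exact support recovery to the success of a peeling decoder on a random bipartite graph, and to handle the noise through a separate robust per-bin detection step, so that the overall failure probability $\Pf$ splits into a \emph{graph-failure} term and a \emph{detection-failure} term, each of which I would argue vanishes.

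First I would make precise the hashing induced by subsampling. For each of a constant number $C$ of stages, one subsamples the time-domain samples along a chosen binary subspace together with $P$ offsets $\mathbf{p}\in\GF^n$, and applies the WHT aliasing identity. This yields a collection of \emph{bin observations}: each bin indexed by $\mathbf{j}$ collects $\sum_{\mathbf{k}\,\text{hashed to }\mathbf{j}}(-1)^{\ip{\mathbf{p}}{\mathbf{k}}}X[\mathbf{k}]$ plus an additive Gaussian term. Under Assumption~\ref{random_support_assumption} (A1), the map sending each nonzero coefficient to its bin in each stage behaves like an independent uniform assignment, so the bipartite graph whose left nodes are the $K$ nonzero coefficients and whose right nodes are the $CB$ bins, with $B=\Theta(K)$ bins per stage, is drawn from the standard sparse-graph ensemble of \cite{pawar2013computing,scheibler2013fast}.

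Next I would analyze the peeling decoder on this graph under an \emph{oracle} that correctly labels every bin as a zeroton, singleton, or multiton and returns the correct index $\mathbf{k}$ and sign of $X[\mathbf{k}]$ for each singleton. Here I would invoke the density-evolution / local-tree-neighborhood argument: for the chosen load $K/B$ and number of stages $C$, the expected fraction of unpeeled edges converges to the zero fixed point, a Doob/Azuma concentration argument shows the peeling strips all but a vanishing fraction, and an absence-of-stopping-sets (2-core) argument upgrades this to exact recovery with probability $1-o(1)$. This is essentially the noiseless graph analysis of \cite{pawar2013computing,scheibler2013fast}, which I would cite rather than reprove.

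The hard part will be replacing the oracle with the actual noisy detector and showing it survives the union bound. For each bin I would use the $P=O(\log N)$ offset measurements to form binary hypothesis tests for the bin type and for each of the $n=\log_2 N$ index bits of a singleton. The key estimate, following the noisy FFAST treatment \cite{pawar2013thesis}, is that because each bin observation averages $\Theta(N/K)$ time samples, the effective per-measurement SNR is amplified to a constant under Assumptions (A3)--(A4); hence each bit decision and type decision errs with probability decaying like $e^{-\Theta(P)}$, and $P=O(\log N)$ makes this polynomially small in $N$. A union bound over all $O(K)$ bins across all $O(\log K)$ peeling rounds and all $n$ bits keeps the aggregate detection-failure probability at $o(1)$; since no peeling error can be introduced once detection is uniformly correct, the two error events compose cleanly to give $\Pf\to 0$. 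Finally I would tally costs: $C$ stages $\times\,\Theta(K)$ bins $\times\,O(\log N)$ offsets gives $M=O(K\log N)$, while computing the hashed transforms and running $O(K)$ peeling steps each touching $O(\log N)$ measurements gives $T=O(K\log^2 N)$; the NSO variant trades a cruder but simpler detection code for an extra $\log N$ factor, yielding the second pair of bounds.
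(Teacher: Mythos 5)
Your proposal follows essentially the same route as the paper: you decompose $\Pf$ into an oracle-based peeling failure (handled by density evolution, martingale concentration, and an expander/2-core argument over the subsampling-induced ensemble) plus a robust bin-detection failure (constant per-measurement SNR after aliasing by Assumptions A3--A4, $e^{-\Theta(P)}$ per-test error with $P={O}(\log N)$ offsets, then a union bound over bins, rounds, and index bits), and you tally $M=CBP$ and $T$ dominated by the $P$ $B$-point WHTs --- exactly the composition of the paper's Theorems \ref{thm_peeling_decoder_general} and \ref{peeling-decoder-RBI} carried out in Appendix \ref{main.results.fast}. One small caution: for $\delta>1/3$ the hash functions share index bits, so bin assignments are \emph{not} independent across groups and the paper reproves the expansion property for that regime (Lemma \ref{lem_graph_expander_less_sparse}) rather than citing the LDPC literature, but your plan to invoke \cite{scheibler2013fast}, which analyzes the same ensemble across all $\delta\in(0,1)$, covers this point.
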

\begin{proof}
	See Appendix \ref{main.results.fast}.
\end{proof}

\begin{rmk}
Since we assume an arbitrarily large but finite constellation $\mathcal{X}$ for each non-zero coefficient, we show that the coefficients can in fact be recovered perfectly, even from the noisy measurements with high probability. The recovery algorithm is equally applicable to support recovery for signals with arbitrary coefficients over the real field, but the analysis becomes overly cumbersome without offering more insights to our design. Hence we do not pursue it in this paper. 
\end{rmk}

\begin{rmk}
Note that although the result in Theorem \ref{thm_main.result_sublinear} is obtained with a randomized algorithm, our SPRIGHT framework also admits the option of using a deterministic algorithm by spending an extra factor of ${O}(\log N)$ in both sample complexity and computational complexity. 
\end{rmk}

\subsection{Related Work}

Due to the similarities between the DFT and the WHT, we give a brief account of previous work on reducing the sample and computational complexity of obtaining a $K$-sparse $N$-point DFT.  The most related research thread in the literature is the computation of sparse DFT using theoretical computer science techniques such as sketching and hashing (see \cite{gilbert2002near,gilbert2005improved,iwen2010combinatorial,mansour1995randomized,gilbert2008tutorial}). Most of these algorithms aim at minimizing the approximation error of the DFT coefficients using an $\ell_2$-norm metric instead of exact support recovery (i.e., $\ell_0$-norm). 

Among these works, the most recent progress in this direction is the sFFT (Sparse FFT) algorithm developed in the series of papers \cite{hassanieh2012simple,hassanieh2012nearly,ghazi2013sample}. Most of these algorithms are based on first isolating (i.e., hashing) the non-zero DFT coefficients into different bins, using specific filters or windows that have `good' (concentrated) support in both time and frequency. The non-zero DFT coefficients are then recovered iteratively, one at a time. The filters or windows used for the binning operation are typically of length ${O}(K \log N)$. As a result, the sample complexity is typically ${O}(K \log N)$ or more, with potentially large big-Oh constants as demonstrated in \cite{iwen2007empirical}. Then, \cite{ghazi2013sample} further improved the $2$-D DFT algorithm for the special case of $K=\sqrt{N}$, which reduces the sample complexity to ${O}(K)$ and the computational complexity to ${O}(K\log K)$, albeit with a constant failure probability that does not vanish as the signal dimension $N$ grows. On this front, the deterministic algorithm in \cite{iwen2010combinatorial} is shown to guarantee zero errors but with complexities of ${O}(\mathrm{poly}(K,\log N))$. More recently, \cite{cheraghchi2015nearly} develops a deterministic algorithm for computing a sparse WHT in time ${O}(K^{1+\epsilon} \log^{{O}(1)} N)$ with an arbitrary constant $\epsilon>0$. 

One of the interesting recent advances in computing sparse DFTs is in the breaking of the ``$N$-barrier'', which means that the complexities no longer depend on the input dimension $N$. In particular, the FFAST algorithm \cite{pawar2013computing} uses only ${O}(K)$ samples and ${O}(K\log K)$ operations for any sparsity regime $K={O}(N^{\delta})$ and ${\delta}\in(0,1)$. Similar to the spirit of compressed sensing in linearly combining sparse components (i.e., DFT coefficients), the FFAST algorithm judiciously chooses subsampling patterns to create spectral aliasing patterns to make them look like ``good'' (i.e., near-capacity achieving) {\it erasure-correcting codes} \cite{luby2001efficient, richardson2001capacity}. The key insight is that we can effectively transform the sparse DFT computation problem into that of sparse-graph decoding to reconstruct the original ``message'' (i.e., sparse spectrum), which allows to use a simple peeling-based decoder with very low complexity.  The success of the FFAST algorithm depends on the {\it single-ton test} to pinpoint frequency bins containing only one ``erasure event'' (unknown non-zero DFT coefficient).  Given such a single-ton bin, the value and location of the coefficient can be obtained and then removed from other bins.  This procedure iterates until no more single-ton bins are found. In the same spirit of \cite{pawar2013computing}, the SparseFHT algorithm in \cite{scheibler2013fast} elegantly computes a $K$-sparse WHT of $\mathbf{x}$ using ${O}(K \log N)$ samples and ${O}(K\log^2 N)$ operations.


\section{Main Idea: A Simple Example}\label{sec:simple_example}
Since the sparsity is much smaller than the input dimension $K\ll N$, it is desirable if we can compute the WHT using very few samples  $M\ll N$ without reading the {\it entire} signal. The most straightforward way to reduce the number of samples to process is to {\it subsample}. However, from a reconstruction perspective, it is generally disastrous to subsample since it creates aliasing in the spectral domain that mixes the WHT coefficients $X[\mathbf{k}]$. 

The key idea of our SPRIGHT framework is to embrace (rather than avoid) the aliasing pattern as a form of ``alias code'', which is induced by the subsampling patterns guided by coding-theoretic designs, and more specifically, sparse-graph codes such as Low Density Parity Check (LDPC) codes. Then, our SPRIGHT framework exploits the aliasing pattern (alias code) to reconstruct the sparse Walsh spectrum in the presence of noise, by uncovering the sparse coefficients one-by-one iteratively in the spirit of decoding over noisy channels. While the design philosophy is similar to the FFAST algorithm in \cite{pawar2013computing} and the SparseFHT algorithm in \cite{scheibler2013fast}, our framework non-trivially generalizes this to the noisy scenario by robustifying the ``alias code'' for noisy decoding. Interestingly, we show that our framework can maintain the same scaling in both sample complexity and computational complexity as that in the noiseless case \cite{scheibler2013fast}. For completeness, we will repeat the noiseless design in the sequel, but using our setup and terminology.

\subsection{Subsampling and Aliasing}\label{sec:substream}
Our observation model is based on using multiple {\it basic observation sets} formed by {\it randomized subsampling} and {\it tiny-sized WHTs}, where each set contains $B=2^b$ (for some $b>0$) samples obtained as:
\begin{itemize}
	\item {\it Subsampling}: consider some integer $b<n$, the subsampling of noisy signal $u[\mathbf{m}]$ in \eqref{time_domain_samples} is performed by isolating a subset of $B=2^b$ samples indexed by $\mathbf{m}=\mathbf{M}\bdsb{\ell}+\mathbf{d}$ for $\bdsb{\ell}\in\GF^b$, where $\mathbf{M} \in \GF^{n\times b}$ is some binary matrix and $\mathbf{d}\in\GF^n$ is some random binary vector. In other words, after generating $\mathbf{M}\in \GF^{n\times b}$ and $\mathbf{d}\in\GF^n$, the subset of samples are selected by running the $b$-tuple $\bdsb{\ell}$ over $\GF^b$. 	
	
	\item {\it $B$-point WHT}: a much smaller $B$-point WHT is performed over the samples $u[\mathbf{M}\bdsb{\ell}+\mathbf{d}]$ for $\bdsb{\ell}\in\GF^b$. The subsampled signal has an aliased WHT spectrum readily obtained by a $B$-point WHT
\begin{align}
	{U}[\bdsb{j}]  
	&= \sum_{\bdsb{\ell}\in\GF^b} u[\mathbf{M}\bdsb{\ell}+\mathbf{d}](-1)^{\ip{\bdsb{j}}{\bdsb{\ell}}},\quad \bdsb{j}\in\GF^b.
\end{align} 
\end{itemize}

\begin{example}\label{example:subsampling}
We consider an example with $n=4$ and sparsity $K=B=2^b=4$ (i.e. $b=2$). For simplicity, we construct $2$ sets of observations using
\begin{align}
	\mathbf{M}_1 &= [\bdsb{0}_{2\times 2}^T, \mathbf{I}_{2\times 2}^T]^T,
	\quad
	\mathbf{M}_2 = [\mathbf{I}_{2\times 2}^T, \bdsb{0}_{2\times 2}^T]^T.
\end{align}
We call each set of observations using a different subsampling pattern a {\it subsampling group}. With these patterns, we access the following samples in each group for $\bdsb{\ell}=[\ell_1,\ell_2]^T\in\GF^2$
\begin{align*}
	u[\mathbf{M}_1\bdsb{\ell}] 
	= 
	u[0~0~\ell_1~\ell_2]
	\Longrightarrow
	\begin{cases}
		u[0000]\\
		u[0001]\\
		u[0010]\\
		u[0011]
	\end{cases},\quad
	u[\mathbf{M}_2\bdsb{\ell}]
	=
	u[\ell_1~\ell_2~0~0]
	\Longrightarrow
	\begin{cases}
		u[0000]\\
		u[0100]\\
		u[1000]\\
		u[1100]
	\end{cases}.
\end{align*}
After performing a $4$-point WHT on each set of these samples, we have $2$ sets of noisy observations:
\begin{align*}
		{U}_1[00] &= {X}[0000]+{X}[0100]+{X}[1000]+{X}[1100] \quad + \quad W_1[00]\\
		{U}_1[01] &= {X}[0001]+{X}[0101]+{X}[1001]+{X}[1101] \quad + \quad W_1[01]\\
		{U}_1[10] &= {X}[0010]+{X}[0110]+{X}[1010]+{X}[1110] \quad + \quad W_1[10]\\
		{U}_1[11] &= {X}[0011]+{X}[0111]+{X}[1011]+{X}[1111] \quad + \quad W_1[11]\\
		{U}_2[00] &= {X}[0000]+{X}[0001]+{X}[0010]+{X}[0011] \quad + \quad W_2[00]\\
		{U}_2[01] &= {X}[0100]+{X}[0101]+{X}[0110]+{X}[0111] \quad + \quad W_2[01]\\
		{U}_2[10] &= {X}[1000]+{X}[1001]+{X}[1010]+{X}[1011] \quad + \quad W_2[10]\\
		{U}_2[11] &= {X}[1100]+{X}[1101]+{X}[1110]+{X}[1111] \quad + \quad W_2[11].	\end{align*}
\end{example}

\subsection{Computing Sparse WHT as Sparse-Graph Decoding}\label{sec:simple_example_peeling}
In the presence of noise, the coefficients $X[\mathbf{k}]$ should be intuitively obtained as the ``least-squares'' solution over the $2$ sets of $B$ observations in Example \ref{example:subsampling}. However, the linear regression problem is underdetermined as we are given $8$ equations with $16$ unknowns. Fortunately, the coefficients are sparse, and this helps significantly. For simplicity, suppose that the $4$ non-zero coefficients are ${X}[0100]=2, {X}[0110]=4, {X}[1010]=1$ and ${X}[1111]=1$. Now we have $8$ equations with $4$ unknowns (non-zero), but we do not know which unknowns are non-zero. Then, we have
\begin{align*}
		{U}_1[00] &= {X}[0100]+W_1[00],~~~~~~~~~~~~~~~~~~~~~\, {U}_2[00] = W_2[00]\\
		{U}_1[01] &= W_1[01],  ~~~~~~~~~~~~~~~~~~~~~~~~~~~~~~~~~~~~~~~~\, {U}_2[01] = {X}[0100] + {X}[0110] + W_2[01]\\
		{U}_1[10] &= {X}[0110]+{X}[1010] + W_1[10], ~~~{U}_2[10] = {X}[1010] + W_2[10]\\
		{U}_1[11] &= {X}[1111] + W_1[11], ~~~~~~~~~~~~~~~~~~~~~\, {U}_2[11] = {X}[1111]+W_2[11].
\end{align*}

Now this problem seems quite a bit less daunting since the number of equations is more than the number of unknowns. The challenging part, however, is that {\it we do not know in advance which coefficients $X[\mathbf{k}]$ exist} in the equation since the sparse coefficients are randomly chosen over $\mathbf{k}\in\GF^n$. Here, we illustrate the principle of our recovery algorithm through the same simple example by showing that the recovery is an instance of sparse-graph decoding with the help of an ``oracle'' (described later). Then in the next subsection, we will introduce how to get rid of the oracle.

\subsubsection{Oracle-based Sparse-Graph Decoding}\label{sec:peeling_based}

The relationship between the observations $\{{U}_i[\bdsb{j}]\}_{i=1,2}^{\bdsb{j}\in\GF^b}$ and the unknown coefficients $X[\mathbf{k}]$ can be shown as a bipartite graph in \figref{fig:example_bipartite}, where the left nodes (unknown coefficients $X[\mathbf{k}]$) and right nodes (observations $\{{U}_i[\bdsb{j}]\}_{i=1,2}^{\bdsb{j}\in\GF^b}$) are referred to as the {\it variable nodes} and {\it check nodes} respectively in the language of sparse-graph codes.  Depending on the connectivity of the sparse bipartite graph, we categorize the observations into the following types:
\begin{enumerate}
	\item {\it Zero-ton}: a check node is a zero-ton if it has no non-zero coefficients (e.g., the color {\it blue} in \figref{fig:example_bipartite}). 
	\item {\it Single-ton}: a check node is a single-ton if it involves only one non-zero coefficient (e.g., the color {\it yellow} in \figref{fig:example_bipartite}).  Specifically, we refer to the index $\mathbf{k}$ and its associated value $X[\mathbf{k}]$ as the {\it index-value pair} $(\mathbf{k},X[\mathbf{k}])$.
	\item {\it Multi-ton}: a check node is a multi-ton if it contains more than one non-zero coefficient (e.g., the color {\it red} in \figref{fig:example_bipartite}). 
\end{enumerate}

\begin{wrapfigure}{r}{0.4\textwidth}
\centering
\vspace{-0.3cm}
\includegraphics[width=1\linewidth]{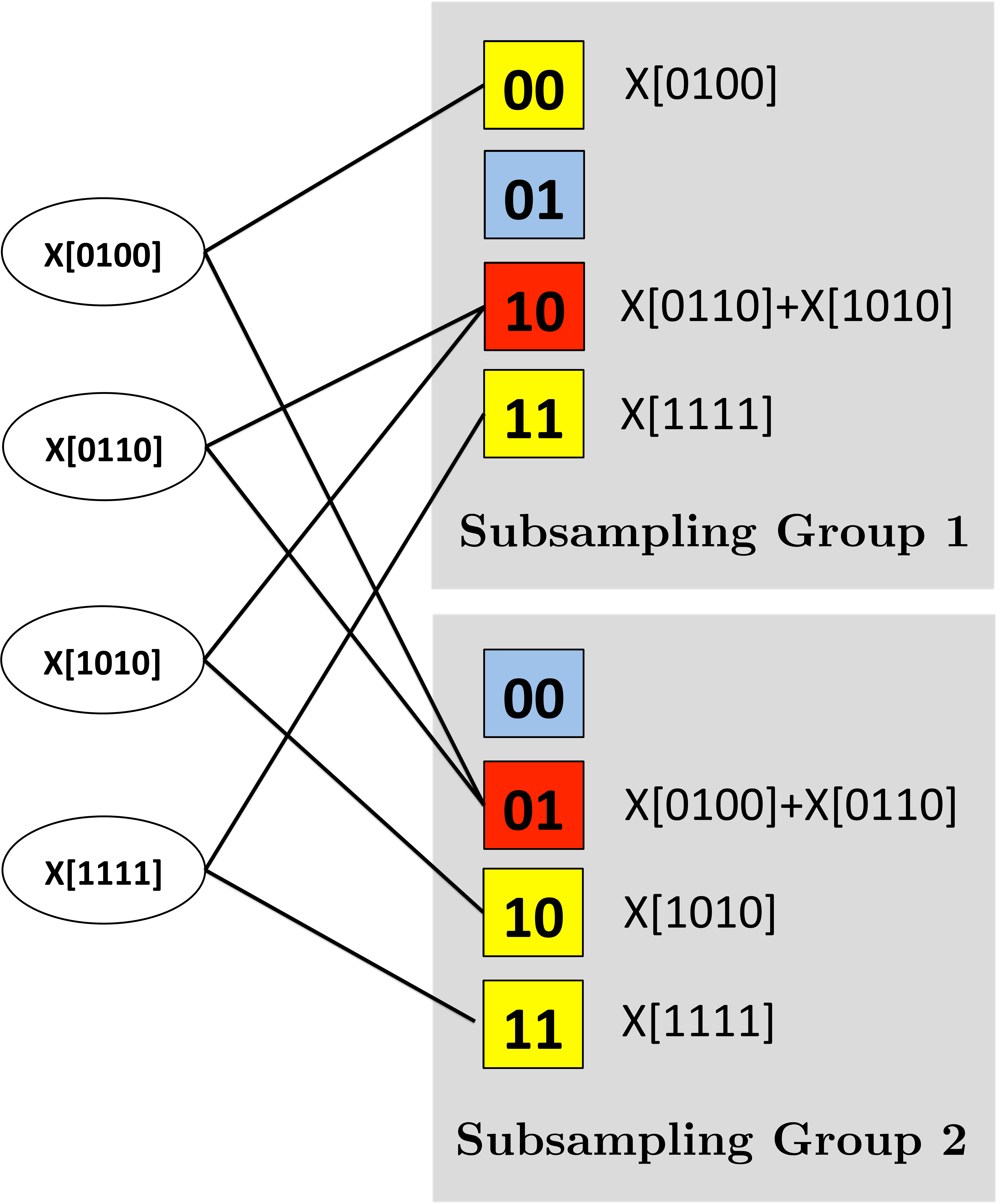}
\caption{Example of a sparse bipartite graph consisting of $4$ (non-zero) left nodes (variable nodes)  connected to the $2$ subsampling groups as a result of the sub-sampling-based randomized hashing in each group. Blue color represents ``zero-ton'', yellow color represents ``single-ton'' and red color represents ``multi-ton''.}\label{fig:example_bipartite}
\vspace{-1cm}
\end{wrapfigure}

To illustrate our reconstruction algorithm, we assume that there exists an ``oracle''  that informs the decoder exactly which check nodes are {\it single-tons}. Furthermore, the oracle further provides the index-value pair for that single-ton. In this example, the oracle informs the decoder that check nodes labeled ${U}_1[00]$, ${U}_1[11]$, ${U}_2[10]$ and ${U}_2[11]$ are single-tons with index-value pairs $(0100, {X}[0100])$, $(1111,{X}[1111])$, $(1010,{X}[1010])$ and $(1111,{X}[1111])$ respectively. Then the decoder can subtract their contributions from other check nodes, forming new single-tons. Therefore generally speaking, with the oracle information, the peeling decoder repeats the following steps:
\begin{itemize}
	\item[\bf Step (1)] select all the edges in the bipartite graph with right degree $1$ (identify single-ton bins);
	\vspace{-0.1cm}		
	\item[\bf Step (2)] remove (peel off) these edges as well as the corresponding pair of variable and check nodes connected to these edges.
	\vspace{-0.1cm}		
	\item[\bf Step (3)] remove (peel off) all other edges connected to the variable nodes that have been  removed in {\it Step (2)}. 
	\vspace{-0.1cm}	
	\item[\bf Step (4)] subtract the contributions of the variable nodes from the check nodes whose edges have been removed in {\it Step (3)}.
	\vspace{-0.1cm}
\end{itemize}
Finally, decoding is successful if all the edges are removed from the graph together with all the unknown coefficients $X[\mathbf{k}]$ such that all the WHT coefficients are decoded.

\subsubsection{Getting Rid of the Oracle : Bin Detection}\label{simple_example_bin_detection}

Since the oracle information is critical in the peeling process, we proceed with our example and explain briefly how to obtain such information without an oracle. We call this procedure ``bin detection''. For simplicity, we illustrate the design where the samples are noise-free. To obtain the oracle information, we exploit the diversity of using different offsets. For instance, in group $1$, we use the subsampling matrix $\mathbf{M}_1$ and the following set of offsets 
\begin{align*}	
	\mathbf{d}_{1,0} &= [0,0,0,0]^T, \quad
	\mathbf{d}_{1,1} = [1,0,0,0]^T, \quad
	\mathbf{d}_{1,2} = [0,1,0,0]^T,\quad
	\mathbf{d}_{1,3} = [0,0,1,0]^T,\quad
	\mathbf{d}_{1,4} = [0,0,0,1]^T.	
\end{align*}
In this way, using the subsampling pattern $\mathbf{M}_1$ and the offsets above, each check node is now assigned a $5$-dimensional vector $\mathbf{U}_1[\bdsb{j}]=[{U}_{1,0}[\bdsb{j}], {U}_{1,1}[\bdsb{j}], {U}_{1,2}[\bdsb{j}], {U}_{1,3}[\bdsb{j}], {U}_{1,4}[\bdsb{j}]]^T$, where ${U}_{1,p}[\bdsb{j}]$ is associated with the $p$-th offset $\mathbf{d}_{1,p}$ for $p=0,1,\cdots,4$. We call each vector of observations $\mathbf{U}_c[\bdsb{j}]$ in one group the {\it bin observation vector} $\bdsb{j}$. For example, the bin observation vectors for group $1$ are obtained as $\mathbf{U}_1[00]=\mathbf{0}$ and
\begin{align*}
	\mathbf{U}_1[01]&=
	{X}[0100]
	\begin{bmatrix}
	1\\
	(-1)^{0}\\
	(-1)^{1}\\
	(-1)^{0}\\	
	(-1)^{0}
	\end{bmatrix},
	~
	\mathbf{U}_1[10] =
	 {X}[0110]
	\begin{bmatrix}
	1\\
	(-1)^{0}\\
	(-1)^{1}\\
	(-1)^{1}\\	
	(-1)^{0}
	\end{bmatrix}
	 +{X}[1010]
	\begin{bmatrix}
	1\\
	(-1)^{1}\\
	(-1)^{0}\\
	(-1)^{1}\\	
	(-1)^{0}
	\end{bmatrix},
	~
	\mathbf{U}_1[11]=
	{X}[1111]
	\begin{bmatrix}
	1\\
	(-1)^{1}\\
	(-1)^{1}\\
	(-1)^{1}\\	
	(-1)^{1}
	\end{bmatrix}.		
\end{align*}
Now with these bin observations, one can effectively determine if a check node is a zero-ton, a single-ton or a multi-ton. We go through some examples:
\begin{itemize}
	\vspace{-0.1cm}	
	\item {\it zero-ton bin}: consider the zero-ton check node $\mathbf{U}_1[00]$. A zero-ton check node can be identified easily since the measurements are all zero $\mathbf{U}_1[00] =\mathbf{0}$.
	\vspace{-0.1cm}		
	\item {\it multi-ton bin}: consider the multi-ton check node $\mathbf{U}_1[10]$. A multi-ton can be easily identified since the magnitudes are not identical $|{U}_{1,0}[10]|\neq |{U}_{1,1}[10]|\neq|{U}_{1,2}[10]|\neq |{U}_{1,3}[10]|\neq |{U}_{1,4}[10]|$ or namely, the following ratio condition is not met:
	\begin{align}
		\frac{U_{1,p}[10]}{U_{1,0}[10]} \neq \pm 1,\quad p =1, 2, 3, 4.
	\end{align}	
	Therefore, if the ratio test does not produce $\pm 1$ or the magnitudes are not identical, we can conclude that this check node is a multi-ton.
	\vspace{-0.1cm}	
	\item {\it single-ton bin}: consider the single-ton check node $\mathbf{U}_1[01]$. The underlying node is a single-ton if $|{U}_{1,0}[01]|=|{U}_{1,1}[01]|=|{U}_{1,2}[01]|=|{U}_{1,3}[01]|=|{U}_{1,4}[01]|$, or namely the ratio test produces all $\pm 1$. Then, the index $\mathbf{k}=[k[1],k[2],k[3],k[4]]^T$ of a single-ton can be obtained by a simple ratio test
	\begin{align*}
		&
		\begin{cases}
		(-1)^{\widehat{k}[1]} &= \displaystyle \frac{{U}_{1,1}[01]}{{U}_{1,0}[01]} = (-1)^{0}\\
		(-1)^{\widehat{k}[2]} &= \displaystyle \frac{{U}_{1,2}[01]}{{U}_{1,0}[01]} = (-1)^{1}\\
		(-1)^{\widehat{k}[3]} &= \displaystyle \frac{{U}_{1,3}[01]}{{U}_{1,0}[01]} = (-1)^{0}\\
		(-1)^{\widehat{k}[4]} &= \displaystyle \frac{{U}_{1,4}[01]}{{U}_{1,0}[01]} = (-1)^{0}			
		\end{cases}
		\Longrightarrow
		\begin{cases}
		\widehat{k}[1] = 0\\
		\widehat{k}[2] = 1\\
		\widehat{k}[3] = 0\\
		\widehat{k}[4] = 0\\
		\widehat{{X}}[\widehat{\mathbf{k}}] = {U}_{1,0}[01]
		\end{cases}
	\end{align*}
	Both the ratio test and the magnitude constraints are easy to verify for all check nodes such that the index-value pair is obtained for peeling.
\end{itemize}

This simple example shows how the problem of recovering the $K$-sparse coefficients $X[\mathbf{k}]$ can be cast as an instance of oracle-based peeling decoding by proper subsampling-induced {\it sparse bipartite graphs} in the dual domain. It further shows that the freedom in choosing offsets $\mathbf{d}$ gets rid of the oracle by {\it bin detection}. However, this simple example will not work in the presence of noise. The key idea of our design is that by carefully choosing the offsets $\mathbf{d}$ and subsampling patterns $\mathbf{M}$ through a sparse-graph coding lens, we can induce ``peeling-friendly'' sparse bipartite graphs that lead to fast recovery of the unknown WHT coefficients even in the presence of noise, as illustrated next.


\section{The SPRIGHT Framework: General Architecture and Algorithm}\label{sec:randomized_hashing}
In this section, we generalize the simple example and present the our proposed SPRIGHT framework. Our framework consists of an {\it observation generator} and a {\it reconstruction engine}, as shown in \figref{fig:schematic}.

\begin{figure}[h]
	\centering
	\includegraphics[width=0.85\linewidth]{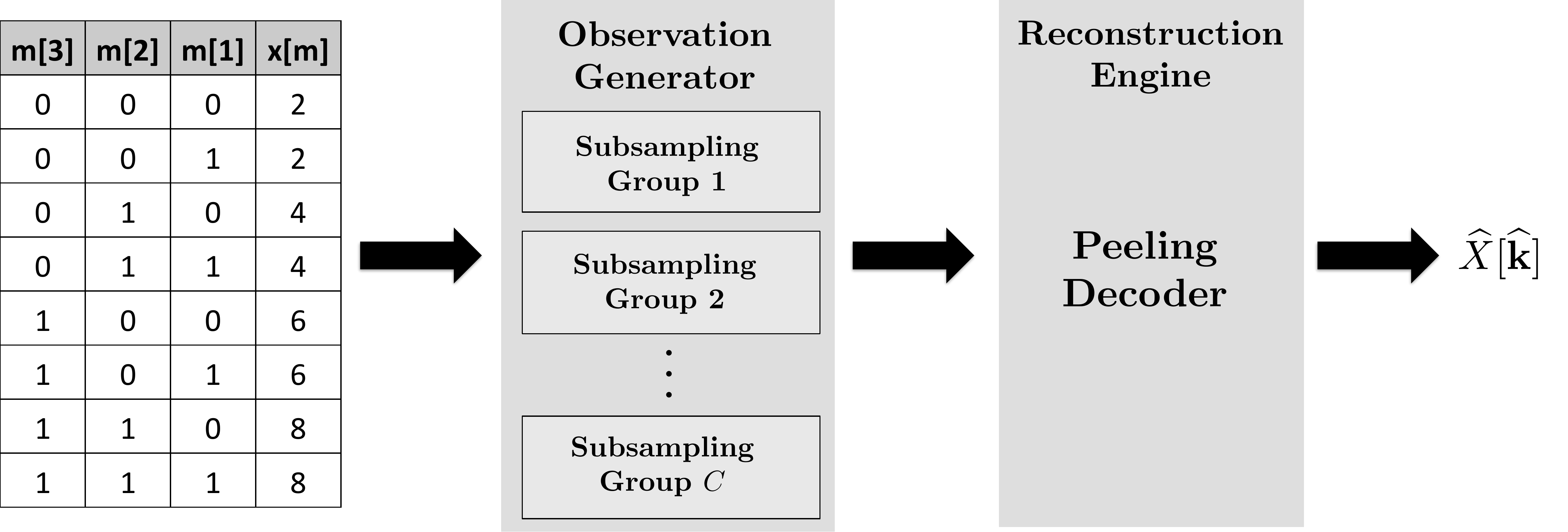}
	\caption{The conceptual diagram of our learning framework with $C$ subsampling groups, where each group generates $P$ basic query sets, each of size $B=2^b$.}\label{fig:schematic}
\end{figure}

\subsection{Observation Generator: Subsampling and Aliasing}\label{sec:frontend-architecture}
In our SPRIGHT framework, the observations are obtained from $C$ {\it subsampling groups}, where each group generates $P$ {\it basic observation sets} of size $B=2^b$. Each group uses a different matrix $\mathbf{M}_c\in\GF^{n\times b}$ and a different set of $P$ offsets $\mathbf{d}_{c,p}\in \mathbb{F}_2^n$ for $p\in[P]$, as summarized in \algref{alg:subsampling}.

%
\begin{algorithm}[H] 
  \caption{Subsampling and WHT}\label{alg:subsampling}
  \begin{algorithmic}
    \STATE ${\tt Input:}$ $u[\mathbf{m}]$ for $\mathbf{m} \in \GF^n$ with $N=2^n$;  
    \STATE ${\tt Set}:$ the number of subsampling groups $C$; observation set size $B$ and number of observation sets $P$.
    \STATE ${\tt Generate}:$ offsets $\mathbf{d}_{c,p}$ for $p \in [P]$; subsampling matrix $\mathbf{M}_c\in\GF^{n\times b}$ for some $b>0$
    \FOR{ $c=1$ \TO $C$}
    	\FOR{$p=1$ \TO $P$}
    	\STATE 
	$U_{c,p}[\bdsb{j}] = \sqrt{\frac{N}{B}} \sum_{\bdsb{\ell}\in\GF^b} u[\mathbf{M}_c \bdsb{\ell}+\mathbf{d}_{c,p}](-1)^{\ip{\bdsb{j}}{\bdsb{\ell}}}$.
	\ENDFOR		
    \ENDFOR
  \end{algorithmic}
\end{algorithm}
%

\begin{prop}[\bf Basic Observation Model]\label{prop_hashing_obs}
The $B$-point WHT coefficients indexed by $\bdsb{j}\in\GF^b$ can be written as:
\begin{align}\label{U_cp}
	{U}_{c,p}[\bdsb{j}] = \sum_{\mathbf{M}_c^T\mathbf{k} = \bdsb{j}} X[\mathbf{k}](-1)^{\ip{\mathbf{d}_{c,p}}{\mathbf{k}}}+{W}_{c,p}[\bdsb{j}],\quad p \in [P], 
\end{align}
where ${W}_{c,p}[\bdsb{j}] = \sum_{\mathbf{M}_c^T\mathbf{k} = \bdsb{j}} {W}[\mathbf{k}](-1)^{\ip{\mathbf{d}_{c,p}}{\mathbf{k}}}$ and ${W}[\mathbf{k}]$ is the WHT coefficient of noise samples $w[\mathbf{m}]$. 
\end{prop}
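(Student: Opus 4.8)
The plan is to prove the identity by direct substitution of the inverse WHT into the definition of $U_{c,p}[\bdsb{j}]$ in \algref{alg:subsampling}, and then to reorganize the resulting double sum so that a character-orthogonality relation over $\GF^b$ collapses the frequency sum onto the aliasing coset $\{\mathbf{k}:\mathbf{M}_c^T\mathbf{k}=\bdsb{j}\}$. Since the WHT is a real orthonormal transform, both the signal and the noise admit a Walsh-domain representation, so the signal and noise terms can be manipulated in exactly the same way.

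First I would write each subsampled noisy sample in the transform domain. Using \eqref{time_domain_samples}, the signal part of $u[\mathbf{M}_c\bdsb{\ell}+\mathbf{d}_{c,p}]$ is $\frac{1}{\sqrt{N}}\sum_{\mathbf{k}\in\GF^n}(-1)^{\ip{\mathbf{M}_c\bdsb{\ell}+\mathbf{d}_{c,p}}{\mathbf{k}}}X[\mathbf{k}]$, and I expand the noise sample identically as $w[\mathbf{m}]=\frac{1}{\sqrt{N}}\sum_{\mathbf{k}\in\GF^n}(-1)^{\ip{\mathbf{m}}{\mathbf{k}}}W[\mathbf{k}]$, where $W[\mathbf{k}]$ is the WHT coefficient of the noise. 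Substituting into the definition of $U_{c,p}[\bdsb{j}]$ and interchanging the two finite sums (over $\bdsb{\ell}\in\GF^b$ and over $\mathbf{k}\in\GF^n$, which is permissible since both are finite), the signal and noise contributions become structurally identical, so it suffices to carry out the reduction for one of them.

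The key step is the algebraic simplification of the exponent. Using the $\GF$-bilinearity of the inner product together with the transpose identity $\ip{\mathbf{M}_c\bdsb{\ell}}{\mathbf{k}}=\bdsb{\ell}^T\mathbf{M}_c^T\mathbf{k}=\ip{\bdsb{\ell}}{\mathbf{M}_c^T\mathbf{k}}$, I split $\ip{\mathbf{M}_c\bdsb{\ell}+\mathbf{d}_{c,p}}{\mathbf{k}}=\ip{\bdsb{\ell}}{\mathbf{M}_c^T\mathbf{k}}+\ip{\mathbf{d}_{c,p}}{\mathbf{k}}$, which lets me pull the offset-dependent factor $(-1)^{\ip{\mathbf{d}_{c,p}}{\mathbf{k}}}$ outside the $\bdsb{\ell}$-sum. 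Combining the remaining exponent with the WHT kernel factor $(-1)^{\ip{\bdsb{j}}{\bdsb{\ell}}}$, the inner sum over $\bdsb{\ell}$ becomes the character sum $\sum_{\bdsb{\ell}\in\GF^b}(-1)^{\ip{\bdsb{\ell}}{\mathbf{M}_c^T\mathbf{k}+\bdsb{j}}}$, which by orthogonality of the Walsh characters equals $B$ when $\mathbf{M}_c^T\mathbf{k}=\bdsb{j}$ over $\GF^b$ and $0$ otherwise. This is precisely the mechanism that produces the aliasing: only frequencies $\mathbf{k}$ lying in the coset $\{\mathbf{k}:\mathbf{M}_c^T\mathbf{k}=\bdsb{j}\}$ survive the sum.

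Finally I would collect the normalization constants. The outer factor from \algref{alg:subsampling}, the $1/\sqrt{N}$ from the inverse WHT, and the factor $B$ coming from the character sum combine into a single constant multiplying the surviving sum $\sum_{\mathbf{M}_c^T\mathbf{k}=\bdsb{j}}X[\mathbf{k}](-1)^{\ip{\mathbf{d}_{c,p}}{\mathbf{k}}}$, which is exactly the signal term of \eqref{U_cp}; repeating the identical computation on the noise yields $W_{c,p}[\bdsb{j}]=\sum_{\mathbf{M}_c^T\mathbf{k}=\bdsb{j}}W[\mathbf{k}](-1)^{\ip{\mathbf{d}_{c,p}}{\mathbf{k}}}$. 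I do not expect a genuine obstacle here: the only points requiring care are the transpose identity over $\GF$ (so that $\mathbf{M}_c$ correctly turns into $\mathbf{M}_c^T$ in the surviving coset condition) and the bookkeeping of the normalization so that the net coefficient in front of each $X[\mathbf{k}]$ is unity. Everything else is the standard Walsh character-orthogonality argument.
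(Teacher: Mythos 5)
Your proof is correct and is exactly the standard character-orthogonality computation the paper has in mind: the paper itself states Proposition \ref{prop_hashing_obs} without proof (deferring, as in the remark after Proposition \ref{prop_meas.bin.model}, to WHT properties as in \cite{scheibler2013fast}), and your substitution of the inverse WHT, the transpose identity $\ip{\mathbf{M}_c\bdsb{\ell}}{\mathbf{k}}=\ip{\bdsb{\ell}}{\mathbf{M}_c^T\mathbf{k}}$, and the collapse of $\sum_{\bdsb{\ell}\in\GF^b}(-1)^{\ip{\bdsb{\ell}}{\mathbf{M}_c^T\mathbf{k}\oplus\bdsb{j}}}$ onto the coset supply precisely the omitted argument, for signal and noise alike. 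One caveat on your final bookkeeping: if the display in Algorithm \ref{alg:subsampling} is taken literally, the net prefactor is $\sqrt{N/B}\cdot\frac{1}{\sqrt{N}}\cdot B=\sqrt{B}$ rather than unity, so the unit coefficient in \eqref{U_cp} requires reading that sum as the orthonormal $B$-point WHT (with an implicit $1/\sqrt{B}$, consistent with the paper's $1/\sqrt{N}$ convention for the $N$-point WHT) --- a normalization slip in the paper's display, not in your argument, and harmless since signal and noise scale identically.
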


Clearly, the $\bdsb{j}$-th WHT coefficient ${U}_{c,p}[\bdsb{j}]$ in each observation set is an aliased version (hash output) of the Walsh spectral coefficient $X[\mathbf{k}]$ under the hash function $\mathcal{H}_c : \GF^n \rightarrow \GF^b$ in the $c$-th group
\begin{align}\label{eq:hash_function}
	\bdsb{j} = \mathcal{H}_c(\mathbf{k}) = \mathbf{M}_c^T\mathbf{k},\quad c\in[C].
\end{align}
It can be observed that the aliasing pattern (hash function) is invariant with respect to the offsets $\mathbf{d}_{c,p}$ used in subsampling. Similar to the {\it bin observation vector} in the simple example from Section \ref{simple_example_bin_detection}, we can regroup the observations ${U}_{c,p}[\bdsb{j}]$ according to the hash $\mathcal{H}_c(\bdsb{j})$
\begin{align}
	\mathbf{U}_c[\bdsb{j}]\triangleq [\cdots, {U}_{c,p}[\bdsb{j}],\cdots]^T, 
\end{align}
by stacking the $\bdsb{j}$-th WHT coefficient associated with all the offsets across the $P$ observation sets in a vector. 

\begin{prop}[\bf Bin Observation Model]\label{prop_meas.bin.model}
Given the offset matrix $\mathbf{D}_c\defn[\cdots;\mathbf{d}_{c,p};\cdots]\in\GF^{P\times n}$,  the $\bdsb{j}$-th bin observation vector in the $c$-th group can be written as
\begin{align}\label{meas.bin.model}
	\mathbf{U}_c[\bdsb{j}]
	=
	\sum_{\mathbf{k}:~\mathbf{M}_c^T\mathbf{k}=\bdsb{j}}
	X[\mathbf{k}](-1)^{\mathbf{D}_c\mathbf{k}}
	+
	\mathbf{W}_c[\bdsb{j}], \quad \bdsb{j}\in\GF^b,~c \in [C],
\end{align}
where $(-1)^{(\cdot)}$ is the element-wise exponentiation operator and $\mathbf{W}_c[\bdsb{j}]=\sum_{\mathbf{M}_c^T\mathbf{k} = \bdsb{j}} {W}[\mathbf{k}](-1)^{\mathbf{D}_c\mathbf{k}}$ is the noise vector with $W[\mathbf{k}]$ being the WHT coefficient of the noise $w[\mathbf{m}]$.
\end{prop}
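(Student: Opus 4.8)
The plan is to recognize that the Bin Observation Model in \eqref{meas.bin.model} is nothing more than the stacked (vectorized) form of the Basic Observation Model in Proposition~\ref{prop_hashing_obs}, so that the entire argument reduces to collecting the $P$ scalar identities \eqref{U_cp} into a single vector identity and checking that the matrix--vector product $\mathbf{D}_c\mathbf{k}$ reproduces the correct per-offset exponents. First I would recall that, by construction, the bin observation vector is obtained by stacking the $B$-point WHT coefficients across the $P$ offsets, $\mathbf{U}_c[\bdsb{j}]=[\cdots,{U}_{c,p}[\bdsb{j}],\cdots]^T$, so that proving the claimed identity is equivalent to matching it coordinate-by-coordinate, i.e. verifying that the $p$-th entry of the right-hand side of \eqref{meas.bin.model} equals ${U}_{c,p}[\bdsb{j}]$ as given in \eqref{U_cp}.

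The key step is the bookkeeping on the exponent. Since $\mathbf{D}_c$ is the matrix whose $p$-th row is $\mathbf{d}_{c,p}^T$, the $p$-th coordinate of the $\GF$-vector $\mathbf{D}_c\mathbf{k}$ is exactly the $\GF$-inner product of that row with $\mathbf{k}$, namely $\ip{\mathbf{d}_{c,p}}{\mathbf{k}}$. Because $(-1)^{(\cdot)}$ is applied element-wise, the $p$-th coordinate of the vector $(-1)^{\mathbf{D}_c\mathbf{k}}$ is therefore $(-1)^{\ip{\mathbf{d}_{c,p}}{\mathbf{k}}}$; here it is worth noting that the exponent lives in $\GF$ while $(-1)^{(\cdot)}$ takes values in $\{+1,-1\}\subset\mathbb{R}$, which is well defined since the two values $(-1)^0=1$ and $(-1)^1=-1$ are consistent with the mod-$2$ reduction of the exponent. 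Taking the $p$-th coordinate of the signal sum $\sum_{\mathbf{k}:\,\mathbf{M}_c^T\mathbf{k}=\bdsb{j}}X[\mathbf{k}](-1)^{\mathbf{D}_c\mathbf{k}}$ thus yields $\sum_{\mathbf{k}:\,\mathbf{M}_c^T\mathbf{k}=\bdsb{j}}X[\mathbf{k}](-1)^{\ip{\mathbf{d}_{c,p}}{\mathbf{k}}}$, which is precisely the signal term of \eqref{U_cp}, and the identical computation applied to $\mathbf{W}_c[\bdsb{j}]$ reproduces ${W}_{c,p}[\bdsb{j}]$. Matching all $P$ coordinates then establishes \eqref{meas.bin.model}.

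For completeness one could alternatively bypass Proposition~\ref{prop_hashing_obs} and derive the per-offset relation \eqref{U_cp} directly from Algorithm~\ref{alg:subsampling}: substitute the inverse-WHT expansion \eqref{time_domain_samples} of $u[\mathbf{M}_c\bdsb{\ell}+\mathbf{d}_{c,p}]$ into the defining sum for ${U}_{c,p}[\bdsb{j}]$, exchange the order of the $\bdsb{\ell}$- and $\mathbf{k}$-summations, and use $\ip{\mathbf{M}_c\bdsb{\ell}}{\mathbf{k}}=\ip{\bdsb{\ell}}{\mathbf{M}_c^T\mathbf{k}}$ together with the aliasing identity $\sum_{\bdsb{\ell}\in\GF^b}(-1)^{\ip{\bdsb{\ell}}{\mathbf{M}_c^T\mathbf{k}+\bdsb{j}}}$, which equals $B$ when $\mathbf{M}_c^T\mathbf{k}=\bdsb{j}$ and $0$ otherwise. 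The surviving terms are exactly the $\mathbf{k}$ hashed to bin $\bdsb{j}$, the offset contributes the scalar phase $(-1)^{\ip{\mathbf{d}_{c,p}}{\mathbf{k}}}$ that factors out of the $\bdsb{\ell}$-sum, and the normalization prefactor of Algorithm~\ref{alg:subsampling} is absorbed so as to reproduce the form \eqref{U_cp}, cleanly separating into a signal part in $X[\mathbf{k}]$ and a noise part in $W[\mathbf{k}]$.

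This result carries essentially no analytic difficulty; it is a faithful notational restatement rather than a substantive computation. The only point demanding care, and hence the main ``obstacle,'' is the consistent handling of the two arithmetics at play: the hash constraint $\mathbf{M}_c^T\mathbf{k}=\bdsb{j}$ and the exponents $\mathbf{D}_c\mathbf{k}$ are computed over $\GF$, whereas the linear combination of the $X[\mathbf{k}]$ and the resulting observation vector live over $\mathbb{R}$. Making the row convention for $\mathbf{D}_c$ explicit and verifying that element-wise exponentiation commutes with extracting coordinates is precisely what guarantees that the compact vector notation \eqref{meas.bin.model} is equivalent to the $P$ scalar equations \eqref{U_cp}.
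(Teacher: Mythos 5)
Your proposal is correct and coincides with the argument the paper itself invokes: the paper omits the proof with the remark that it ``follows from WHT properties similar to \cite{scheibler2013fast},'' which is precisely your aliasing computation (substitute the inverse WHT of $u[\mathbf{M}_c\bdsb{\ell}+\mathbf{d}_{c,p}]$ into the $B$-point WHT, swap the $\bdsb{\ell}$- and $\mathbf{k}$-sums, and use $\sum_{\bdsb{\ell}\in\GF^b}(-1)^{\ip{\bdsb{\ell}}{\mathbf{M}_c^T\mathbf{k}\oplus\bdsb{j}}}=B\cdot 1\{\mathbf{M}_c^T\mathbf{k}=\bdsb{j}\}$), combined with the purely notational stacking of the $P$ scalar identities \eqref{U_cp} into the vector form via the row convention for $\mathbf{D}_c$. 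Your aside on normalization is apt: carrying Algorithm~\ref{alg:subsampling}'s prefactor $\sqrt{N/B}$ through the computation in fact leaves an extra factor $\sqrt{B}$ on the right-hand side of \eqref{U_cp}, a harmless constant rescaling that the paper also suppresses by convention, so it is no gap in your argument relative to the paper's.
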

\begin{proof}
The proof follows from WHT properties similar to that in \cite{scheibler2013fast}, and hence is omitted here.
\end{proof}
From a coding-theoretic perspective, the observation vectors $\mathbf{U}_c[\bdsb{j}]$ for $\bdsb{j}\in\GF^b$ across different groups $c\in[C]$ constitute the parity constraints of the coefficients $X[\mathbf{k}]$, where $X[\mathbf{k}]$ enters the $\bdsb{j}$-th parity of group $c$ if $\mathbf{M}_c^T\mathbf{k}=\bdsb{j}$.
It can be shown that if the set size $B=2^b$ and the number of subsampling groups $C$ are chosen properly, the bin observation vectors constitute parities of good error-correcting codes. Therefore, the coefficients can be uncovered iteratively in the spirit of peeling decoding (see Section \ref{sec:noiseless_backend}), similar to that in LDPC codes. The key idea is to to avoid excessive aliasing by maintaining $B$ on par with the sparsity ${O}(K)$ and imposing $P={O}(\log N)$ for denoising purposes in bin detection. To keep our discussions focused, we defer the specific constructions of the subsampling model in terms of $C$, $B=2^b$ and $\{\mathbf{M}_c\}_{c\in[C]}$ in Section \ref{sec:frontend-design}.

\subsection{Reconstruction Engine: Peeling Decoder}\label{sec:noiseless_backend}
The outputs from the subsampling operation are then used for reconstruction. As stated in Proposition \ref{prop_meas.bin.model}, each bin observation vector consists of linear combinations of the unknown WHT coefficients, which can be characterized by a {\it sparse bipartite graph} consisting of $K$ left nodes (variable nodes) and $CB$ right nodes (check nodes). 

\begin{defi}[\bf Random Graph Ensemble]\label{def:graph_ensemble}
For some {\it redundancy parameter} $\eta>0$ let $B=\eta K=2^b$ for some $b>0$. The graph ensemble $\mathcal{G}(K,\eta,C,\{\mathbf{M}_c\}_{c\in[C]})$ consists of left $C$-regular sparse bipartite graphs where 
\begin{itemize}
	\item there are $K$ left nodes (variable nodes), each labeled by a distinct element from the support $\mathbf{k}\in\mathcal{K}$;
	\item there are $B=2^b$ right nodes (check nodes) per group, each labeled by the bin index $\bdsb{j}\in\GF^b$ and assigned the bin observation vector $\mathbf{U}_c[\bdsb{j}]$;
	\item each left node $\mathbf{k}$ has degree $C$ and each edge is connected to a right node $\bdsb{j}$ in each group according to the hash function $\mathcal{H}_c : \GF^n \rightarrow \GF^b$ given in \eqref{eq:hash_function}. 
\end{itemize}	
\end{defi}

Based on our simple example in Section \ref{sec:simple_example_peeling}, the unknown WHT coefficients (i.e. variable nodes) can be recovered through a peeling decoder over the graph ensemble $\mathcal{G}(K,\eta,C,\{\mathbf{M}_c\}_{c\in[C]})$, as summarized in \algref{alg:peeling}. The key is to distinguish the observations $\mathbf{U}_c[\bdsb{j}]$ and identify single-ton bins for peeling.

In \algref{alg:peeling}, we denote the bin detection routine 
\begin{align}
	\psi : \mathbb{R}^P \rightarrow ({\tt type}, \widehat{\mathbf{k}},\widehat{X}[\widehat{\mathbf{k}}])
\end{align}
which determines the types of bin observations:
\begin{enumerate}
	\item $\mathbf{U}_c[\bdsb{j}]$ is a {\it zero-ton} if there does not exist $X[\mathbf{k}]\neq 0$ such that $\mathbf{M}_c^T\mathbf{k}=\bdsb{j}$, denoted by $\mathbf{U}_c[\bdsb{j}] \sim \mathcal{H}_{\textrm{Z}}$;
	\item $\mathbf{U}_c[\bdsb{j}]$ is a {\it single-ton} with the index-value pair $(\mathbf{k},X[\mathbf{k}])$ if there exists only one $X[\mathbf{k}]\neq 0$ such that $\mathbf{M}_c^T\mathbf{k}=\bdsb{j}$, denoted by $\mathbf{U}_c[\bdsb{j}] \sim \mathcal{H}_{\textrm{S}}(\mathbf{k},X[\mathbf{k}])$;
	\item $\mathbf{U}_c[\bdsb{j}]$ is a {\it multi-ton} if there exist more than one $X[\mathbf{k}]\neq 0$ such that $\mathbf{M}_c^T\mathbf{k}=\bdsb{j}$, denoted by $\mathbf{U}_c[\bdsb{j}] \sim \mathcal{H}_{\textrm{M}}$.
\end{enumerate}

\begin{figure}
\begin{minipage}{0.5\textwidth}
\begin{algorithm}[H] 
  \caption{Peeling Decoder}\label{alg:peeling}  
  \begin{algorithmic}
    \STATE ${\tt Input}:$ observation vectors $\mathbf{U}_c[\bdsb{j}]$ for $\bdsb{j} \in \GF^b$, $c\in[C]$;
    \STATE ${\tt Set}:$ the number of peeling iterations $I$;  
    \FOR{$i=1$ \TO $I$}
    	\FOR{$c=1$ \TO $C$}
		\FOR{$\bdsb{j}\in\GF^b$}
			\STATE $({\tt type},\widehat{\mathbf{k}},\widehat{{X}}[\widehat{\mathbf{k}}]) = \psi(\mathbf{U}_c[\bdsb{j}])$.		
			\IF{${\tt type}= \textrm{single-ton}$}
			\STATE ${\tt Peel~off}$ for all $p=[P], c'=[C]$
			\STATE ${\tt Locate~bin~index}$ $\bdsb{j}_{c'} = \mathbf{M}_{c'}^T \widehat{\mathbf{k}}$\\
			\STATE ${U}_{c',p}[\bdsb{j}_{c'}] \leftarrow {U}_{c',p}[\bdsb{j}_{c'}] - \widehat{{X}}[\widehat{\mathbf{k}}](-1)^{\ip{\mathbf{d}_{c,p}}{\widehat{\mathbf{k}}}}$.
			\ELSIF {${\tt type}\neq \textrm{single-ton}$}
				 \STATE continue to next $\bdsb{j}$.
			\ENDIF			
		\ENDFOR
	\ENDFOR
    \ENDFOR
  \end{algorithmic}
\end{algorithm}
\end{minipage}
~~~~
\begin{minipage}{0.5\textwidth}
\vspace{0.6cm}
\begin{center}
\includegraphics[scale=.27]{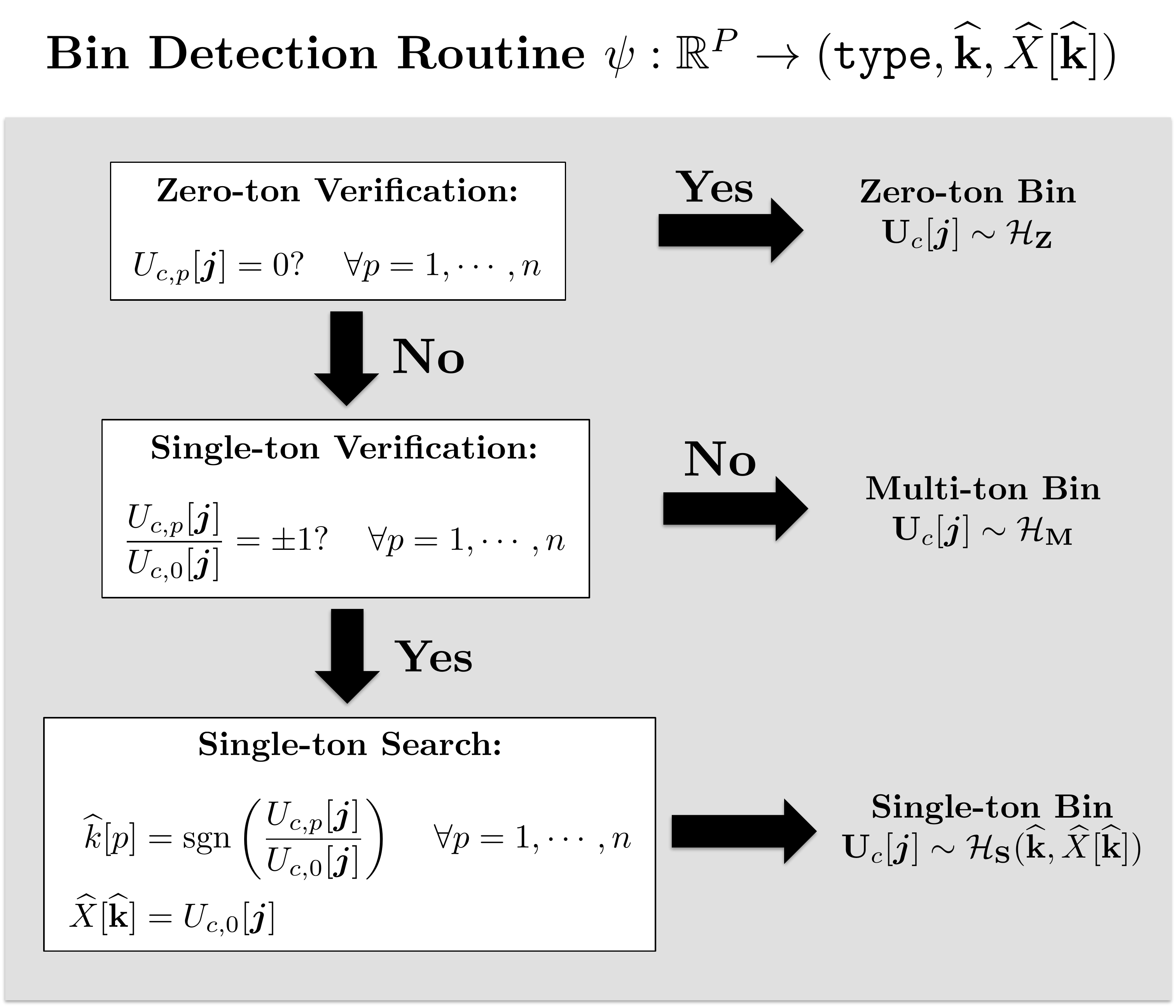}
\caption{The bin detection routine $\psi : \mathbb{R}^P \rightarrow ({\tt type}, \widehat{\mathbf{k}},\widehat{X}[\widehat{\mathbf{k}}])$ for the noiseless setting by choosing offsets $\mathbf{D}_c=\mathbf{I}_{n\times n}$.}
\label{fig:bindetectionroutine}
\end{center}
\end{minipage}
\end{figure}

We focus on the noiseless case here (generalization of the simple example), and then elaborate on {\it robust bin detection} in the presence of noise in Section \ref{sec:robust_bin_detection}. The noiseless bin detection requires $P=n$ offsets through the steps summarized in \figref{fig:bindetectionroutine}:
\begin{itemize}
	\vspace{-0.1cm}	
	\item $\mathbf{U}_c[\bdsb{j}] \sim \mathcal{H}_{\textrm{Z}}$ if $U_{c,p}[\bdsb{j}] = 0$ for all $p=1,\cdots,n$.
	\vspace{-0.1cm}	
	\item $\mathbf{U}_c[\bdsb{j}] \sim \mathcal{H}_{\textrm{M}}$ if $|{U}_{c,p}[\bdsb{j}]/{U}_{c,0}[\bdsb{j}]|\neq \pm 1$ for all $p=1,\cdots,n$.	
	\vspace{-0.1cm}	
	\item $\mathbf{U}_c[\bdsb{j}] \sim \mathcal{H}_{\textrm{S}}(\mathbf{k},X[\mathbf{k}])$ if the bin is neither a zero-ton nor a multi-ton. 
\end{itemize}
The index-value pair $(\mathbf{k},X[\mathbf{k}])$ of the single-ton is obtained as follows. Since each single-ton bin observation satisfies $U_{c,p}[\bdsb{j}] = X[\mathbf{k}](-1)^{\ip{\mathbf{d}_{c,p}}{\mathbf{k}}}$, 
the corresponding sign\footnote{Note that the definition of the sign function here is a bit different than usual, where $\sgn{x}=1$ if $x<0$ and $\sgn{x}=0$ if $x>0$.} satisfies 
\begin{align}
	\sgn{U_{c,p}[\bdsb{j}]} &= \ip{\mathbf{d}_{c,p}}{\mathbf{k}}\oplus\sgn{X[\mathbf{k}]},
\end{align}
where $\sgn{X[\mathbf{k}]}$ is the nuisance unknown sign. How do we get rid of such nuisance? This can be done by imposing a reference $\mathbf{d}_{c,0}=\mathbf{0}$ in addition to the offset matrix $\mathbf{D}_c\in\GF^{P\times n}$ such that $\sgn{U_{c,0}}=\sgn{X[\mathbf{k}]}$.

This gives us a set of linear equations with respect to the unknown index $\mathbf{k}$:
	\begin{align}\label{offset_as_code}
		\begin{bmatrix}
			\sgn{U_{c,1}[\bdsb{j}]}\oplus\sgn{U_{c,0}[\bdsb{j}]}\\			
			\sgn{U_{c,2}[\bdsb{j}]}\oplus\sgn{U_{c,0}[\bdsb{j}]}\\					
			\vdots\\
			\sgn{U_{c,n}[\bdsb{j}]}\oplus\sgn{U_{c,0}[\bdsb{j}]}
		\end{bmatrix}
		=		
		\mathbf{D}_c
		\mathbf{k}.
	\end{align}
	Clearly, if we choose the offsets in each group as $\mathbf{D}_c=\mathbf{I}_{n\times n}$, the unknown index $\mathbf{k}$ can be obtained directly from the signs of the observations. Finally, the value of the coefficient is obtained as $\widehat{{X}}[\widehat{\mathbf{k}}] = {U}_{c,0}[\bdsb{j}]$.

\subsection{Subsampling Design and Algorithm Guarantees}\label{sec:frontend-design}
With the general subsampling architecture given in Section \ref{sec:frontend-architecture}, we discuss the specific constructions of the graph ensemble $\mathcal{G}(K,\eta,C,\{\mathbf{M}_c\}_{c\in[C]})$ by choosing appropriately the observation set size $B=2^b$, the number of subsampling groups $C$, and the subsampling matrices $\{\mathbf{M}_c\}_{c\in[C]}$. We defer the discussion of how to choose offsets $\mathbf{d}_{c, p}$ to Section \ref{sec:robust_bin_detection} because its design is independent of the graph ensemble. 

\begin{wrapfigure}{r}{0.5\textwidth}
\vspace{-0.8cm}
\begin{center}
\includegraphics[scale=0.63]{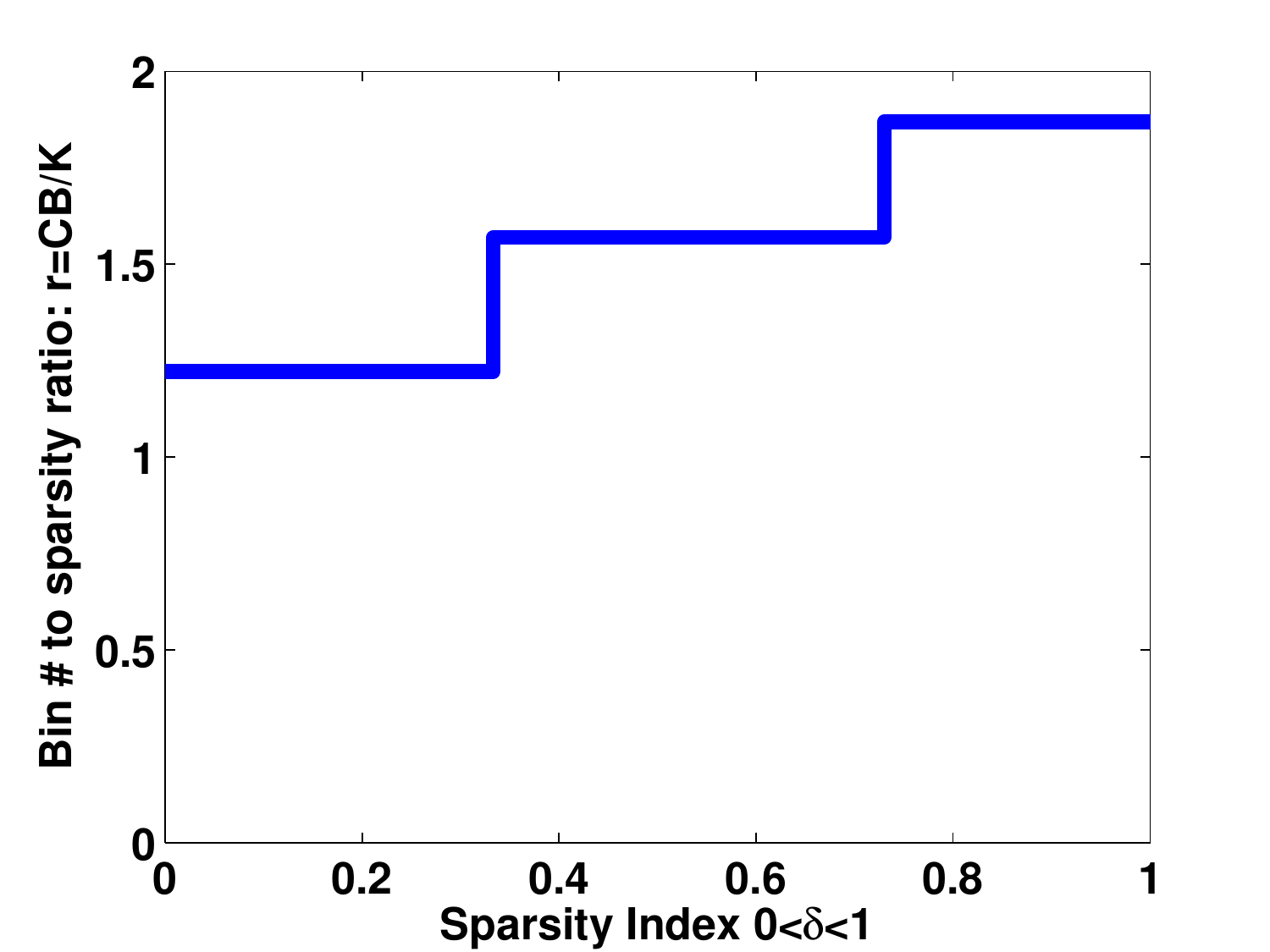}
\caption{The ratio of the total bin number to the sparsity $r=CB/K$ as a function of the index $\delta\in(0,0.99)$.}
\label{fig:sparse_index}
\end{center}
\vspace{-0.5cm}
\end{wrapfigure}

Let us first give some high level intuition of our subsampling design. Regardless of how many observation sets $P$ are generated in each subsampling group $c\in[C]$, it is desirable to keep the number of subsampling groups $C$ and the observation set size $B=2^b$ small such that the resulting sample complexity is small. However, if $C$ and $B$ are too small, the resulting observation bins will end up mostly with multi-tons so the peeling operations get stuck. As a result, the subsampling design is about finding the ``sweet spot'' for the number of subsampling groups $C$ and the observation set size $B$. In our analysis, we show that the product satisfies $CB = {O}(K)$, which implies that the subsampling using our generator does not introduce extra overheads other than a constant factor compared to the sparsity $K$. More importantly, from our analysis, such constant can be made explicit given the number of subsampling groups $C$. 

The subsampling design varies with the sparsity regime $0<\delta<1$ and hence, our results are stated with respect to different intervals of $\delta$ that cover the entire sparsity regime (see Appendix \ref{sec:analysis_peeling_decoder}). Our results stated below presents one constructive scheme using the partition\footnote{We choose to cover the regime $0<\delta\leq 0.99$ for the sake of presentation, and one can follow our proof in Appendix \ref{sec:analysis_peeling_decoder} to design subsampling patterns for $\delta > 0.99$.} $(0,1/3]\cup(1/3,0.73]\cup(0.73, 7/8]\cup(7/8,0.99]$. The sampling overhead (i.e. $CB/K$) introduced by the observation generator using this partition is shown in \figref{fig:sparse_index}. This is by no means the unique scheme and the reason for choosing $1/3$, $0.73$, $7/8$ and $0.99$ as break points is that we want to keep the number of intervals small for the sake of presentation, since each interval results in a different design. 

\begin{thm}[\bf Oracle-based Peeling Decoder Performance]\label{thm_peeling_decoder_general}
Consider an input vector with a $K$-sparse WHT such that $K={O}(N^\delta)$ for some $0<\delta < 1$. Given an observation generator with $C$ subsampling groups and an observation set size $B=\eta K$ for some $\eta>0$, the subsampling-induced graph ensemble $\mathcal{G}(K,\eta,C,\{\mathbf{M}_c\}_{c\in[C]})$ guarantees that with probability at least $1-{O}(1/K)$, the oracle-based peeling decoder recovers all $K$ unknown coefficients in time ${O}(K)$ as long as 
\begin{itemize}
	\item $C=3$ subsampling groups and $B\geq 0.4073K$ for $0<\delta\leq 1/3$ (see Section \ref{sec:delta1})
	\item $C=6$ subsampling groups and $B\geq 0.2616K$ for $1/3<\delta\leq 0.73$ (see Section \ref{sec:delta2}); 
	\item $C=8$ subsampling groups and $B\geq 0.2336K$ for $0.73<\delta\leq 0.875$ (see Section \ref{sec:delta3});
	\item $C=8$ subsampling groups and $B\geq 0.2336K$ for $0.875< \delta \leq 0.99$ (see Section \ref{sec:delta4}).
\end{itemize}
\end{thm}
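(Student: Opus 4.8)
The statement is purely combinatorial: the oracle supplies the single-ton decisions and index--value pairs, so noise and bin detection play no role here, and the only question is when the peeling decoder of Algorithm~\ref{alg:peeling} clears the graph $\mathcal{G}(K,\eta,C,\{\mathbf{M}_c\}_{c\in[C]})$ of Definition~\ref{def:graph_ensemble}. The plan is to follow the classical three-stage peeling analysis (Luby--Mitzenmacher--Shokrollahi--Spielman, as adapted in the FFAST/SparseFHT line of work): (i) reduce the fixed subsampling construction to a random left-$C$-regular bipartite ensemble using the uniform support assumption~\textbf{A1}; (ii) run density evolution under a local tree-like approximation to obtain a threshold on the redundancy $\eta$; and (iii) upgrade the tree-level prediction, via concentration plus an expansion (no-small-stopping-set) argument, into a high-probability statement that \emph{all} $K$ variable nodes are eventually peeled.

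For step (i)+(ii), I would argue that under \textbf{A1} the support $\mathcal{K}$ is uniform, so that for the per-regime choices of $\{\mathbf{M}_c\}$ the joint hash $(\mathbf{M}_1^T\mathbf{k},\dots,\mathbf{M}_C^T\mathbf{k})$ places the $C$ edges of each variable node approximately uniformly and independently across the $C$ groups. Each check node then has degree approximately $\mathrm{Poisson}(K/B)=\mathrm{Poisson}(1/\eta)$, and the depth-$\ell$ neighborhood of any fixed edge is a tree with high probability for constant $\ell$ (few short cycles in a sparse random graph). On this tree one derives the edge-perspective recursion for the probability $p_j$ that an edge remains unpeeled after $j$ rounds,
\[
p_{j+1} = \bigl(1 - e^{-p_j/\eta}\bigr)^{C-1}, \qquad p_0 = 1,
\]
using that a check resolves an edge iff all its other $\mathrm{Poisson}(1/\eta)$ neighbors are already peeled, and that a variable is cleared iff at least one of its other $C-1$ checks resolves it. The peeling drives $p_j$ to an arbitrarily small fraction precisely when
\[
\bigl(1 - e^{-p/\eta}\bigr)^{C-1} < p \quad \text{for all } p \in (0,1],
\]
whose smallest admissible $\eta$ is the threshold $\eta^\ast(C)$. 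Solving this fixed-point condition numerically for each $C$ yields the stated constants $\eta^\ast(3)\le 0.4073$, $\eta^\ast(6)\le 0.2616$, and $\eta^\ast(8)\le 0.2336$, i.e.\ the required lower bounds on $B=\eta K$.

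Step (iii) then converts the asymptotic-in-expectation statement into a per-instance guarantee. I would introduce a Doob martingale (or invoke Wormald's differential-equation method) on the edge-revealing process to show that the number of unpeeled edges after a constant number of rounds concentrates around $K p_\infty$, with deviations of failure probability $e^{-\Theta(K)}$. A separate combinatorial expansion argument shows that, with probability $1-{O}(1/K)$, no small residual stopping set survives, so the process continues all the way to the empty graph; this tail event is what produces the $1-{O}(1/K)$ success probability in the statement. The run-time claim is immediate: each peeling step removes one variable node and touches ${O}(C)={O}(1)$ incident edges, there are at most $K$ such steps, and $CB={O}(K)$ checks must be initialized, for a total of ${O}(K)$ operations.

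The main obstacle is step~(i) together with the per-regime design, namely verifying that the fixed, deterministic matrices $\{\mathbf{M}_c\}$ combined with uniform random support genuinely realize the idealized random ensemble for which the density-evolution recursion and the concentration/expansion bounds are valid --- in particular, establishing enough joint independence and uniformity of the $C$ hashes over a support of size $K={O}(N^\delta)$. This is exactly why both the construction and the number of groups $C$ must change across the partition of $\delta$: as $\delta\to 1$ the hash length $b=\log_2 B$ grows and sustaining near-independent, near-uniform hashing of $K$ nonzero coefficients demands more groups (hence larger $CB/K$, consistent with Figure~\ref{fig:sparse_index}). Each of these constructions, and the corresponding verification that the graph is an instance of the required ensemble, is carried out regime by regime in Sections~\ref{sec:delta1}--\ref{sec:delta4}.
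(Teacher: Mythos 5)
Your skeleton is essentially the paper's own proof: the same edge-perspective recursion \eqref{density_evolution} with the threshold values of Table \ref{Table_beta} (Lemma \ref{lem:DE_verysparse}), a Doob-martingale concentration step (Lemma \ref{lem:convergence2DE}), an expansion argument to clear the residual stopping sets (Lemma \ref{lem_graph_expander}), the $1-{O}(1/K)$ tail correctly attributed to the constant-size residual event, and the same ${O}(K)$ run-time accounting. Two discrepancies are worth flagging, one minor and one substantive.

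Minor: your claimed $e^{-\Theta(K)}$ concentration is not attainable on this ensemble. The check degrees are Poisson, hence unbounded, so the martingale differences must be truncated before Azuma applies; the paper caps all degrees at ${O}(K^{2/(4i+1)})$ and obtains only $2\exp\left(-c\varepsilon^2 K^{\frac{1}{4i+1}}\right)$ in \eqref{concentration_DE}. This weaker rate is still $o(1/K)$, so the theorem is unaffected, but your stated rate would not follow.

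Substantive gap: your step (i) asserts that the $C$ hashes place edges ``approximately uniformly and independently across the $C$ groups,'' and your proposed fix for the hard case is to ``establish enough joint independence.'' For $1/3<\delta<1$ this is false and cannot be repaired: the constructions in Sections \ref{sec:delta2}--\ref{sec:delta4} deliberately make the hash outputs share entire bit segments of $\mathbf{k}$ (the common prefix $\bdsb{k}_7$ or $\bdsb{k}_9$, or the cyclic overlaps in \eqref{Psi_c_less_sparse}), so $\mathcal{H}_c(\mathbf{k})$ and $\mathcal{H}_{c'}(\mathbf{k})$ are strongly dependent across groups. The paper's actual resolution is a decomposition your plan does not anticipate: density evolution and concentration need only \emph{per-group} uniformity (each group separately is an exact balls-and-bins model), while the expansion step --- the only place cross-group independence would be invoked --- is replaced by a direct combinatorial count of ``pathological sets,'' i.e., subsets of left nodes colliding pairwise in every group, giving failure probability ${O}\left(1/K^{(2^C-2C)/(C-1)}\right)$ (Lemma \ref{lem_graph_expander_less_sparse}). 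For intermediate $\delta$ the common-prefix graph is further decomposed into $2^{b_4}$ disjoint subgraphs, each from the basic ensemble, and a union bound over them degrades the failure probability by a $2^{b_4}$ factor; it is this union-bound loss --- requiring $a<a_\star$ --- and not any need for stronger hashing independence that forces the partition of $(0,1)$ into the four stated intervals and the growth of $C$ with $\delta$. Without these two ingredients your argument does not close for $\delta>1/3$, i.e., for three of the four bullets of the theorem.
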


\begin{proof}

Our analysis is similar to the arguments in \cite{luby2001efficient,richardson2001capacity} using the so-called {\it density evolution} analysis from modern coding theory, which tracks the average density\footnote{The density here refers to fraction of the remaining edges, or namely, the number of remaining edges divided by the total number of edges in the graph.} of the remaining edges in the graph at each peeling iteration of the algorithm. Although the proof techniques are similar to those from \cite{luby2001efficient} and \cite{richardson2001capacity}, the graph used in our peeling decoder is different from those in \cite{richardson2001capacity,luby2001efficient}. This leads to fairly important differences in the analysis, such as the degree distributions of the graphs and the expansion properties of the graphs (see Appendix \ref{sec:analysis_peeling_decoder}). Hence, we present an independent analysis here for our peeling decoder. In the following, we provide a brief outline of the proof elements highlighting the main technical components.

\begin{itemize}
	\item {\it Density evolution} in Lemma \ref{lem:DE_verysparse}: 
	We analyze the performance of our peeling decoder over a {\it typical graph} (i.e., cycle-free) of the
ensemble $\mathcal{G}(K,\eta,C,\{\mathbf{M}_c\}_{c\in[C]})$ for a fixed number of peeling iterations $i$. We assume that a local neighborhood of every edge in the graph is cycle-free (tree-like) and derive a recursive equation that represents the average density of remaining edges in the graph at iteration $i$. The recursive equation guarantees that the average density is shrinking as the iterations proceed, as long as the redundancy parameter $\eta$ is chosen accordingly with respect to the number of groups $C $ for subsampling.
	\item {\it Convergence to density evolution} in Lemma \ref{lem:convergence2DE}: 
	Using a Doob martingale argument \cite{richardson2001capacity} and \cite{pedarsani2014phasecode}, we show that the local neighborhood of most edges of a random graph from the ensemble $\mathcal{G}(K,\eta,C,\{\mathbf{M}_c\}_{c\in[C]})$ is cycle-free with high probability. This proves that with high probability, our peeling decoder removes all but an arbitrarily small fraction of the edges in the graph (i.e., the left nodes are removed at the same time after being decoded) in a constant number of iterations $i$. 
	\item {\it Graph expansion property} for complete decoding in Lemma \ref{lem_graph_expander}: 
	We show that if the sub-graph consisting of the remaining edges is an ``expander'' (as will be defined later in this section), and if our peeling decoder successfully removes all but a sufficiently small fraction of the left nodes from the graph, then it removes all the remaining edges of the graph successfully. As long as the number of subsampling groups $C$ is large enough for a given sparsity $\delta$, we show that our graph ensemble is an expander with high probability. This completes the decoding of all the non-zero WHT coefficients.
\end{itemize}
\end{proof}

\section{Robust Bin Detection}\label{sec:robust_bin_detection}
We have shown in Section \ref{sec:frontend-design} that given an oracle for bin detection, our subsampling design for any sparsity regime $0<\delta<1$ guarantees that peeling decoder successfully recovers all unknown WHT coefficients in the absence of noise. In the noisy scenario, it is critical to robustify the {\it bin detection} scheme by choosing subsampling offsets differently than the noiseless setting. In the following, we explain the robust bin detection routine $\psi$. For simplicity, we drop the group index $c$ and bin index $\bdsb{j}$ when we mention some bin observation. For example, the observation vector of some bin $\bdsb{j}$ from group $c$ is denoted by $\bdsb{U}=[\cdots,U_p,\cdots]^T$, where the associated set of offsets is $\mathbf{D}=[\mathbf{d}_1;\cdots;\mathbf{d}_P]\in\GF^{P\times n}$.

\subsection{Performance Guarantees of Robust Bin Detection}

\begin{wrapfigure}{r}{0.55\textwidth}
\vspace{-1cm}
\begin{center}
\includegraphics[scale=0.28]{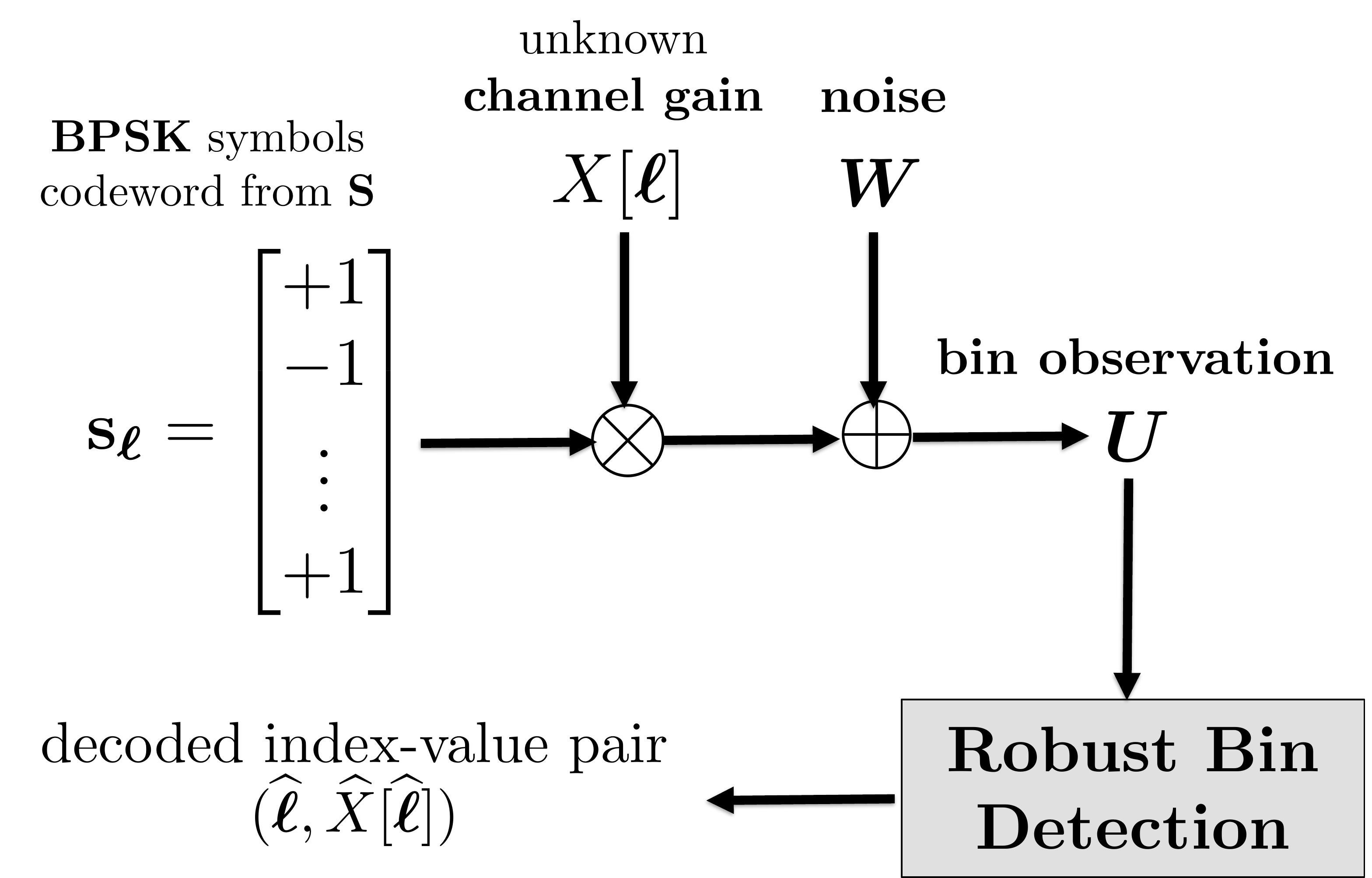}
\caption{An illustration of a single-ton detection.}
\vspace{-0.7cm}
\label{fig:single-ton_detection}
\end{center}
\end{wrapfigure}

From the noiseless design given in Section \ref{sec:noiseless_backend}, we can see that the offset signature $(-1)^{\mathbf{D}\mathbf{k}}$ associated with each coefficient in Proposition \ref{prop_meas.bin.model} is the key to decode the unknown index-value pair $(\mathbf{k},X[\mathbf{k}])$ of a single-ton. Let $\mathbf{S}=[\cdots, \mathbf{s}_{\mathbf{k}} ,\cdots]$, where for each $\mathbf{k}\in\GF^n$ we denote by
\begin{align}
	\mathbf{s}_{\mathbf{k}} = (-1)^{\mathbf{D}\mathbf{k}}
\end{align}
the offset signature {\it codebook} associated with the offset matrix $\mathbf{D}$. Then in the presence of noise, the bin observation vector can be written as
\begin{align}\label{equiv.model.bin_simplified}
	\bdsb{U}  
	= \mathbf{S}\bdsb{\alpha} + \bdsb{W}
\end{align}
for some sparse vector $\bdsb{\alpha}=[\cdots,\alpha[\mathbf{k}],\cdots]^T$ such that $\alpha[\mathbf{k}]=X[\mathbf{k}]$ if $\mathbf{M}_c^T\mathbf{k}=\bdsb{j}$ and $\alpha[\mathbf{k}]=0$ if otherwise. Clearly, the sparsity of $\bdsb{\alpha}$ implies the type of the bin. For example, the underlying bin is a single-ton if it is $1$-sparse. It can be further shown from \eqref{prop_meas.bin.model} that $\bdsb{W}$ follows a multivariate Gaussian distribution with zero mean and a covariance $\mathbb{E}\left[\bdsb{W}\bdsb{W}^T\right]=\nu^2\mathbf{I}$ and $\nu^2\defn N\sigma^2/B$.

In the case of single-tons, the observation $\bdsb{U}$ can be regarded as the noise-corrupted version of some codeword from the codebook $\mathbf{S}$ (see \figref{fig:single-ton_detection}). In our noiseless design, each codeword $\mathbf{s}_{\mathbf{k}}\in\{-1,1\}^n$ encodes the $n$-bit index $\mathbf{k}$ into $n$ binary phase-shift keying (BPSK) symbols $(-1)^{\ip{\mathbf{d}_p}{\mathbf{k}}}\in\{\pm 1\}$ for $p \in [n]$. This set of $n$ BPSK symbols is scaled by the coefficient $X[\mathbf{k}]$ and observed as $U_p$ for $p\in[n]$. This resembles the communication scenario where the goal of a receiver is to decode a sequence of $n$ BPSK sequence with an unknown channel gain. Therefore, when there is additive noise in the channel, the codebook needs to be re-designed such that it can be robustly decoded.

In general, the vector $\bdsb{\alpha}$ is not necessarily $1$-sparse (multi-ton bin). Through the {\it robust bin detection} scheme, we can effectively detect out the bins carrying some $1$-sparse $\bdsb{\alpha}$ (i.e. single-tons), and recovers the index-value pair of the $1$-sparse coefficient. Then, as the peeling operations proceed, the non-zero coefficients in other bins carrying $\bdsb{\alpha}$ that is not $1$-sparse will be peeled off, which keeps forming new bins carrying $1$-sparse vectors (single-ton).  

In particular, we first present a straightforward design for {\it near-linear time detection} to shed some preliminary light on the noisy design, and then proceed to our proposed {\it sub-linear time detection} schemes. More specifically, we have two sub-linear time detection schemes that impose different sample complexities and computational complexities, called the {\it Sample-Optimal (SO)-SPRIGHT algorithm} and the {\it Near Sample-Optimal (NSO)-SPRIGHT algorithm} respectively. 

\begin{thm}\label{peeling-decoder-RBI}
Given the offsets $\mathbf{D}\in\GF^{P\times n}$ chosen by 
\begin{itemize}
	\item Definition \ref{def:near-linear_offset} for the near-linear time detection scheme, or 
	\item Definition \ref{def:random_offset} for the NSO-SPRIGHT algorithm and Definition \ref{def:code_offset} for the SO-SPRIGHT algorithm, 
\end{itemize}
the failure probability $\Pf$ of the peeling decoder in the presence of noise is ${O}(1/K)$.
\end{thm}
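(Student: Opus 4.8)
The plan is to decompose the failure event of the noisy peeling decoder into a \emph{graph-structural} failure and a \emph{bin-detection} failure, and to bound each by ${O}(1/K)$. The structural part is already supplied by Theorem~\ref{thm_peeling_decoder_general}: for the prescribed $(C,\eta)$ the \emph{oracle-based} decoder clears the entire ensemble $\mathcal{G}(K,\eta,C,\{\mathbf{M}_c\}_{c\in[C]})$ with probability $1-{O}(1/K)$. The essential leverage from Assumption~{\bf A3} is that every non-zero coefficient equals $\pm\rho$, so a \emph{correct} single-ton detection---meaning the recovered index satisfies $\widehat{\mathbf{k}}=\mathbf{k}$ and the recovered sign is right---pins down the value $\widehat{X}[\widehat{\mathbf{k}}]=\pm\rho$ exactly. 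Hence the subtraction in \algref{alg:peeling} injects \emph{no} residual error into the neighboring bins, and the only way the noisy decoder can deviate from the oracle is through a misclassification or misdecoding by the routine $\psi$.

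To make the reduction to $\psi$ rigorous without entangling it with the data-dependent peeling order, I would union bound over all residual bin-states that can ever be presented to $\psi$. For a fixed bin $(c,\bdsb{j})$, any state encountered during peeling has the form $\sum_{\mathbf{k}\in\mathcal{S}}X[\mathbf{k}]\mathbf{s}_{\mathbf{k}}+\bdsb{W}_c[\bdsb{j}]$ for some subset $\mathcal{S}$ of the coefficients hashed to that bin, where $\bdsb{W}_c[\bdsb{j}]$ is the zero-mean Gaussian with covariance $\nu^2\mathbf{I}$, $\nu^2\defn N\sigma^2/B$. Let $\mathcal{E}$ be the event that $\psi$ returns the correct type (and, for single-tons, the correct index--sign pair) on \emph{every} such state. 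On $\mathcal{E}$, $\psi$ acts as a perfect oracle for every reachable configuration and in any processing order, so the noisy decoder reproduces the oracle decoder verbatim; thus $\Pf\le{O}(1/K)+\Pr[\mathcal{E}^c]$. A standard balls-into-bins bound shows that with probability $1-{O}(1/K)$ no bin receives more than $L={O}(\log K/\log\log K)$ coefficients, so the number of (bin, subset) pairs is at most $CB\cdot 2^{L}=K^{1+o(1)}$.

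The quantitative heart is the per-state error probability of $\psi$, and here the binary nature of the WHT kernel is what buys robustness for free. The per-observation signal-to-noise ratio inside a single-ton bin is the \emph{constant} $\rho^2/\nu^2=\eta\,\mathsf{SNR}$, independent of $N$. Separating $\mathcal{H}_{\textrm{Z}}$ from the rest is an energy test on $\|\bdsb{U}\|^2$ (central versus non-central $\chi^2_P$) with error $e^{-\Theta(P)}$; distinguishing $\mathcal{H}_{\textrm{S}}$ from $\mathcal{H}_{\textrm{M}}$ and recovering the index--sign pair amount to decoding a length-$P$ BPSK block with unknown gain at constant symbol-SNR, controlled by the distance structure of the offset codebook $\mathbf{S}=[\cdots,\mathbf{s}_{\mathbf{k}},\cdots]$. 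For the repetition-type offsets of Definition~\ref{def:random_offset} ($P={O}(\log^2 N)$) each of the $n$ index bits receives $\Theta(\log N)$ independent looks, giving per-bit error $e^{-\Theta(\log N)}$ and, after a union over the $n$ bits, a per-state error of $1/\mathrm{poly}(N)$; for the coded offsets of Definition~\ref{def:code_offset} ($P={O}(\log N)$) a code of rate below the symbol-SNR capacity attains block error $e^{-\Theta(P)}=1/\mathrm{poly}(N)$; the near-linear construction of Definition~\ref{def:near-linear_offset} is treated identically with its own $P$. In every case the exponent of the polynomial is made arbitrarily large by inflating the constant inside $P$---this constant is precisely the ``price'' set by the SNR in the main claim.

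Putting the pieces together, each of the $K^{1+o(1)}$ states fails with probability $1/N^{c}$, so $\Pr[\mathcal{E}^c]\le K^{1+o(1)}/N^{c}={O}(N^{\delta-c+o(1)})={O}(1/K)$ once $c$ exceeds $2\delta$, and therefore $\Pf={O}(1/K)$. I expect the genuine difficulty to lie in the per-state analysis rather than the bookkeeping: specifically, ruling out the event that a superposition of two or more $\pm\rho$ columns of $\mathbf{S}$ lands inside the single-ton decision region (the multi-ton-masquerading-as-single-ton failure), which forces careful control of the minimum distance and pairwise correlations of $\mathbf{S}$ under each offset design, and handling the boundary regime of the energy test where a faint single-ton must still be told apart from a zero-ton. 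The zero-ton test tails, the finite-constellation exactness of peeling, and the load bound are, by comparison, routine.
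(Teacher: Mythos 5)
Your proposal is correct and shares the paper's core decomposition---failure of the oracle-based decoder (Theorem \ref{thm_peeling_decoder_general}) plus failure of the bin-detection routine $\psi$, with the finite-constellation exactness of peeling under Assumption {\bf A3} ensuring that the noisy decoder coincides with the oracle decoder whenever $\psi$ never errs---but you organize the union bound genuinely differently. The paper (Appendix \ref{sec:RBI_perf_analysis}) union bounds over the ${O}(K)$ peeling iterations times the $\eta C K$ bins, i.e.\ $\eta C K^2$ events, and therefore demands per-bin error $\Prob{E}={O}(1/K^3)$, which it extracts from the exponential tails of the missed, false, and crossed verification rates (Propositions \ref{prop_miss_detection}, \ref{prop_false_detection}, \ref{prop_cross_detection}) with $P_i={O}(\log N)$. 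You instead define the order-independent event $\mathcal{E}$ over all (bin, residual-subset) states, paying a $2^L$ factor per bin that you control with a balls-into-bins max-load bound $L={O}(\log K/\log\log K)$, leaving $K^{1+o(1)}$ events each needing error $N^{-c}$ with $c>2\delta$. What your version buys is a cleaner treatment of adaptivity: the paper's iteration-indexed union bound implicitly treats the bin content at iteration $i$ as fixed, whereas your $\mathcal{E}$ covers every subset that could be presented in any peeling order, so the reduction to the oracle decoder becomes an honest induction; the cost is the extra max-load lemma, which the paper does not need. Your per-state sketch matches the paper's machinery in every ingredient: the zero-ton/single-ton energy tests are the non-central $\chi^2$ tails of Lemma \ref{general_tail} and Corollary \ref{general_tail_worst}, index recovery is the repetition-code majority vote (Lemma \ref{lem_NSO_SPRIGHT}) or minimum-distance channel decoding (Lemma \ref{lem_SO_SPRIGHT}), and the multi-ton-masquerading event you single out as the hard case is exactly what the paper handles via the Hoeffding coherence bound plus Gershgorin (Lemma \ref{lem_mc}) for constant-size multi-tons and a CLT argument for growing ones---note those bounds carry explicit factors of $N$ (e.g.\ $Ne^{-\Theta(P_1)}$ in Proposition \ref{prop_false_detection}), which your step of inflating the constant inside $P$ correctly absorbs.
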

\begin{proof}
	See Appendix \ref{sec:RBI_perf_analysis}.
\end{proof}

\subsection{Near-linear Time Robust Bin Detection: A Random Design}\label{sec:near-linear-time-design}

The near-linear time bin detection scheme follows the principle of using random codes to resolve the different bin hypotheses and obtain the index-value pair. 

\begin{defi}\label{def:near-linear_offset}
Let $P={O}(\log N)$. The near-linear time detection scheme requires $P$ {\it random offsets} $\{\mathbf{d}_{p}\}_{p\in[P]}$ chosen independently and uniformly at random over $\GF^n$ in every group.
\end{defi}

For some $\gamma\in(0,1)$, the near-linear time detection routine is performed as follows:
\begin{itemize}
	\item {\it zero-ton verification}: for zero-tons, we can expect the energy $\left\|\bdsb{U}\right\|^2$ to be small relative to the energy of a single-ton. Therefore, this idea is used to eliminate zero-tons:
\begin{align}
	\bdsb{U}\sim\mathcal{H}_{\textrm{Z}}, \quad \mathrm{if}~\frac{1}{P}\left\|\bdsb{U}\right\|^2\leq (1+\gamma)\nu^2.
\end{align}
	\item {\it single-ton search}: after ruling out zero-tons and multi-tons, the ultimate goal is to identify single-tons in a certain group $c$ in terms of the underlying index $\mathbf{k}$ and the value $X[\mathbf{k}]$ in that hash set $\{\mathbf{k}:\mathcal{H}_c(\mathbf{k})=\bdsb{j}\}$. Therefore, assuming that the underlying bin $\bdsb{j}$ is a single-ton bin, we perform a single-ton search to estimate the pair of estimates $(\widehat{\mathbf{k}}, \widehat{{X}}[\widehat{\mathbf{k}}])$ for peeling. To do so, we employ a Maximum Likelihood Estimate (MLE) test. For each of $N/B$ possible coefficient locations $\mathbf{k}$ in $\mathbf{M}_c^T\mathbf{k}=\bdsb{j}$, we obtain the single-ton coefficient as
\begin{align}\label{X_MLE}
  	\widehat{\alpha}[\mathbf{k}] = \frac{1}{P} \mathbf{s}_{\mathbf{k}}^T \bdsb{U},\quad  \forall \mathbf{k}~\textrm{such that}~\mathbf{M}_c^T\mathbf{k}=\bdsb{j}.
\end{align}
Using the MLE of the coefficient, we choose among the locations by finding the location $k$ which minimizes the residual energy:
\begin{align}
  \widehat{\mathbf{k}} = \arg\min_{\mathbf{k}}~\left\|\bdsb{U} - \widehat{\alpha}[\mathbf{k}] \mathbf{s}_{\mathbf{k}} \right\|^2.
\end{align}
With the estimated index $\widehat{\mathbf{k}}$, the value of the coefficient is obtained as
\begin{align}\label{x_detection}
	\widehat{{X}}[\widehat{\mathbf{k}}] =
	\begin{cases}
		\rho, &\mathrm{if}~\mathbf{s}_{\mathbf{k}}^T\bdsb{U}/P \geq 0\\					-\rho, &\mathrm{if}~\mathbf{s}_{\mathbf{k}}^T\bdsb{U}/P < 0.
	\end{cases}	
\end{align}

	\item {\it single-ton verification}: this step confirms if the bin is a single-ton via a residual test using the single-ton search estimates
	\begin{align}\label{singleton-verification}
		\frac{1}{P}\left\|\bdsb{U} - \widehat{{X}}[\widehat{\mathbf{k}}] \mathbf{s}_{\widehat{\mathbf{k}}} \right\|^2 \leq (1+\gamma)\nu^2.
	\end{align}	
\end{itemize}

Since there are a total of $\eta K$ bins in each of the $C$ subsampling groups and each bin has $P={O}(\log N)$ measurements, the SPRIGHT framework using the near-linear time detection scheme leads to a sample cost of $M= C \eta K P = {O}(K\log N)$. In terms of complexity, solving the above minimizations requires an exhaustive search over all indices $\mathbf{M}_c^Tk=\bdsb{j}$ for some bin $\bdsb{j}\in\GF^b$. This leads to an exhaustive search over ${O}(N/K)$ elements on average in each peeling iteration, where each element imposes a search complexity of $P={O}(\log N)$ by the generalized likelihood ratio test. As a result, across all ${O}(K)$ peeling iterations, this results in a total complexity of $T={O}(N/K)\times {O}(\log N) \times {O}(K) = {O}(N \log N)$.

\subsection{Sub-linear Time Robust Bin Detection}\label{sec:sublinear-time-design}

Inspired by the near-linear time bin detection scheme, we devise two simple schemes to achieve the same performance with sub-linear time complexity. Recall that the robust bin detection involves three steps:
\begin{itemize}
	\item[1)] zero-ton verification $\frac{1}{P}\left\|\bdsb{U}\right\|^2\leq (1+\gamma)\nu^2$;
	\item[2)] single-ton search that estimates the index-value pair $(\widehat{\mathbf{k}},\widehat{X}[\widehat{\mathbf{k}}])$;
	\item[3)] single-ton verification $\frac{1}{P}\left\|\bdsb{U} - \widehat{{X}}[\widehat{\mathbf{k}}] \mathbf{s}_{\widehat{\mathbf{k}}} \right\|^2 \leq (1+\gamma)\nu^2$. 
\end{itemize}
The near-linear time design is a straightforward construction of the offset matrix $\mathbf{D}\in\GF^{P\times n}$ to guarantee success for step (1) and step (3). However, it does not optimize its choice of offsets to facilitate step (2) in the noisy setting, which causes the high complexity.

To avoid the joint estimation and detection approach in the near-linear time scheme, we use different offsets to tackle them separately. We perform the single-ton search using some offsets, while using other offsets for zero-ton and single-ton verifications. Since the fully random offsets already tackle the verifications with high probability, we can simply focus on designing offsets for the single-ton search. If the single-ton search can be performed with high probability of success using the same amount of samples and computations (in an order-sense), the entire bin detection scheme becomes sub-linear, as discussed in details below.

\begin{prop}\label{prop_BSC}
Given a single-ton bin with $(\mathbf{k},X[\mathbf{k}])$ observed in noise
\begin{align}
	U_p 
	&= X[\mathbf{k}](-1)^{\ip{\mathbf{d}_p}{\mathbf{k}}} + W_p,\quad p \in [P],
\end{align}
the sign of each observation satisfies
\begin{align}
	\sgn{U_p}
	&= \ip{\mathbf{d}_p}{\mathbf{k}}\oplus\sgn{X[\mathbf{k}]}\oplus Z_p,\quad p \in [P],
\end{align}
where $Z_p$ is a Bernoulli random variable with probability upper bounded as $\Pe=e^{-\frac{\eta}{2} \mathrm{SNR}}$.
\end{prop}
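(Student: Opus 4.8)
The plan is to reduce the sign identity to a definition and then control a single Gaussian tail. First I would rewrite the noiseless component of $U_p$ in a form that exposes the target bit. Since $X[\mathbf{k}]\in\{\pm\rho\}$, we have $X[\mathbf{k}]=\rho(-1)^{\sgn{X[\mathbf{k}]}}$, so the signal term becomes
\begin{align}
	X[\mathbf{k}](-1)^{\ip{\mathbf{d}_p}{\mathbf{k}}}=\rho\,(-1)^{\ip{\mathbf{d}_p}{\mathbf{k}}\oplus\sgn{X[\mathbf{k}]}}=\rho\,(-1)^{b_p},
\end{align}
where I set $b_p\defn\ip{\mathbf{d}_p}{\mathbf{k}}\oplus\sgn{X[\mathbf{k}]}$; here the two $\mathbb{F}_2$ exponents combine mod $2$ because $(-1)^a(-1)^{a'}=(-1)^{a\oplus a'}$. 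Defining $Z_p\defn\sgn{U_p}\oplus b_p$ then makes the claimed identity $\sgn{U_p}=\ip{\mathbf{d}_p}{\mathbf{k}}\oplus\sgn{X[\mathbf{k}]}\oplus Z_p$ hold by construction, so the content of the proposition is entirely in the tail bound on $\Pr[Z_p=1]$, i.e.\ the probability that additive noise flips the sign of a signal of magnitude $\rho$.

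Next I would compute that flip probability. The event $Z_p=1$ is exactly the event that $U_p=\rho(-1)^{b_p}+W_p$ has the opposite sign to $\rho(-1)^{b_p}$. When the signal is $+\rho$ this requires $W_p<-\rho$, and when it is $-\rho$ it requires $W_p>\rho$; by symmetry of the noise both cases give the same value. From Proposition \ref{prop_meas.bin.model} the marginal of each coordinate $W_p$ is $\mathcal{N}(0,\nu^2)$ (the covariance is $\nu^2\mathbf{I}$ with $\nu^2\defn N\sigma^2/B$), so
\begin{align}
	\Pr[Z_p=1]=\Pr\!\left[W_p>\rho\right]=Q\!\left(\rho/\nu\right)\leq e^{-\rho^2/(2\nu^2)},
\end{align}
using the standard Gaussian tail bound $Q(x)\leq e^{-x^2/2}$ for $x\geq 0$.

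Finally I would substitute the normalizations to recover the stated exponent. Using $B=\eta K$ we get $\nu^2=N\sigma^2/(\eta K)$, and the SNR definition \eqref{def_SNR} gives $\rho^2=\mathsf{SNR}\cdot\sigma^2 N/K$, so that $\rho^2/\nu^2=\eta\,\mathsf{SNR}$ and hence $\Pr[Z_p=1]\leq e^{-\frac{\eta}{2}\mathrm{SNR}}=\Pe$, as claimed. There is no deep obstacle here: the argument is a single symmetric Gaussian tail estimate, and the only care needed is bookkeeping the variance normalization $\nu^2=N\sigma^2/B$ and verifying the cancellation $\rho^2/\nu^2=\eta\,\mathsf{SNR}$. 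The one point worth flagging for the downstream analysis (rather than for this statement) is that the diagonal covariance $\nu^2\mathbf{I}$ makes the $W_p$, and therefore the $Z_p$, mutually independent across $p$; this per-coordinate bound then feeds directly into a concentration argument over the $P$ offsets when bounding the overall single-ton detection error in Appendix \ref{sec:RBI_perf_analysis}.
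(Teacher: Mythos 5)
Your proof is correct and follows essentially the same route as the paper's: rewrite the signal term as $|X[\mathbf{k}]|(-1)^{\ip{\mathbf{d}_p}{\mathbf{k}}\oplus\sgn{X[\mathbf{k}]}}$, define $Z_p$ as the sign-flip indicator, and bound the flip probability by a Gaussian tail with $\rho^2/\nu^2=\eta\,\mathsf{SNR}$. The only cosmetic difference is that you use the exact one-sided flip event $Q(\rho/\nu)\leq e^{-\rho^2/(2\nu^2)}$, whereas the paper over-bounds by the two-sided event $\Prob{|W_p|>|X[\mathbf{k}]|}$ before applying the same exponential estimate, so both land on the identical bound $\Pe=e^{-\frac{\eta}{2}\mathrm{SNR}}$.
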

\begin{proof}
	See Appendix \ref{proof_prop_BSC}.
\end{proof}

\begin{wrapfigure}{r}{0.55\textwidth}
\vspace{-0.7cm}
\begin{minipage}{0.55\textwidth}
\begin{center}
\includegraphics[scale=.31]{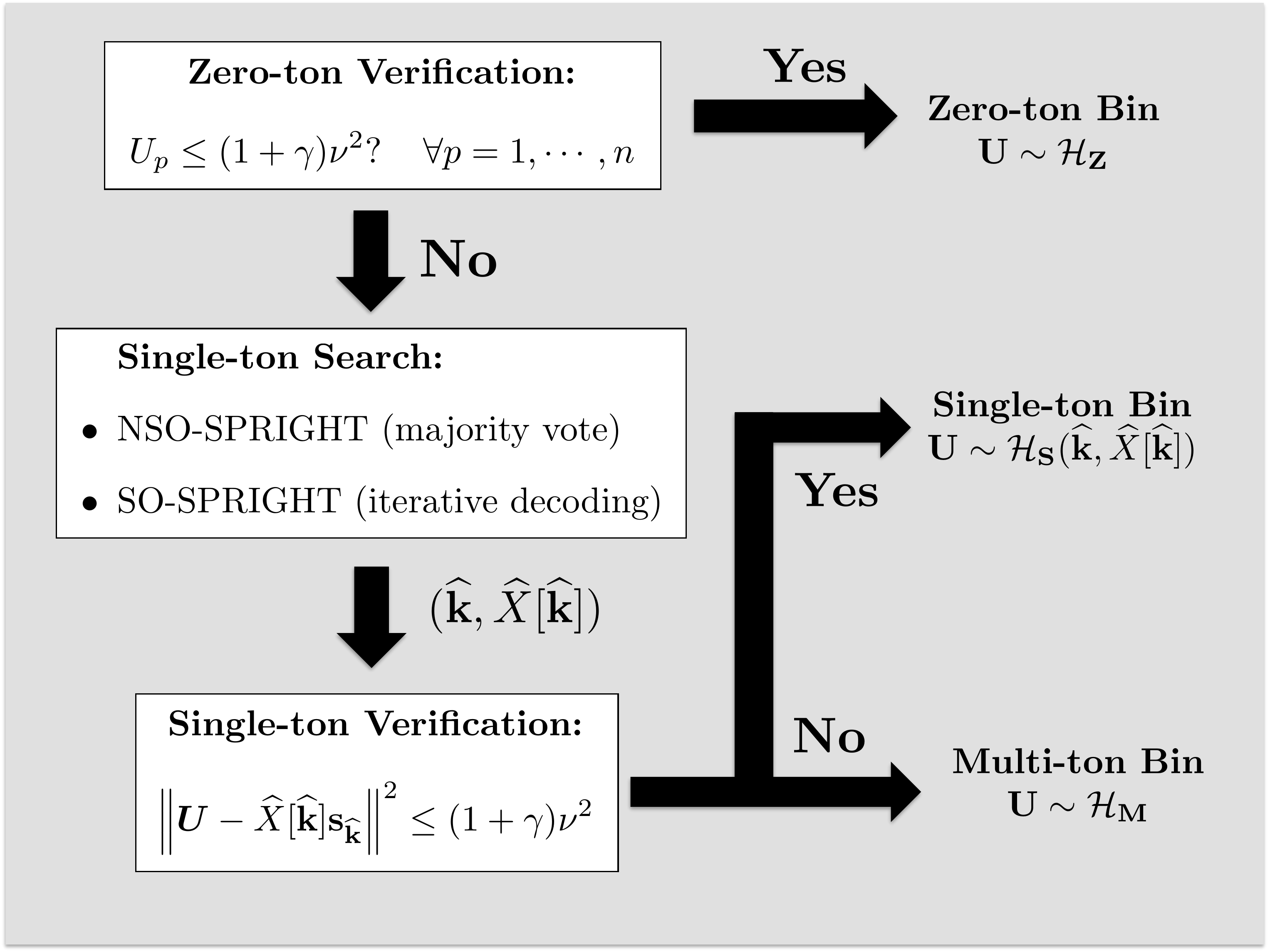}
\caption{A simplified flowchart of the bin detection routine $\psi$ for the noisy setting by choosing offsets according to Theorem \ref{peeling-decoder-RBI} for the NSO-SPRIGHT and the SO-SPRIGHT algorithm.}
\vspace{-1cm}
\label{fig:bindetectionroutine_noisy}
\end{center}
\end{minipage}
\end{wrapfigure}
From Proposition \ref{prop_BSC}, it can be seen that the sign vector of the bin observation vector $\bdsb{U}$ can be viewed as some potentially corrupted bits received over a binary symmetric channel (BSC). The design of the offset matrix $\mathbf{D}$ for reliable and fast decoding over the BSC is thus the key to achieving sub-linear complexity. 

In the following, we first present the sub-linear time {\it NSO-SPRIGHT Algorithm} that is easy to implement (i.e. a majority vote) and achieves a sub-linear complexity $T={O}(K\log^3 N)$ with a sample cost of $M={O}(K\log^2 N)$. Then, we present the sub-linear time {\it SO-SPRIGHT Algorithm} that maintains the optimal sample cost $M={O}(K\log N)$ and simultaneously achieves sub-linear complexity $T={O}(K\log N)$ using an iterative channel decoder.

\subsubsection{The NSO-SPRIGHT Algorithm}

Recall that the near-linear time design requires an exhaustive search due to the lack of structure of fully random offsets, which creates a bottleneck of the complexity. The key is to design a set of offsets that constitute a sufficiently good codebook to allow reliable transmissions of the $n$-bit index $\mathbf{k}$ over a BSC. In order to enable the bit-by-bit recovery of the binary representation of $\mathbf{k}$ as in the noiseless design, the first coding strategy we exploit is {\it repetition coding}, which is done by imposing structures on the random offsets for subsampling.

\begin{defi}\label{def:random_offset}
Let $P=P_1P_2$ with $P_1={O}(n)$ and $P_2= n$. The {\it NSO-SPRIGHT} algorithm requires $P_1$ random offsets $\{\mathbf{d}_{p}\}_{p\in[P_1]}$ chosen independently and uniformly over $\GF^n$ and $P_2$ {\it modulated offsets} $\{\mathbf{d}_{p,q}\}_{q\in[P_2]}$ such that
\begin{align}
	\mathbf{d}_{p,q}\oplus \mathbf{d}_{p} = \mathbf{e}_q,\quad q\in[P_2]
\end{align}
where $\mathbf{e}_q$ is the $q$-th column of the identity matrix.
\end{defi}

Given the offsets chosen as Definition \ref{def:random_offset}, we can identify the $q$-th bit of $\mathbf{k}$ by jointly considering $P_1$ observations associated with offsets $\mathbf{d}_{p,q}$ across $p\in[P_1]$. More specifically, 
\begin{align}
	\sgn{U_{p,q}}
	&= \ip{\mathbf{d}_{p,q}}{\mathbf{k}}\oplus\sgn{X[\mathbf{k}]}\oplus Z_{p,q}\\		
	\sgn{U_p}
	&= \ip{\mathbf{d}_p}{\mathbf{k}}\oplus\sgn{X[\mathbf{k}]}\oplus Z_p.
\end{align}
Since $\mathbf{d}_{p,q}\oplus\mathbf{d}_p=\mathbf{e}_q$, we have $P_1$ corrupted versions of $k[q]$:
\begin{align}\label{NSO_prob}
	\sgn{U_{p,q}}\oplus \sgn{U_p} = \ip{\mathbf{e}_q}{\mathbf{k}} \oplus Z_{p,q}' = k[q] \oplus Z_{p,q}',
\end{align}	
where $Z_{p,q}' = Z_p \oplus Z_{p,q}$ is another Bernoulli variable with $\theta = \Prob{Z_{p,q}' = 1}=2\Pe(1-\Pe)<1/2$.
Then the MLE of $k[q]$ given observations $\{\sgn{U_{p,q}}\oplus \sgn{U_p}\}_{p=1}^{P_1}$ can be obtained as
\begin{align}
	\widehat{k}[q]
	= \arg\max_{a} \prod_{p=1}^{P_1} \theta^{\sgn{U_{p,q}}\oplus \sgn{U_p}\oplus a}(1-\theta)^{1-\sgn{U_{p,q}}\oplus \sgn{U_p}\oplus a}.
\end{align}
Using the fact that $\theta < 1/2$ such that $\log(\theta/1-\theta) < 0$, we can simplify the objective as
\begin{align}\label{MLE_kq}
	\widehat{k}[q]
	&= \arg\min_{a\in\GF}  \sum_{p=1}^{P_1} \sgn{U_{p,q}}\oplus \sgn{U_p}\oplus a.
\end{align}
In other words, the decoding scheme for the $q$-th bit of the index $\mathbf{k}$ becomes a simple {\it majority test} by accumulating $P_1={O}(n)$ random signs $\sgn{U_{p,q}}\oplus \sgn{U_p}$. Using the estimated bits $\{\widehat{k}[q]\}_{q\in[P_2]}$ together with $\mathbf{M}_c^T\mathbf{k}=\bdsb{j}$, the estimate $\widehat{\mathbf{k}}$ can be obtained accordingly. Finally, the value of the coefficient is obtained as \eqref{x_detection}. The zero-ton and single-ton verifications can be performed directly using the measurements associated with offsets $\mathbf{d}_p$ since there are $P_1={O}(n)={O}(\log N)$ such random offsets, which have been shown to achieve high probability of success in the near-linear time design.

From Definition \ref{def:random_offset}, we can see that there are a total of $P_1P_2 = {O}(n^2)$ offsets, and therefore each bin has ${O}(\log^2N)$ observations. As a result, the NSO-SPRIGHT algorithm leads to a sample cost of $M= C \eta K \log^2N = {O}(K\log^2 N)$. In terms of complexity, the majority vote requires ${O}(\log^2 N)$ operations for each bin, contributing to a total of ${O}(K\log^2N)$ operations across all ${O}(K)$ bins. However, this complexity is dominated by generating $P=P_1P_2=\log^2N$ basic observation sets from $B$-point WHTs, each imposing an extra complexity of ${O}(K\log K) = {O}(K\log N)$ because of $K={O}(N^\delta)$. As a result, this gives a total complexity of $T= {O}(K\log^3 N)$.

\subsubsection{The SO-SPRIGHT Algorithm}
While the NSO-SPRIGHT algorithm exploits repetition codes induced by the random offsets to robustify the noisy performance, we can further use better error correction codes to guide the choice of offsets. This is slightly more difficult to implement in practice since the decoding requires channel decoder instead of a simple majority vote, but the resulting sample complexity and computational complexity are order-optimal.

\begin{defi}\label{def:code_offset}
Let $P=\sum_{i=1}^3 P_i$ with $P_i={O}(n)$ for $i=1,2,3$. The {\it SO-SPRIGHT} algorithm requires $P_1$ {\it random offsets} $\mathbf{d}_p$ for $p=1, \cdots,P_1$ chosen independently and uniformly at random over $\GF^n$, and $P_2$ {\it zero offsets} $\mathbf{d}_p=\mathbf{0}$ for $p=P_1+1,\cdots,P_1+P_2$, and finally $P_3$ {\it coded offsets} $\mathbf{d}_p$ for $p=P_1+P_2+1,\cdots,P$ such that the offset matrix $\mathbf{G}=[\cdots;\mathbf{d}_p;\cdots;]\in \GF^{P_3\times n}$ constitutes a generator matrix of some linear block code with a minimum distance $\beta P_3$ with $\beta > \Pe$.
\end{defi}

Recall Proposition \ref{prop_BSC}, the observations associated with the coded offsets $\mathbf{G}$ can be written as
\begin{align}
	\begin{bmatrix}
		\sgn{U_{P_1+P_2+1}}\\
		\vdots\\
		\sgn{U_{P}}
	\end{bmatrix}
	&=
	\mathbf{G}\mathbf{k}
	\oplus
	\sgn{X[\mathbf{k}]}
	\oplus
	\begin{bmatrix}
		Z_{P_1+P_2+1}\\
		\vdots\\
		Z_{P}
	\end{bmatrix}.
\end{align}
Note that there is a nuisance sign $\sgn{X[\mathbf{k}]}$ which is unknown to the robust bin detector. To illustrate our scheme, we first assume that there is a genie that informs the decoder of the sign of the coefficient $\sgn{X[\mathbf{k}]}$, and then we discuss how to get rid of the genie.
\begin{itemize}
	\item {\it when $\sgn{X[\mathbf{k}]}$ is known a priori}: in this case, we can easily obtain
\begin{align}
	\begin{bmatrix}
		\sgn{U_{P_1+P_2+1}}\oplus\sgn{X[\mathbf{k}]}\\
		\vdots\\
		\sgn{U_{P}}\oplus\sgn{X[\mathbf{k}]}
	\end{bmatrix}
	=
	\mathbf{G}
	\mathbf{k}
	\oplus
	\begin{bmatrix}
		Z_{P_1+P_2+1}\\
		\vdots\\
		Z_{P}
	\end{bmatrix}.
\end{align}
	Since there are $n$ information bits in the index $\mathbf{k}$, then there exists some channel code (i.e. $\mathbf{G}$) with block length $P_3= n / R(\beta)$ that achieves a minimum distance of $\beta P_3$, where $R(\beta)$ is the rate of the code. As long as $\beta>\Pe$, it is obvious that the unknown $\mathbf{k}$ can be decoded with exponentially decaying probability of error. There exist many codes that satisfy the minimum distance properties, but the concern is the decoding time. It is desirable to have decoding time linear in the block length so that the sample complexity and computational complexity can be maintained at ${O}(n)$, same as the noiseless case. Excellent examples include the class of expander codes or LDPC codes that allow for linear time decoding.  
	\item {\it when $\sgn{X[\mathbf{k}]}$ is not known a priori}: we consider the observations associated with all the zero offsets $\mathbf{d}_p=\mathbf{0}$ for $p=P_1+1,\cdots,P_1+P_2$ 
\begin{align}
	\begin{bmatrix}
		\sgn{U_{P_1+1}}\\
		\vdots\\
		\sgn{U_{P_1+P_2}}
	\end{bmatrix}
	&=
	\sgn{X[\mathbf{k}]}	
	\oplus
	\begin{bmatrix}
		Z_{P_1+1}\\
		\vdots\\
		Z_{P_1+P_2}
	\end{bmatrix}
\end{align}	
which can recover the sign correctly $\sgnhat{X[\mathbf{k}]}=\sgn{X[\mathbf{k}]}$ with high probability using a majority test (assuming $\Pe\leq 1/2$). If $\Pe > 1/2$, the sign is obtained accordingly using a minority test. Then we can proceed as if the sign is known a priori:
\begin{align}
	\begin{bmatrix}
		\sgn{U_{P_1+P_2+1}}\oplus \sgnhat{X[\mathbf{k}]} \\
		\vdots\\
		\sgn{U_{P}}\oplus \sgnhat{X[\mathbf{k}]} 
	\end{bmatrix}
	=
	\mathbf{G}
	\mathbf{k}
	\oplus
	\begin{bmatrix}
		Z_{P_1+P_2+1}\\
		\vdots\\
		Z_{P}
	\end{bmatrix}.
\end{align}	
\end{itemize}
Finally, the value of the coefficient is obtained as \eqref{x_detection}. The zero-ton and single-ton verifications can be performed directly using the observations associated with offsets $\mathbf{d}_p$. Since there are $P_1={O}(n)={O}(\log N)$ such random offsets, which have been shown to achieve high probability of success in the near-linear time design.

Using the SO-SPRIGHT design, we can see that there are three sets of offsets, where one set includes $P_3={O}(n)$ offsets for the single-ton search, and the second set includes $P_2={O}(n)$ zero offsets for the sign reference, and $P_1={O}(n)$ random offsets for the zero-ton and single-ton verifications. Therefore, we have a total of $P = \sum_{i=1}^3 P_i = {O}(n) = {O}(\log N) $ offsets and each bin has ${O}(\log N)$ observations. As a result, the SO-SPRIGHT algorithm leads to a sample cost of $M= C \eta K P = {O}(K\log N)$, which is the same as the noiseless case \cite{scheibler2013fast}. In terms of complexity, if $\mathbf{G}$ is a properly chosen channel code generator matrix from the class of expander codes or LPDC codes, the decoding time for the index requires ${O}(n)={O}(\log N)$ operations for each bin. This contributes to a total of ${O}(K\log N)$ complexity across all ${O}(K)$ bins. However, this complexity is dominated by subsampling for generating $P$ basic observation sets from $B$-point WHTs, each imposing an extra complexity of ${O}(K\log K) = {O}(K\log N)$ because of $K={O}(N^\delta)$. As a result, this gives a total complexity of $T= {O}(K\log^2 N)$, which is also the same as the noiseless case \cite{scheibler2013fast}.

\section{Applications}\label{sec:applications}
In the following, we provide some machine learning concepts that can be cast as a WHT computation or expansion.

\begin{example}[Pseudo-Boolean Function and Sparse Polynomial]
An arbitrary pseudo-Boolean function can be represented uniquely by a multi-linear polynomial over the hypercube $(z_1,\cdots,z_n)\in\{-1,+1\}^n$:
\begin{align}\label{eq:fourier}
	f(z_1,\cdots,z_n) = \sum_{\mathcal{S}\subseteq [n]} \alpha_{\mathcal{S}} \prod_{i\in\mathcal{S}} z_i,~ \forall~ z_i \in \{-1,+1\}, 
\end{align}
where $\mathcal{S}$ is a subset of $[n]\defn \{1,\cdots,n\}$,  and $\alpha_{\mathcal{S}}$ is the Walsh (Fourier) coefficient associated with the monomial $\prod_{i\in\mathcal{S}} z_i$. If we replace $z_i$ by $(-1)^{m[i]}$ such that $z_i=-1$ when $m[i]=1$ and $z_i=1$ when $m[i]=0$, we have $x[\mathbf{m}] = f\left((-1)^{m[1]},\cdots,(-1)^{m[n]}\right)$ for
$\mathbf{m}\in\GF^n$ and $X[\mathbf{k}]=\sqrt{N}\alpha_{\mathcal{S}}$ such that $\supp{\mathbf{k}}=\mathcal{S}$. 
\end{example}
\begin{example}[Set Functions]
A set function is an arbitrary real-valued function $f: 2^{[n]}\rightarrow \mathbb{R}$ defined for every element in the power set $\mathcal{Z}\in2^{[n]}$, which has a Walsh expansion given by
\begin{align}\label{set_function}
	f(\mathcal{Z}) = \frac{1}{\sqrt{N}} \sum_{\mathcal{S}\in2^{[n]}} \hat{f}(\mathcal{S}) (-1)^{|\mathcal{S}\cap\mathcal{Z}|},
\end{align}
where $\hat{f}(\mathcal{S})$ is the Walsh (Fourier) coefficient. Clearly, a set function can also be viewed as a $n$-ary pseudo-Boolean function in \eqref{eq:fourier} such that $f(\mathcal{Z})=f(z_1,\cdots,z_n)$ as long as $z_i=-1$ if $i\in\mathcal{Z}$ and $z_i=1$ if $i\notin\mathcal{Z}$. Therefore, each function value $f(\mathcal{Z})$ can be regarded as a sample $x[\mathbf{m}]  = f(\supp{\mathbf{m}})$, where the Walsh coefficient satisfies $X[\mathbf{k}]=\hat{f}(\mathcal{S})$ as long as $\supp{\mathbf{k}}=\mathcal{S}$.

\end{example}
\begin{example}[Decision Tree Learning] 
Decision trees are machine-learning methods for constructing prediction models from data, whose goal is to predict the value of a target label $f$ based on $n$ input variables $z_i\in\{\pm 1\}$ for $i\in[n]$. More specifically, this includes classification trees (discrete-valued outcome $f\in\mathbb{Z}$) and regression trees (real-valued outcome $f\in\mathbb{R}$). Decision tree models are usually constructed from top-down starting at the root node, by choosing a certain variable $z_i$ for some $i$ at each step that optimally splits the set of training data with respect to some measure of goodness. Hence, for each set of input variables $(z_1,\cdots,z_n)\in\{-1,+1\}^n$, there is a unique leaf node in the tree that assigns the target label $f$. This is mathematically equivalent to learning a (pseudo)-Boolean function, which can be cast as a problem of computing the WHT of $f$.
\end{example}

It has been found that many instances of the examples above exhibit sparsity in the Walsh spectrum. In general, our SPRIGHT framework can be applied to learning $K$-sparse pseudo-Boolean polynomials $f: \{\pm 1\}^n \rightarrow \mathbb{R}$ with $n$ variables. A concrete example is in decision tree learning, where the underlying (pseudo)-Boolean function has a sparse spectrum if the decision tree has few leaf nodes with short depth. An extreme case would be when the underlying function only depends on few input variables, which is also referred to as the juntas problem in Boolean analysis\footnote{It is well-known that learning juntas using {\it random} samples is NP-hard. Our framework tackles the juntas problem using {\it specifically chosen} samples, and hence we can achieve sub-linear sample cost and run-time. This is not a contradiction.}. Therefore, if the $K$-sparse $N$-point WHT can be computed efficiently, these machine learning applications can benefit greatly from the reductions in both the sample complexity and computational complexity. In the following, we present a specific machine learning application in graph sketching.

\subsection{Applications in Hypergraph Sketching}
A \emph{hypergraph}, denoted by $\mathcal{G}=(\mathcal{V},\mathcal{E})$, is a generalized notion of graphs where each edge $e\in\mathcal{E}$, called the \emph{hyperedges}, can connect more than two nodes in the node set $\mathcal{V}$. Hypergraph sketching here refers to the procedure of identifying the unknown hypergraph structure from cut queries. Hypergraphs have been very useful in relational learning, which has received extensive attentions in recent years since many real-world data are organized by the relations between entities. Some of the interesting problems involved in relational learning include the discovery of communities, classification, and predictions of possible new relations.

We describe the hypergraph sketching application through an example depicted in \figref{fig:graph_sketching_example}. Consider a scenario where there are $n$ books from a certain provider (e.g. Amazon) and each book is characterized by a node in the graph. There are numerous transactions taking place in which each customer buys a few books. In this setting, the relationship between books in each transaction can be captured by a hyperedge, which connects the subset of books bought in the same transaction. A cartoon illustration is depicted in \figref{fig:graph_sketching_example}, where there are $3$ distinct sets of books bought in different transactions with each set coded in different colors. Then, the hypergraph sketching problem is equivalent to solving the following problem under a the following {\it query} model:
\begin{itemize}
	\item Pick an arbitrary partition $(S,\bar{S})$ of $n$ books such that $S\cup\bar{S}=\mathcal{V}$ (see \figref{arbitrary_cut}). 
	\item One can query the following: i) are there any transactions that include books from both sets $(S,\bar{S})$? and ii) if there are, what is the total number of transactions that satisfy this requirement? For example in \figref{arbitrary_cut_value}, the resulting query would return $1$ since there is only $1$ transaction that includes books from both sets. 
	\item How many such queries are needed to fully learn all the unknown distinct subsets of books that are bought in different transactions?
\end{itemize}
Note that the query requested here is in fact the number of hyperedges that cross over the two sets $(S,\bar{S})$, which is defined as the {\it cut value} of the graph. As shown next, this can be mathematically established as a sparse WHT computation problem, where our SPRIGHT framework is found to be useful.

\begin{figure}[h]
\begin{center}
\subfigure[Hidden graph of $n$ books: there are a few purchase patterns, where each corresponds to a hyperedge]{
\includegraphics[width=0.3\linewidth]{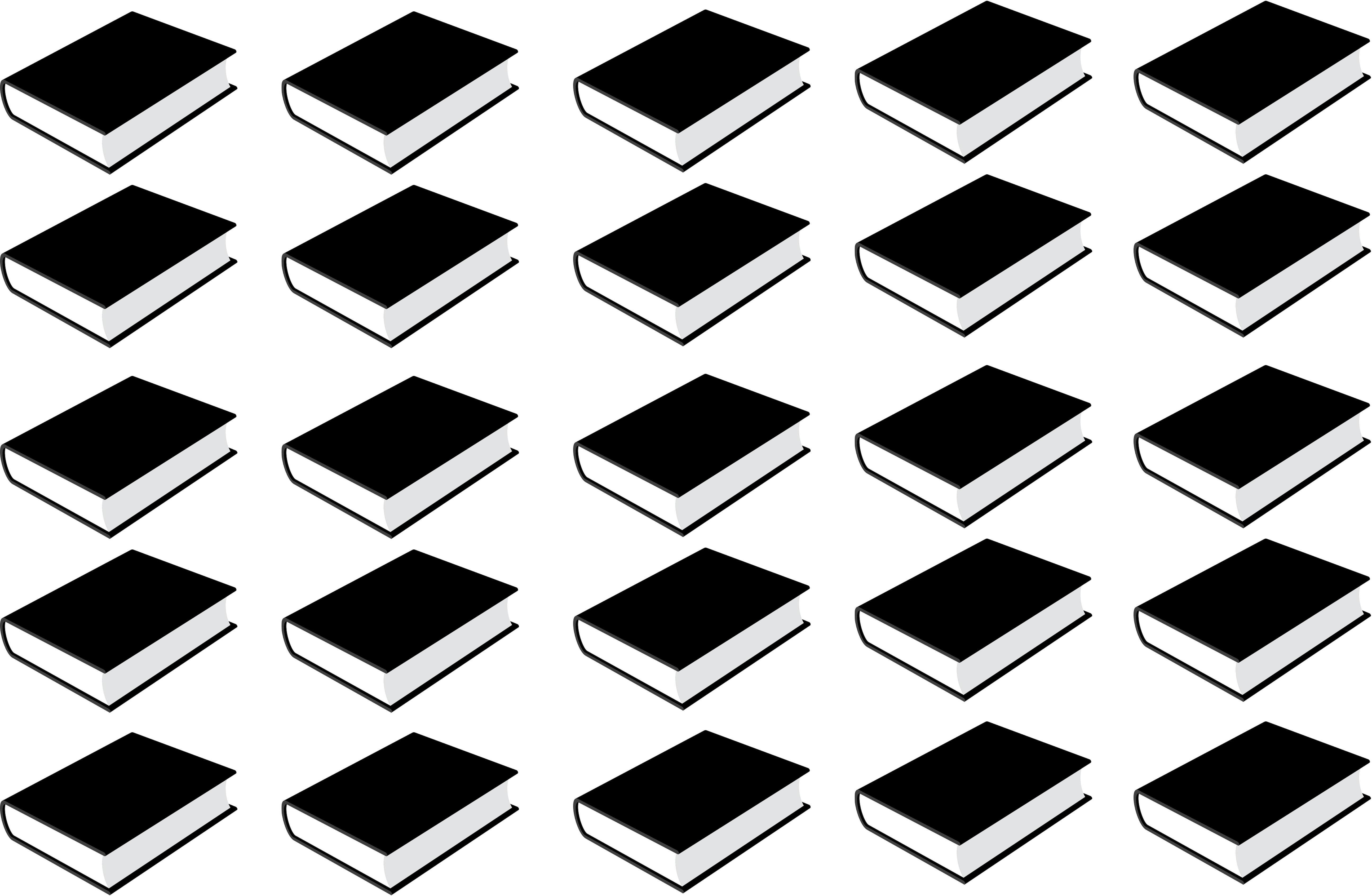}
}
~~~
\subfigure[Pick some partition $(S, \bar{S})$: how many transactions include books from both sets $(S,\bar{S})$?]{
\includegraphics[width=0.3\linewidth]{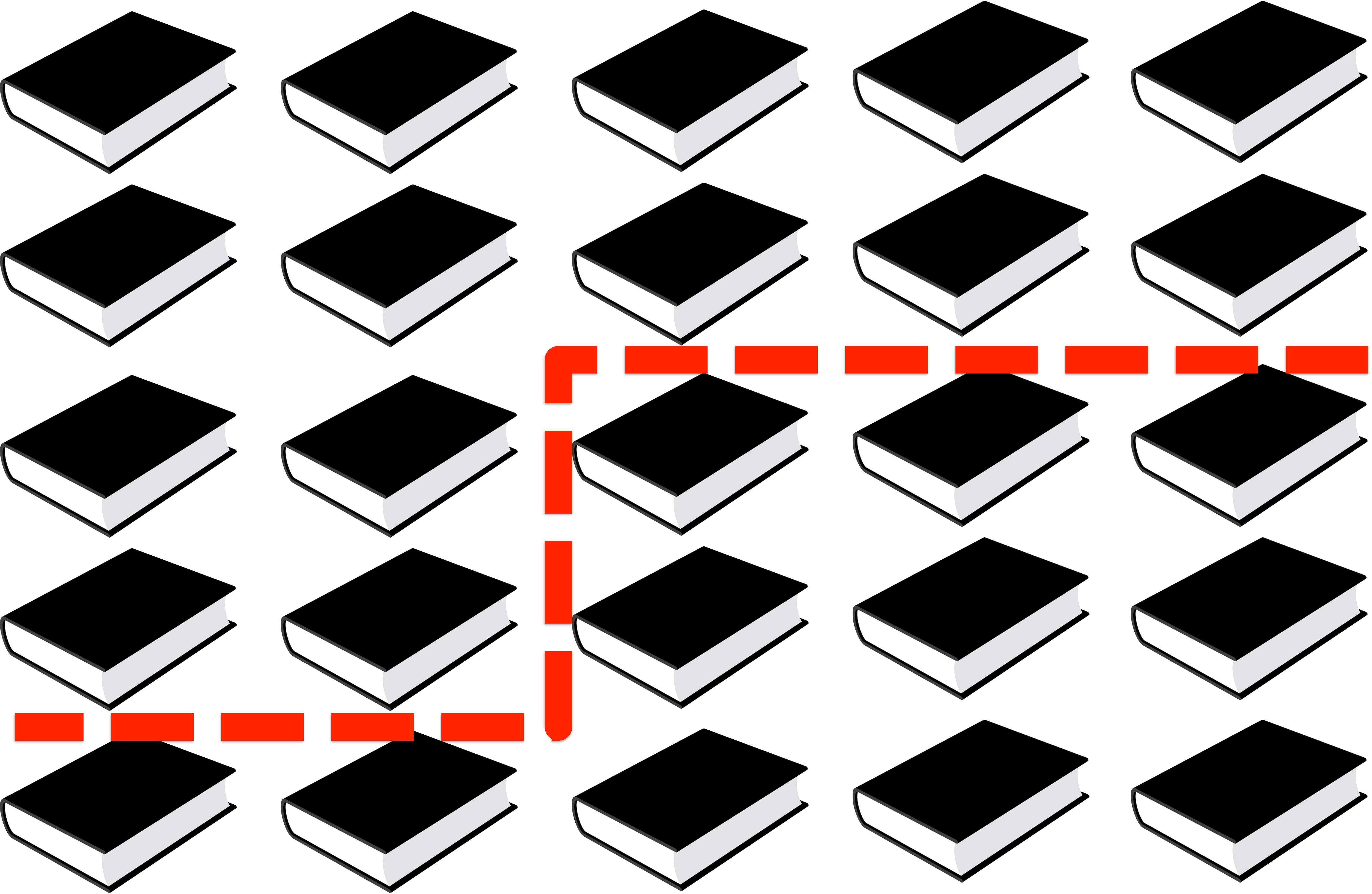}
\label{arbitrary_cut}
}
~~~
\subfigure[Query: in this example, the query result for this partition is $1$ and the graph has $3$ distinct subsets.]{
\includegraphics[width=0.3\linewidth]{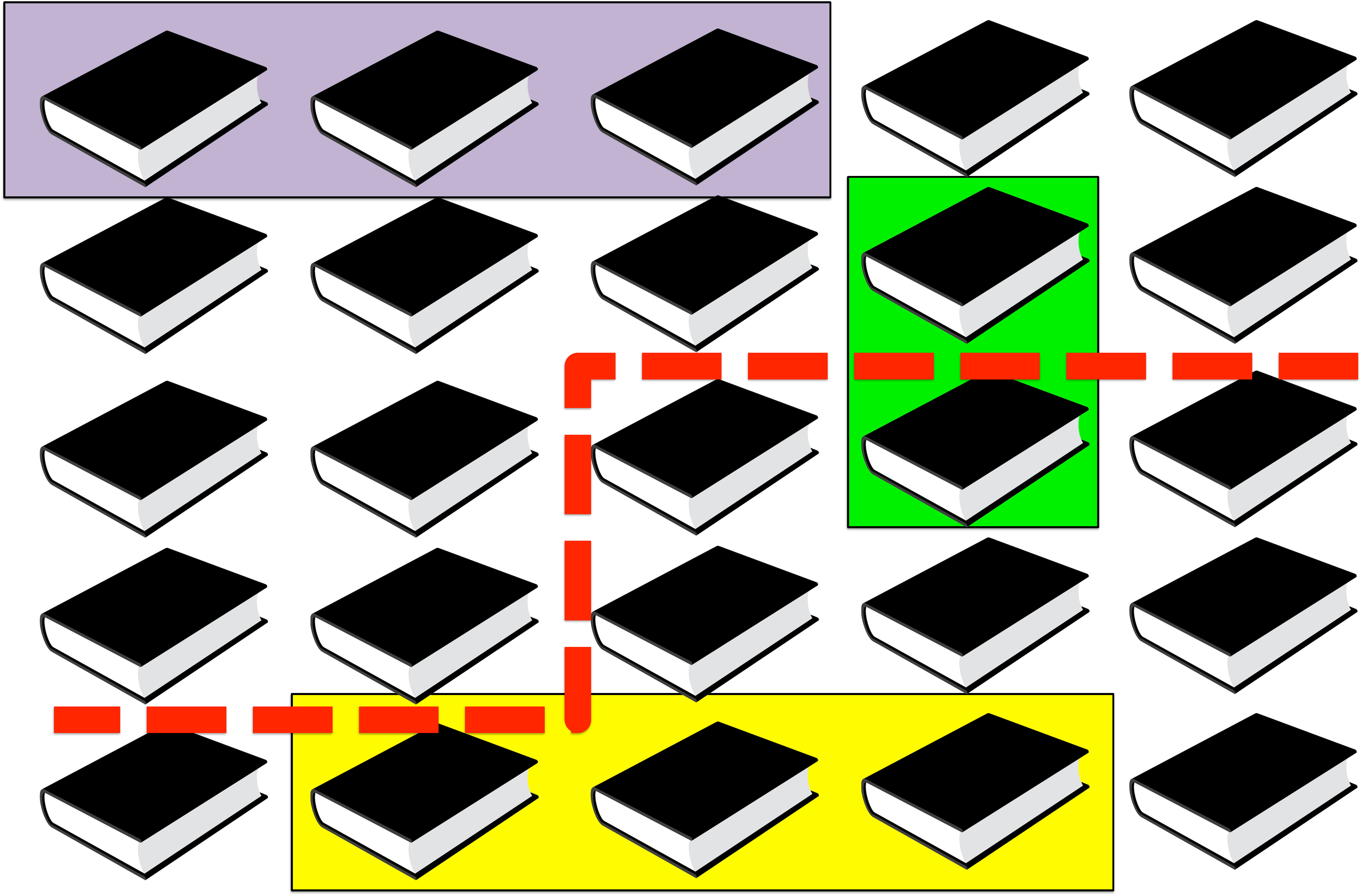}
\label{arbitrary_cut_value}
}
\end{center}
\vspace{-0.7cm}
\caption{Given a set of $n$ books, infer the graph structure by querying graph cuts.}\label{fig:graph_sketching_example}
\end{figure}

Let $| \mathcal{V} | = n$ and $|\mathcal{E}|=s$. A cut $\mathcal{S} \subseteq \mathcal{V}$ is a set of selected vertices, denoted by the binary $n$-tuple $\mathbf{m}=[m[1],\cdots,m[n]]$ over $\GF^n$, where $m[i]=1$ if $i\in\mathcal{S}$ and $m[i]=0$ if $i\notin\mathcal{S}$. The cut value $x[\mathbf{m}]$ for a specific cut $\mathbf{m}$ in the hypergraph is defined as $x[\mathbf{m}] =\left |\{e \in \mathcal{E}: e \cap \mathcal{S} \neq \varnothing,~e \cap \bar{S} \neq \varnothing \} \right |$, where $\bar{S}=\mathcal{V}/\mathcal{S}$. In other words, the cut value corresponds to the number of hyperedges that crosses between the two sets $(S,\bar{S})$. Given a partition $\mathbf{m}\in \GF^n$, for some edge $e\in \mathcal{E}$, we define the following function to indicate whether it crosses over two sets $(S,\bar{S})$:
\begin{align}\label{1e_m}
	1_e[\mathbf{m}] = \prod_{i\in e} \frac{\left(1+(-1)^{m[i]}\right)}{2}+\prod_{i\in e} \frac{\left(1-(-1)^{m[i]}\right)}{2}.
\end{align}
For example, if all the nodes connected through this particular hyperedge $i\in e$ is on the same side of the partition $(S,\bar{S})$, which implies that either $m[i]=0$ or $m[i]=1$ for all $i\in e$, this indicator $1_e[\mathbf{m}]=1$ is $1$. This suggests that when the edge $e$ does {\it not} cross over the two sets $(S,\bar{S})$, the indicator takes the value $1$. Therefore, the total count of edges that do cross over can be obtained accordingly as 
\begin{align}
	x[\mathbf{m}]  =\sum_{e\in\mathcal{E}} \left(1-1_e[\mathbf{m}]\right).
\end{align}
By substituting $1_e[\mathbf{m}]$ with \eqref{1e_m}, it can be equivalently written as a WHT expansion as follows:
\begin{align}
	x[\mathbf{m}]
	&=\sum_{\mathbf{k}\in\GF^n}X[\mathbf{k}] (-1)^{\ip{\mathbf{k}}{\mathbf{m}}},
\end{align}
where the coefficient $X[\mathbf{k}]$ is a scaled WHT coefficient such that $X[\mathbf{0}] = \left(s - \sum_{e\in\mathcal{E}}\frac{1}{2^{|e|-1}}\right)$ and
\begin{align}
	X[\mathbf{k}] &=  
	\begin{cases}
		\frac{1}{2^{|e|-1}}, &\textrm{if}~\supp{\mathbf{k}}\in e~\textrm{and}~\left|\supp{\mathbf{k}}\right|~\textrm{is even}\\
		0, & \textrm{otherwise}
	\end{cases}
\end{align}
Clearly, if the number of hyperedges is small $s\ll 2^n$ and the maximum size of each hyperedge is small, the coefficients $X[\mathbf{k}]$'s are sparse. For example, if the hyperedge size can be universally bounded by $d$, the sparsity can be well upper bounded by $K\leq s2^{d-1}$. 

\subsection{Simple Experiment}
Here we consider the noiseless scenario as a proof of concept, we use our SO-SPRIGHT algorithm for hypergraph sketching, which requires ${O}(Kn)$ queries for interpolating the total $2^n$ cut values with run-time ${O}(Kn^2)$. In this experiment, we randomly generate hypergraphs with $n=50$ to $400$ nodes with $s=3,6,9$ edges, where each edge does not connect more than $d=6$ nodes. As can be seen, our SPRIGHT framework computes the sparse coefficients $X[\mathbf{k}]$ in time ${\Theta}(K \log K n) = {\Theta}(Kn^2)$ from only ${\Theta}(Kn)$ cut queries.

\begin{figure}[h]
\begin{center}
\subfigure[Query cost scaling with the graph size $n$]{
\includegraphics[width=0.48\linewidth]{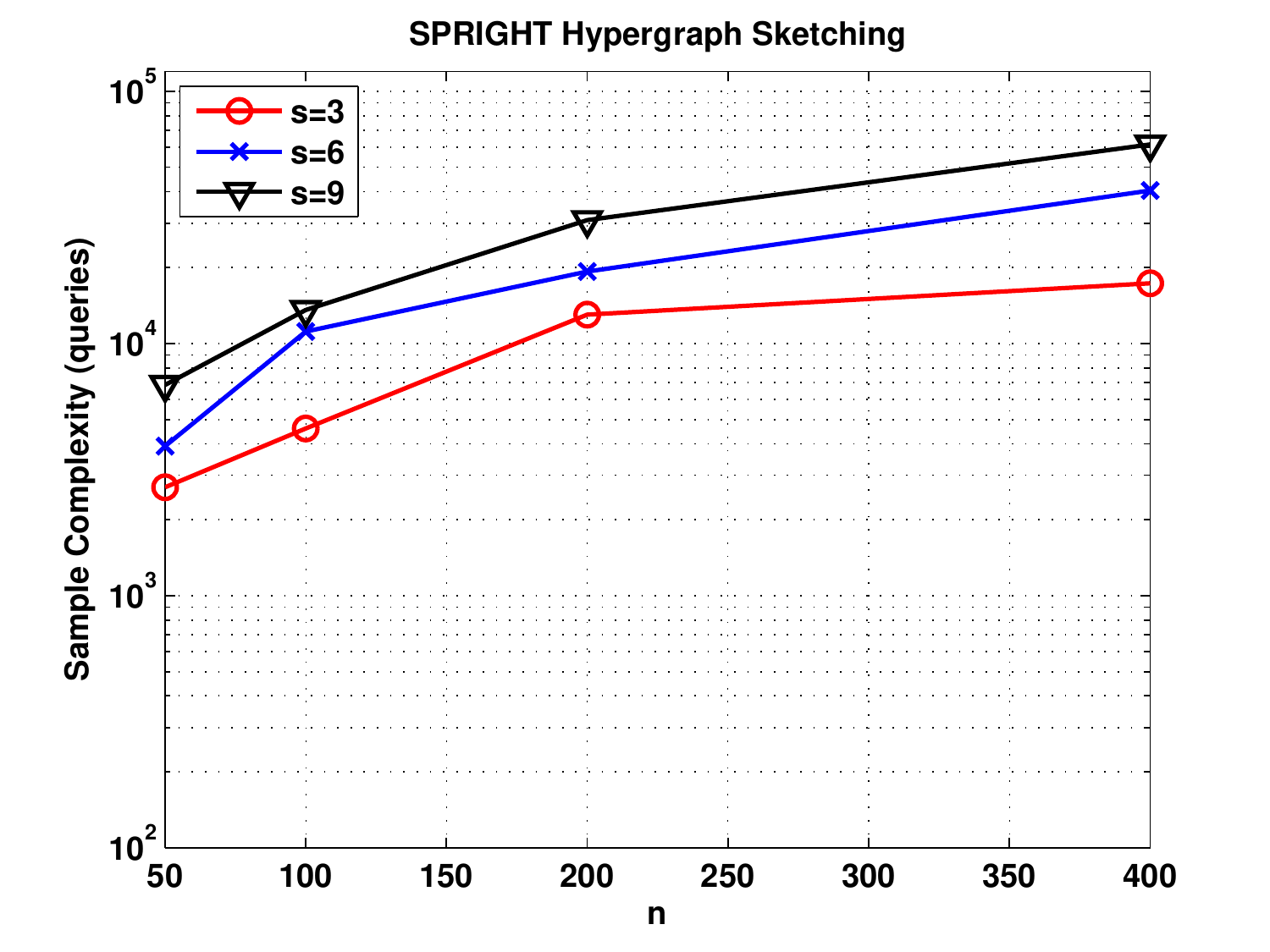}
\label{fig:hypergraph_sketching_meas}
}
~
\subfigure[Run-time scaling with the graph size $n$]{
\includegraphics[width=0.48\linewidth]{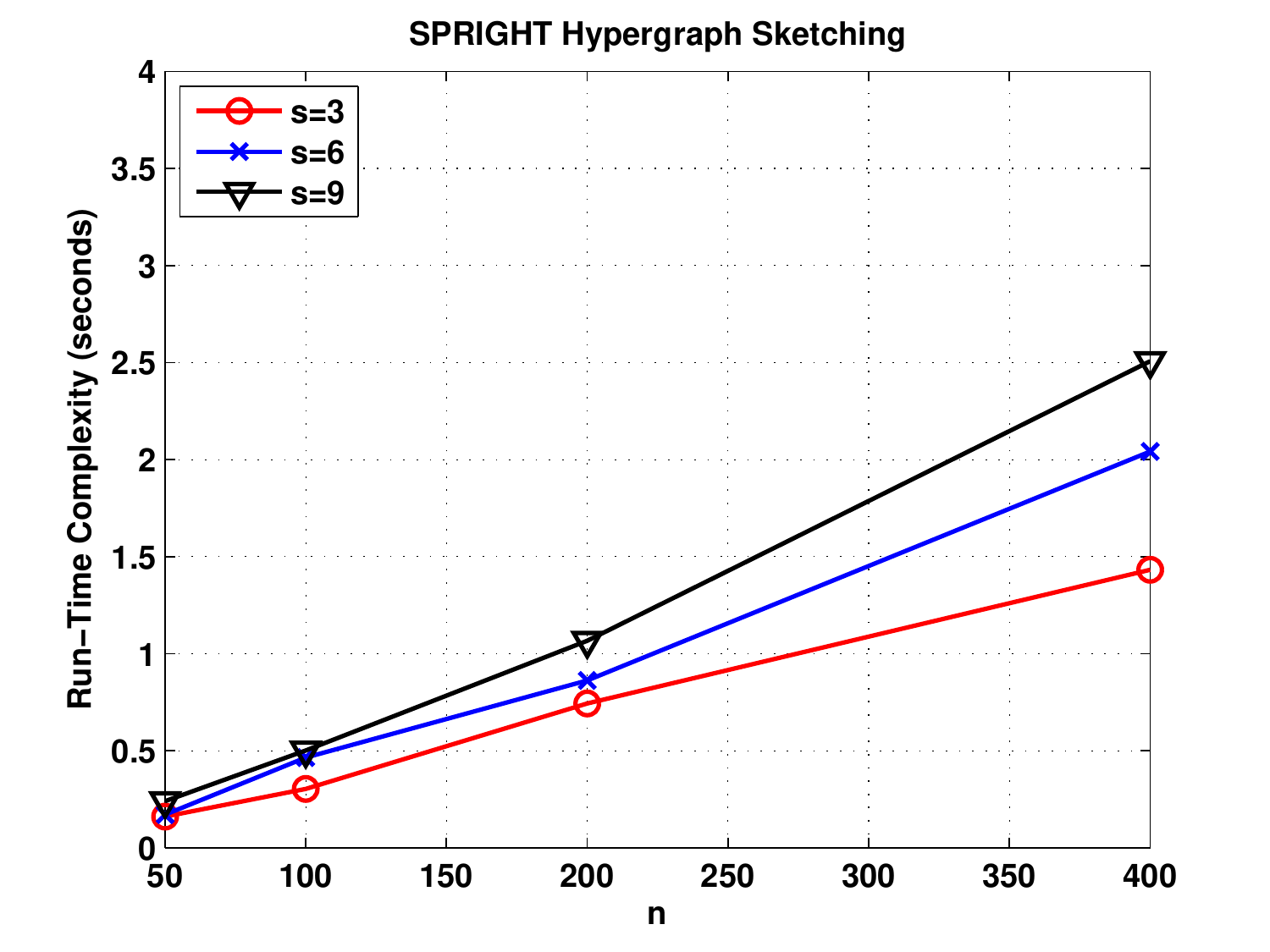}
\label{fig:hypergraph_sketching_runtime}
}
\end{center}

\end{figure}

\section{Numerical Experiments}\label{sec:simulations}

In this section, we test the NSO-SPRIGHT algorithm and SO-SPRIGHT algorithm respectively. We first showcase the performances in many settings by varying the signal length $N=2^n$, sparsity and SNR. Then, we demonstrate possible applications of our SPRIGHT framework in machine learning domains such as hypergraph sketching and decision tree learning over large datasets.

\subsection{Performance of the SPRIGHT Framework}
Here, we synthetically generate time domains samples $\mathbf{x}$ from a $K$-sparse WHT signal $\mathbf{{X}}$ of length $N=2^n$ with $K$ randomly positioned non-zero coefficients of magnitude $\pm\rho$.  The setup of our experiments is given below: 
\begin{itemize}
	\item {\it subsampling parameters}: we fix the number of groups to $C=3$ and the number of bins in each group is $B=2^b$ where $b=\lceil \log_2(K) \rceil$. Note that in this case $B\approx K$ and thus $\eta \approx 1$.
	\vspace{-0.2cm}
	\item {\it NSO-SPRIGHT algorithm parameters}: we choose $P_1=2n$ random offsets and $P_2=n$ modulated offsets. Thus the sample cost is $M_{\rm NSO} = 2CBn^2 \approx 6Kn^2$ and the complexity is $T_{\rm NSO}={O}(Kn^3)$.
	\vspace{-0.2cm}
	\item {\it SO-SPRIGHT algorithm parameters}: we choose $P_1=2n$ coded offsets for the single-ton search, $P_2=n$ zero offsets and $P_3=n$ random offsets for the zero-ton and single-ton verifications. For the single-ton search, the $P_1=2n$ coded offsets are chosen to induce a $(3,6)$-regular LDPC code, where the search utilizes the Gallager's bit flipping algorithm for decoding, which imposes linear run-time ${O}(n)$. The sample cost is $M_{\rm SO} = 4CBn \approx 12Kn$ and the complexity is $T_{\rm SO}={O}(Kn^2)$.
\end{itemize}

\subsubsection{Noise Robustness}
In this subsection, we compare the noise robustness of the NSO-SPRIGHT and SO-SPRIGHT algorithms. The experiment settings are given below:
\begin{itemize}
	\item {\it input profile}: we generate a sparse WHT vector or length $N=2^n$ with $n=14$ and $K = 10, 20, 40$ non-zero coefficients respectively. Therefore, the signal dimension is $N=16384$. The non-zero WHT coefficients are chosen with uniformly random support and random amplitudes $\{\pm 1\}$. The input signal samples $\mathbf{x}$ is obtained by taking the inverse WHT of the sparse WHT vector and adding i.i.d. Gaussian noise samples with variance $\sigma^2$ determined by the range of $\mathsf{SNR}=[-5:5:20]$ dB .	
\end{itemize}

\begin{figure}[h]
\vspace{-0.5cm}
\begin{center}
\subfigure[Probability of success versus SNR]{
\includegraphics[width=0.48\linewidth]{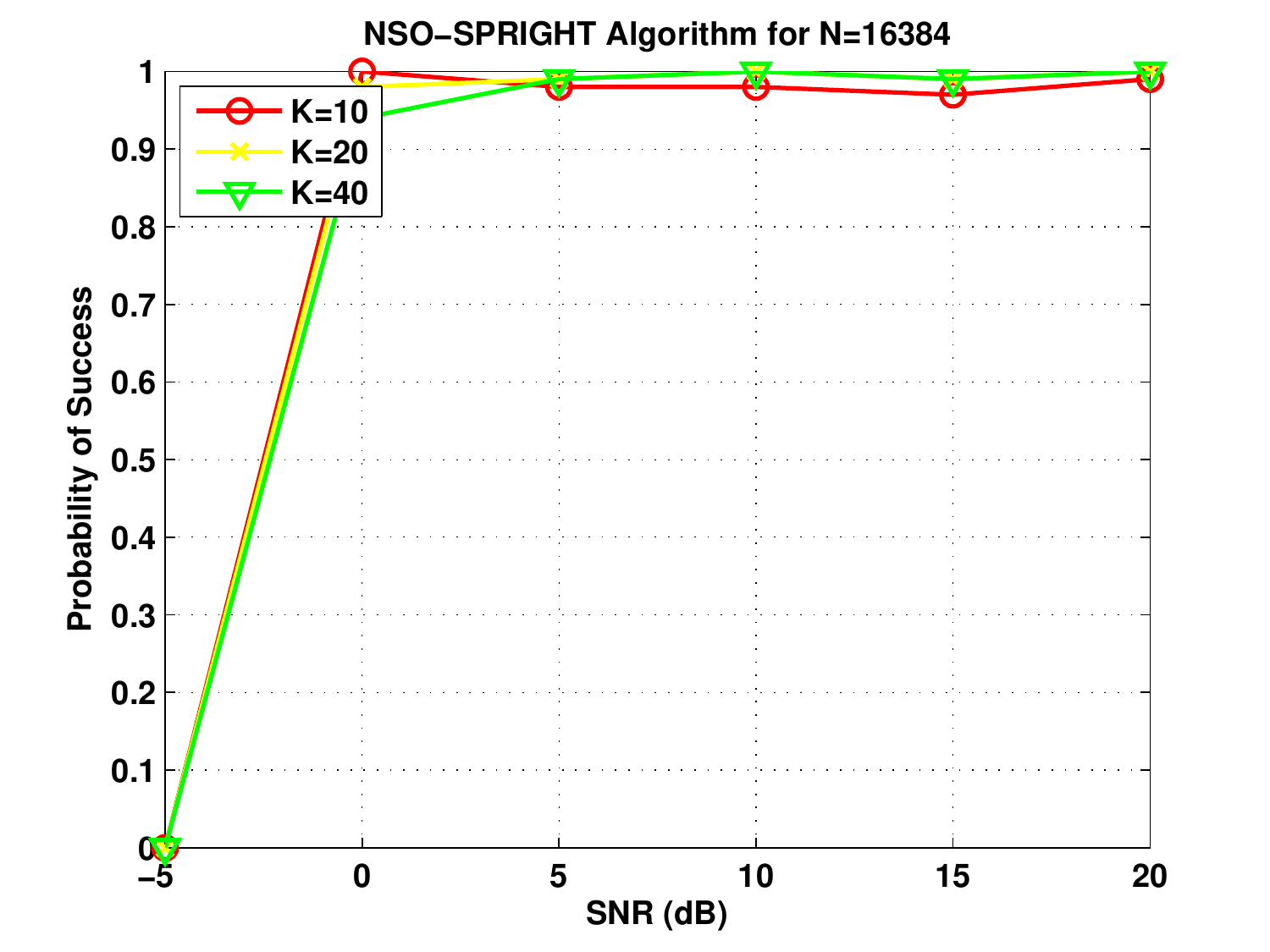}
\label{fig:NSO_prob}
}
\subfigure[Probability of success versus SNR]{
\includegraphics[width=0.48\linewidth]{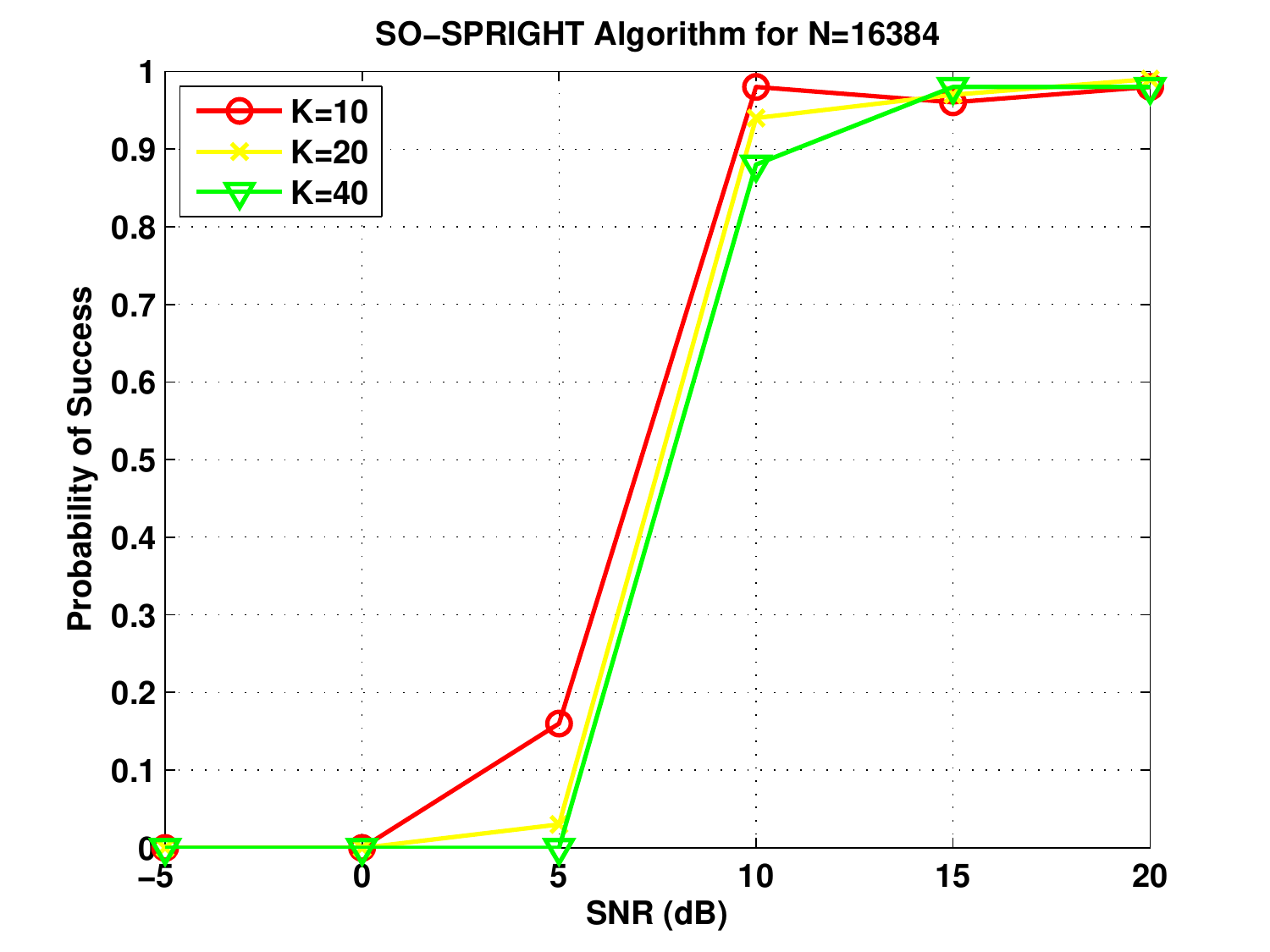}
\label{fig:SO_prob}
}
\end{center}
\vspace{-0.5cm}
\end{figure}

Note that the sample complexity of the NSO-SPRIGHT algorithm is approximately a factor of $n$ more than the SO-SPRIGHT algorithm, and thus the recovery performance is better under the same experiment setup. However, this is due to our simple choice of $(3,6)$-regular LDPC codes for inducing the offsets in the SO-SPRIGHT algorithm, which is far from capacity-achieving. Potentially one can use better LDPC code ensembles or even spatially coupled LDPC codes to provide better performance at the low SNR regime. Here the $(3,6)$-regular ensemble is simply an example to showcase the algorithm.

\subsubsection{Sample Complexity and Run-time Performance}
In this subsection, we compare the sample complexity and run-time performance of the NSO-SPRIGHT and SO-SPRIGHT algorithms. The experiment settings are given below:
\vspace{-0.1cm}
\begin{itemize}
	\item {\it input profile}: we generate a sparse WHT vector or length $N=2^n$ with $K = 10, 20, 40$ non-zero coefficients respectively and vary $n$ from $n=7$ to $n=17$. Therefore, the signal dimension spans from $N=128\approx 10^2$ to $131072\approx 0.1\times 10^6$. The non-zero WHT coefficients are chosen with uniformly random support and random amplitudes $\{\pm 1\}$. The input signal samples $\mathbf{x}$ is obtained by taking the inverse WHT of the sparse WHT vector and adding i.i.d. Gaussian noise samples with variance $\sigma^2$ determined by the $\mathsf{SNR}=10$ dB.	
	\vspace{-0.2cm}
	\item {\it benchmark}: as the signal length $N=2^n$ varies, the algorithm parameters are fixed over $200$ random experiments. We record a data point  only when the success probability exceeds $0.95$.
\end{itemize}
%
\begin{figure}[h]
\vspace{-0.3cm}
\begin{center}
\subfigure[NSO-SPRIGHT : signal length $N=2^n$ increases by $1000$ fold while the sample complexity increases by $5$ fold.]{
\includegraphics[width=0.47\linewidth]{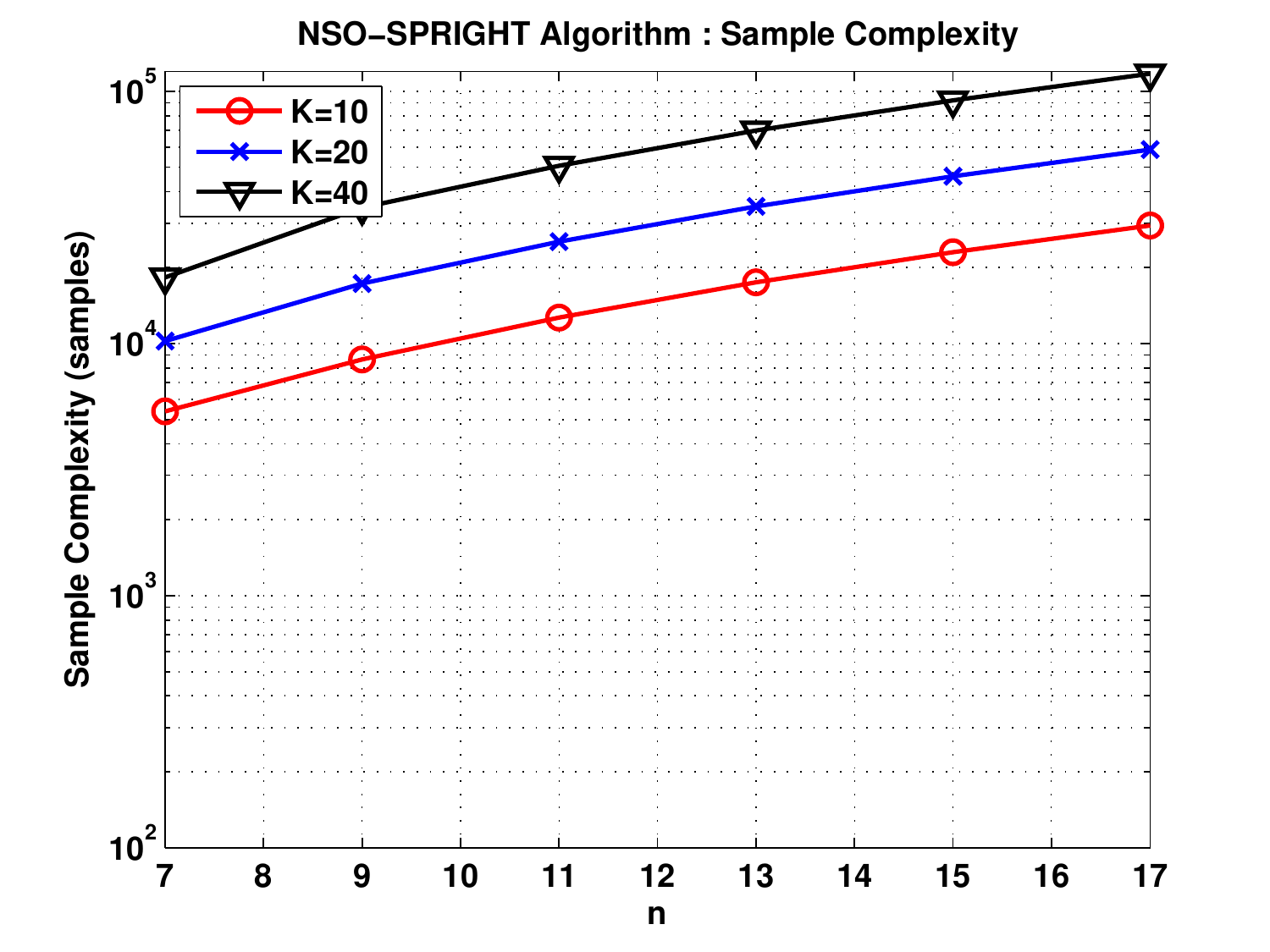}
\label{fig:NSO_meas}
}
~~
\subfigure[SO-SPRIGHT : signal length $N=2^n$ increases by $1000$ fold while the sample complexity increases by $3$ fold.]{
\includegraphics[width=0.47\linewidth]{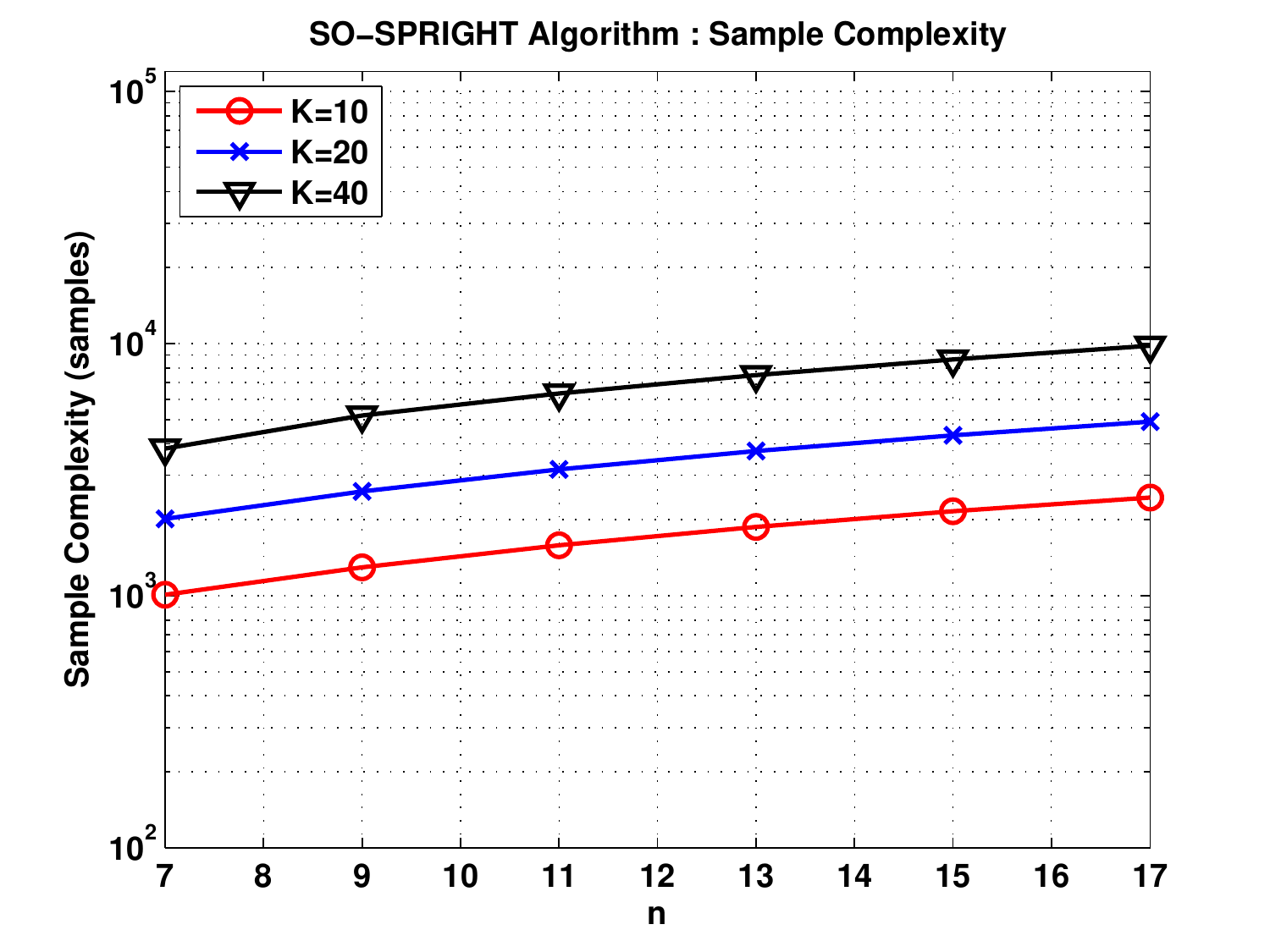}
\label{fig:SO_meas}
}\\
~\\
\subfigure[NSO-SPRIGHT : signal length $N=2^n$ increases by $1000$ fold while the run-time increases by at most $6$ fold.]{
\includegraphics[width=0.47\linewidth]{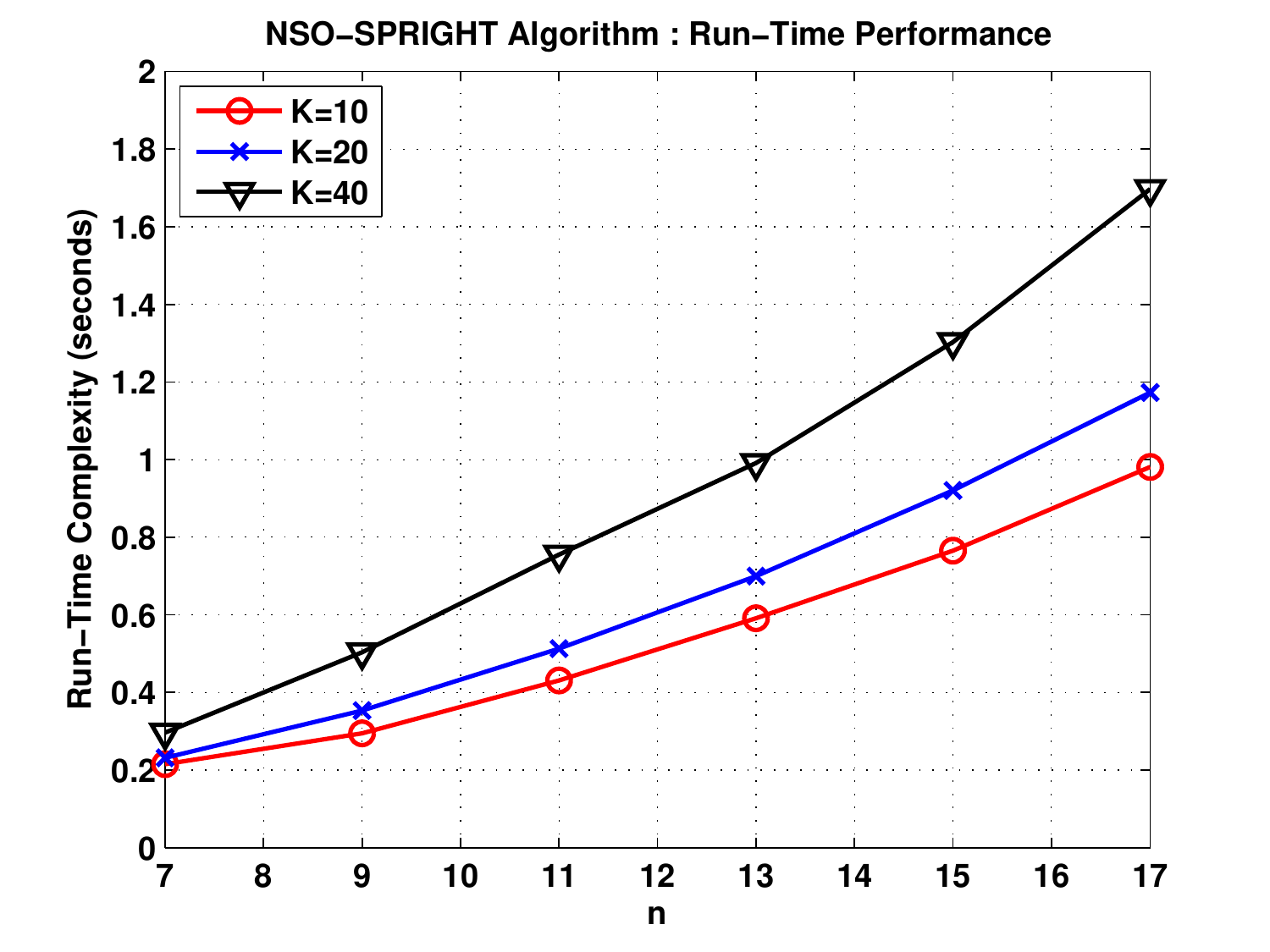}
\label{fig:NSO_time}
}
~~
\subfigure[SO-SPRIGHT : signal length $N=2^n$ increases by $1000$ fold while the run-time increases by at most $2$ fold.]{
\includegraphics[width=0.47\linewidth]{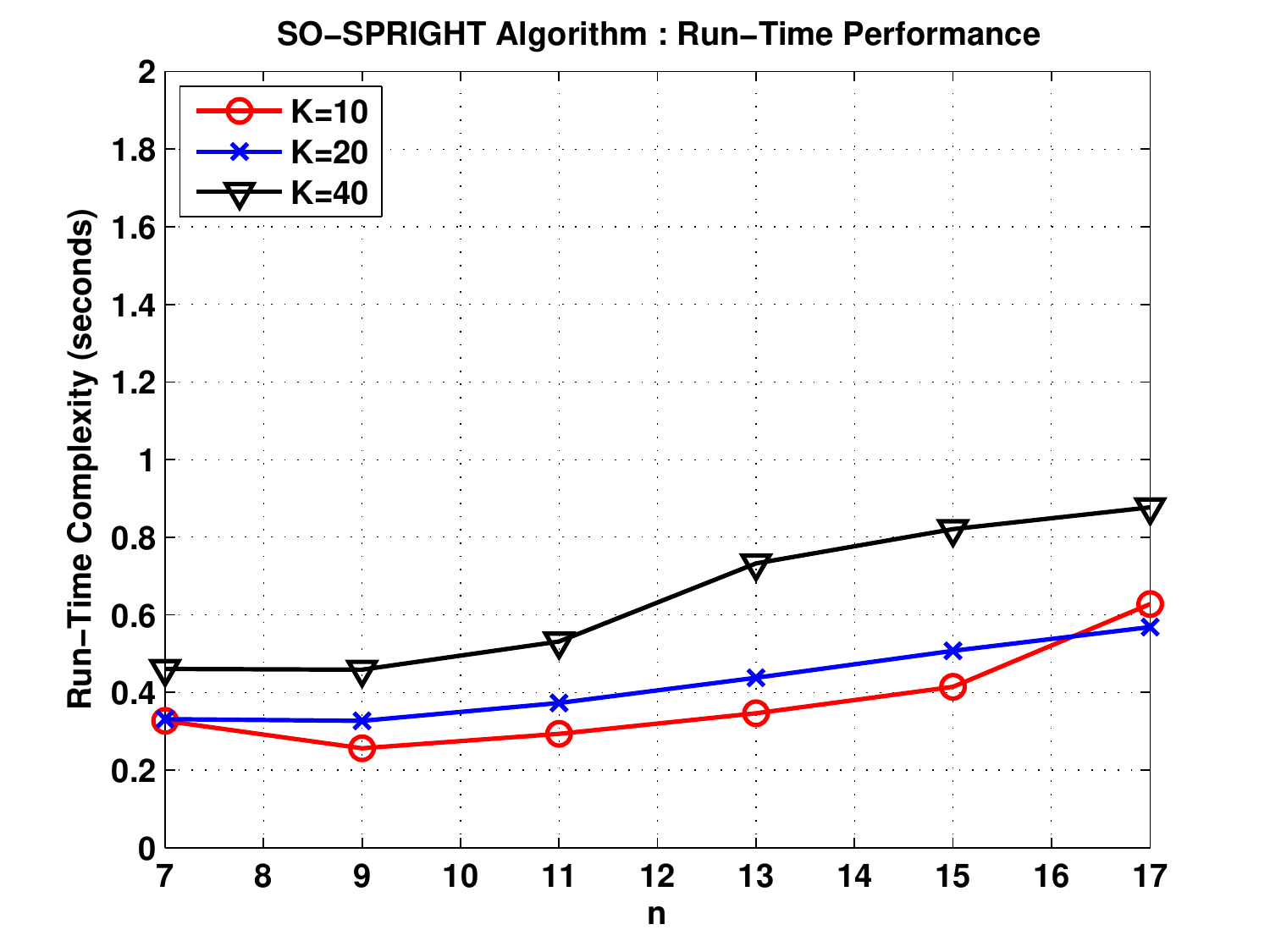}
\label{fig:SO_time}
}
\end{center}
\vspace{-0.4cm}
\caption{The plot shows the scaling of the sample complexity and run-time of the NSO-SPRIGHT and SO-SPRIGHT algorithms for inputs with varying dimensions $N=2^n$. With probability of success exceeding $0.95$ and sparsity $K=10,20,40$ at a constant SNR of $10$ dB, both the sample complexity and the run-time of the NSO-SPRIGHT and SO-SPRIGHT algorithms scale sub-linearly in $N$ (i.e. linear in $n^2$).}
\end{figure}


\section{Conclusions}

In this paper, we have proposed the SPRIGHT framework to compute a $K$-sparse $N$-point WHT, where the NSO-SPRIGHT algorithm uses ${O}(K \log^2 N) $ samples and ${O}(K \log^3 N)$ operations while the SO-SPRIGHT algorithm maintains the optimal sample scaling ${O}(K\log N)$ and complexity ${O}(K\log^2N)$ as that of the noiseless case. Our approach is based on strategic subsampling of the input noisy samples using a small set of randomly shifted patterns that are carefully designed, which achieves a vanishing failure probability.

\newpage
\bibliographystyle{IEEEtran}
\bibliography{SWHT_journal,SWHT_ISIT,Noisy_Hybrid,learning_sparse_polynomial}

\newpage
\appendix

\section*{Appendices}

\section{Proof of Theorem \ref{thm_main.result_sublinear}}\label{main.results.fast}
From Theorem \ref{thm_peeling_decoder_general}, it is shown that as long as $C\leq 8$ groups and $B={O}(K)$, the oracle-based peeling decoder succeeds with probability at least $1-{O}(1/K)$ for $0<\delta<1$. In Theorem \ref{peeling-decoder-RBI}, it is further shown that with the proposed bin detection routine using $P$ observation sets (chosen differently) in each group, the peeling decoder continues to succeed with probability at least $1-{O}(1/K)$ in the presence of noise. Therefore, the sample complexity is $M=CBP={O}(KP)$. On the other hand, the computational complexities stem from two sources:
\begin{itemize}
	\item The computation of $B$-point WHTs for subsampling: there are $P$ observations sets in each group, where each observation set requires a $B$-point WHT. Thus the total complexity is ${O}(P B\log B) = {O}(PK\log N)$, where $K = {O}(N^\delta)$ has been used;
	\item The bin detection routine in each peeling iteration for decoding:
	In the NSO-SPRIGHT scheme it is a majority vote, which leads to a complexity of ${O}(P)$. In the SO-SPRIGHT scheme it requires the decoding of a linear code formed by the $P$ offsets. As mentioned, one can potentially use (spatially coupled) LDPC or expander codes to achieve linear-time decoding ${O}(P)$, where $P$ is the block length of the code. Therefore, both sub-linear detection schemes result in a total complexity of ${O}(KP)$ throughout the ${O}(K)$ peeling iterations.
\end{itemize}
Clearly, the complexity is dominated by the subsampling $T={O}(P K\log N)$. Substituting the corresponding $P$ required by the sub-linear bin detection routines in the NSO-SPRIGHT and the SO-SPRIGHT schemes, we arrive at our stated results.

\section{Proof of Theorem \ref{thm_peeling_decoder_general} : Oracle-based Peeling Decoder Analysis}\label{sec:analysis_peeling_decoder}

\subsection{Design and Analysis for the Very Sparse Regime $0 < \delta \leq 1/3$}\label{sec:very_sparse}
To keep our discussions general, we choose $C$ subsampling groups and $B=2^b$ with $b=\delta n$ such that $B=\eta K$ for some $\eta>0$ and the subsampling matrices
\begin{align}\label{Psi_c_very_sparse}
	\mathbf{M}_c
	&=[\mathbf{0}_{(c-1)\times b}^T, \mathbf{I}_{b\times b}^T, \mathbf{0}_{(n-cb)\times b}^T]^T,\quad c \in [C],
\end{align}
which freezes a $(n-b)$-bit segment of the time domain indices $\mathbf{m}\in\GF^n$ to all zeros\footnote{The reason for $\delta = 1/3$ to be the separation point between the very sparse regime and the less sparse regime will become clear in Proposition \ref{thm_peeling_decoder} in the following section, where $C\geq 3$ is proven necessary for successful decoding with high probability. With the requirement $C\geq 3$ and 
the constraint $Cb\leq n$ due to the choice of $\mathbf{M}_c$, we have $b=\delta n$ and therefore $\delta \leq 1/3$.}. Then, each left node labeled $\mathbf{k}\in\GF^n$ is connected to a right node labeled $\bdsb{j}\in\GF^b$ determined by the aliasing pattern $\mathbf{M}_c^T\mathbf{k}=\bdsb{j}$. Therefore, the graph ensemble $\mathcal{G}(K,\eta,C,\{\mathbf{M}_c\}_{c\in[C]})$ in Definition \ref{def:graph_ensemble} is consistent with the ``{\it balls-and-bins}'' model, where the $\mathbf{k}$-th ball (i.e. left node $\mathbf{k}$) is thrown to bin $\bdsb{j}_c=\mathcal{H}_c(\mathbf{k})$ in group $c$. Now we show that given the uniform support distribution, the graph ensemble is further consistent with the random ``balls-and-bins'' model in each group.

We divide the index $\mathbf{k}$ into $C+1$ segments as $\mathbf{k}=[\bdsb{k}_1^T,\bdsb{k}_2^T,\cdots,\bdsb{k}_{C-1}^T,\bdsb{k}_C^T,\bdsb{k}_{C+1}^T]^T$, where each of the first $C$ segments $\bdsb{k}_c= [k[cb],\cdots,k[(c-1)b+1]]^T$ for $c\in[C]$ contains $b$ bits while the last segment $\bdsb{k}_{C+1}= [k[n],\cdots,k[Cb+1]]^T$ contains the remaining $(n-Cb)$ bits. 
Then, the hash functions associated with the subsampling matrices in \eqref{Psi_c_very_sparse} are $\mathcal{H}_c(\mathbf{k}) = \mathbf{M}_c^T\mathbf{k} = \bdsb{k}_c$, which sifts out the $b$-bit segment $\bdsb{k}_c$ independently out of $n$ bits from the index $\mathbf{k}$ in group $c$. We call the output of the hash function in each group the {\it bit segmentation}. Clearly, these {\it bit segmentations} can be chosen differently according to the choice of subsampling matrices $\{\mathbf{M}_c\}_{c\in[C]}$. For example, the {\it bit segmentations} in the first $3$ groups are
\begin{align}
	\bdsb{j}_1 =
	\begin{bmatrix}
		k[1]\\
		\vdots\\
		k[b]
	\end{bmatrix},
	\quad
	\bdsb{j}_2 =
	\begin{bmatrix}
		k[b+1]\\
		\vdots\\
		k[2b]
	\end{bmatrix},
	\quad	
	\bdsb{j}_3 =
	\begin{bmatrix}
		k[2b+1]\\
		\vdots\\
		k[3b]
	\end{bmatrix}.
\end{align}
Since each element $\mathbf{k}$ of the support set $\mathcal{K}$ is chosen independently and uniformly at random from $\GF^n$ by Assumption \ref{random_support_assumption}, each {\it bit segmentation} $\bdsb{j}_c = \mathcal{H}_c(\mathbf{k})$ is independently and uniformly chosen from $\{0,1\}$ for each ball. Therefore, each left ball is thrown independently into the bins on the right, which suggests that the edges from each left node to each right node are connected independently. Further, the bin index in each group $\bdsb{j}_c$ contains bit segments in $\mathbf{k}$ that are uniformly distributed, and hence each ball is thrown uniformly at random to one of the $B$ right nodes in that group.

In the following, we show that if the redundancy parameter $\eta=B/K$ is chosen appropriately for the graph ensemble $\mathcal{G}(K,\eta,C,\{\mathbf{M}_c\}_{c\in[C]})$ with $C$ subsampling groups and $\mathbf{M}_c$ chosen as \eqref{Psi_c_very_sparse}, then given the oracle, all the edges of the  graph can be peeled off in ${O}(K)$ peeling iterations with high probability. 

\begin{prop}[\bf Oracle-based Peeling Decoder Performance for $0<\delta\leq 1/3$]\label{thm_peeling_decoder}
If we use $C = 3$ groups with the set size $B=0.4073K$, where the subsampling matrices $\mathbf{M}_c$ for each group are chosen as in \eqref{Psi_c_very_sparse}, the induced  graph ensemble $\mathcal{G}(K,\eta,C,\{\mathbf{M}_c\}_{c\in[C]})$ guarantees that the oracle-based peeling decoder peels off all the edges in ${O}(K)$ iterations with probability at least $1-{O}(1/K)$.
\end{prop}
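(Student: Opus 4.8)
The plan is to follow the three-part program outlined in the proof of Theorem~\ref{thm_peeling_decoder_general}, now specialized to $C=3$ and $\eta = B/K = 0.4073$, building on the structural fact already established above: with the subsampling matrices in \eqref{Psi_c_very_sparse} and the uniform support of Assumption~\ref{random_support_assumption}, the induced graph is exactly a random left-$3$-regular bipartite graph obtained by throwing each of the $K$ variable nodes independently and uniformly into one of the $B$ bins in each of the three groups. Consequently the per-group (right) degree of a bin is $\mathrm{Binomial}(K,1/B)$, which is asymptotically $\mathrm{Poisson}(K/B)=\mathrm{Poisson}(1/\eta)$.

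First I would carry out the density-evolution step (Lemma~\ref{lem:DE_verysparse}). Assuming the depth-$2\ell$ neighborhood of a typical edge is cycle-free, I track $p_j$, the edge-perspective probability that a variable node remains undecoded after $j$ peeling iterations, with $p_0=1$. A neighboring check node fails to resolve a variable node iff at least one of its other incident variable nodes is still undecoded; averaging over the Poisson right-degree gives the check-to-variable erasure probability $1-e^{-p_j/\eta}$, and since every variable node has degree exactly $C=3$, it survives iff all $C-1=2$ of its \emph{other} checks fail. This yields the recursion
\begin{equation}\label{eq:de_plan}
  p_{j+1} = \bigl(1-e^{-p_j/\eta}\bigr)^{C-1} = \bigl(1-e^{-p_j/\eta}\bigr)^{2}.
\end{equation}
The design value $\eta=0.4073$ is (to four digits) the threshold below which a strictly positive fixed point of \eqref{eq:de_plan} appears: I would verify by a tangency analysis that at this value the curve $p\mapsto(1-e^{-p/\eta})^2$ first touches the diagonal, solving simultaneously $(1-e^{-p/\eta})^2=p$ and its derivative $2(1-e^{-p/\eta})e^{-p/\eta}/\eta=1$. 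For $\eta$ at or above this threshold the iteration drives $p_j\downarrow 0$, so for any $\epsilon>0$ a finite number of iterations $\ell(\epsilon)$ pushes the expected unpeeled-edge fraction below $\epsilon$.

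Next I would establish convergence to this deterministic trajectory (Lemma~\ref{lem:convergence2DE}). Here I expose the $CK$ edges one at a time and form a Doob martingale on the number of edges left unpeeled after $\ell$ iterations; since $\ell$ is constant, each edge revelation perturbs the count by a bounded amount, and Azuma--Hoeffding gives exponential concentration of the true unpeeled fraction about $p_\ell$. In parallel I bound, by a first-moment count of short cycles in a sparse random graph, the expected number of edges whose depth-$2\ell$ neighborhood contains a cycle by $O(1/K)$, so the tree-like assumption used in \eqref{eq:de_plan} holds for all but a vanishing fraction of edges. Together these show that with probability $1-O(1/K)$ the peeling decoder leaves at most an $\epsilon$-fraction of the variable nodes undecoded after a constant number of iterations.

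The hard part, and the step I expect to be the main obstacle, is closing the gap from ``all but $\epsilon K$'' to ``all'' variable nodes, where the expansion property enters (Lemma~\ref{lem_graph_expander}) and where the $1-O(1/K)$ guarantee is actually earned. The residual undecoded set can halt peeling only if it forms a stopping set, i.e.\ a nonempty set $S$ of variable nodes every one of whose check neighbors is hit at least twice. I would run a first-moment/union-bound argument over all candidate subsets with $1\le|S|\le \epsilon K$: for a left-$3$-regular random graph the probability that all $3|S|$ edges of $S$ land on doubly-covered checks decays fast enough that, summed over all sizes, the expected number of small stopping sets is $O(1/K)$. This estimate is delicate because the bound must be uniform over the entire range of $|S|$ and must exploit $C=3$ together with the specific $\eta$; confirming that $3$ groups already supply enough expansion throughout $0<\delta\le 1/3$ is the crux. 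Combining the three steps, with probability $1-O(1/K)$ the decoder first peels all but an $\epsilon$-fraction (Steps~1--2) and then, there being no small stopping set (Step~3), peels the remainder, recovering all $K$ coefficients; since each peeling round removes at least one variable node, the total number of iterations is $O(K)$.
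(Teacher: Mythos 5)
Your proposal follows the paper's own three-step program almost exactly: the density-evolution recursion $p_{j+1}=(1-e^{-p_j/\eta})^{C-1}$ with threshold $\eta=0.4073$ for $C=3$ (the paper takes this value from Table \ref{Table_beta}; your tangency condition---solving the fixed-point equation and its derivative condition simultaneously---is the standard way that threshold is computed), the tree-like-neighborhood plus Doob-martingale/Azuma concentration of Lemma \ref{lem:convergence2DE}, and a first-moment bound over small residual sets. Note that your ``stopping set'' formulation of the last step is precisely the paper's $(\varepsilon,1/2,C)$-expander property of Lemma \ref{lem_graph_expander} in complementary language: in each group every left node contributes exactly one edge, so ``every check neighbor of $\mathcal{S}$ is hit at least twice in every group'' is equivalent to $|\mathcal{N}_c(\mathcal{S})|\le|\mathcal{S}|/2$ for all $c\in[C]$, and your union bound is the same computation as Appendix \ref{sec:expander_graph}, with the bottleneck at constant-size sets (e.g., $s=2$: a pairwise collision in all three groups has probability $O(1/K^3)$ over $O(K^2)$ pairs, giving $O(1/K)$).

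The one genuine gap is in your concentration step: you assert that ``each edge revelation perturbs the count by a bounded amount'' because the iteration depth $\ell$ is constant. That is false for this ensemble: the check-node degrees are Poisson, hence unbounded, so the size of a depth-$2\ell$ neighborhood---and with it the martingale difference---is not deterministically bounded by any constant, and the standard Richardson--Urbanke bound $\Delta_\ell=8(d_Vd_C)^{\ell}$ is unavailable because there is no fixed $d_C$. This is exactly where the paper departs from the textbook LDPC analysis: it first shows that all $\eta K$ check nodes have degree at most $O(K^{2/(4i+1)})$ except with probability $O(K\exp(-cK^{2/(4i+1)}))$, then applies Azuma conditionally on that event with $\Delta_\ell^2=O(K^{4i/(4i+1)})$, earning the weaker but sufficient tail $2\exp(-c\varepsilon^2K^{1/(4i+1)})$. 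The same Poisson-tail issue slightly inflates your cycle-count estimate too: since neighborhood sizes are only $O(\log^iK)$ with high probability, the tree-like failure probability is $O(\log^iK/K)$, not $O(1/K)$. Both points are repairable by the paper's truncation argument and do not change the conclusion, but as written your Azuma step would not go through.
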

\begin{proof}
	The proof is given in the following subsections.
\end{proof}

\subsubsection{Density Evolution}\label{sec:DE}

\begin{wrapfigure}{r}{0.56\textwidth}
\vspace{-0.5cm}
\begin{minipage}{0.56\textwidth}
\begin{center}
\includegraphics[scale=.5]{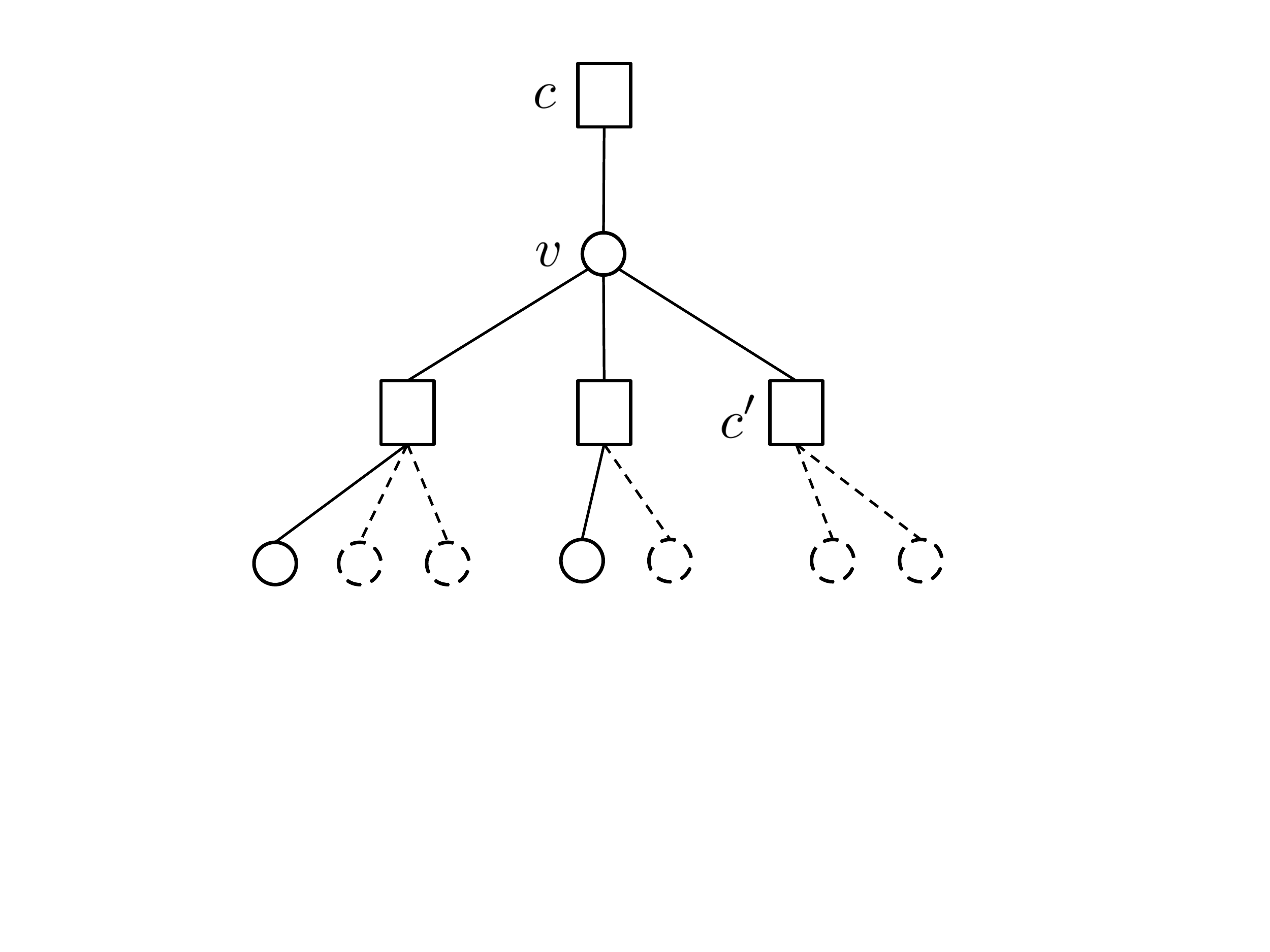}
\caption{Directed neighborhood of depth $2$ of an edge $\vec{e} = (v,c)$. The dashed lines correspond to nodes/edges removed at the end of iteration $i$. The edge between $v$ and $c$ can be potentially removed at iteration $i+1$ as one of the check nodes $c'$ is a singleton (it has no more variable nodes remaining at the end of iteration $i$).}
\vspace{-0.3cm}
\label{fig:localtree}
\end{center}
\end{minipage}
\vspace{-0.2cm}
\end{wrapfigure}

Density evolution, a powerful tool in modern coding theory, tracks the average density of remaining edges that are not decoded after a fixed number of peeling iteration $i>0$. We introduce the concept of {\it directed neighborhood} of a certain edge in the bipartite graph up to depth $\ell=2i$. This concept is important in the density evolution analysis since the peeling of an edge in the $i$-th iteration depends solely on the removal of the edges from this neighborhood in the previous $i-1$ iterations. The {\it directed neighborhood} $\mathcal{N}_{\textrm{e}}^\ell$ at depth $\ell$ of a certain edge $e = (v, c)$ is defined as the induced sub-graph containing all the edges and nodes on paths $e_1,\cdots, e_\ell$ starting at a variable node $v$ (left node) such that $e_1 \neq e$. An example of a directed neighborhood of depth $\ell=2$ is given in Fig. \ref{fig:localtree}. 

To analyze the performance of the peeling decoder over the bipartite graph, we need to understand the edge degree distributions on the left and right of the bipartite graph. Since the left edge degree distribution is already known due to the regularity of the graph ensemble induced by subsampling, next we study the right edge degree distribution.
\begin{lem}
Let $\rho_{j}$ be the fraction of edges in the bipartite graph connecting to right nodes with degree ${j}$. In the very sparse regime $0<\delta\leq 1/3$, if we use $C$ subsampling groups with subsampling matrices $\{\mathbf{M}_c\}_{c\in[C]}$ chosen as \eqref{Psi_c_very_sparse}, the edge degree sequence $\rho_j$ of the graph ensemble $\mathcal{G}(K,\eta,C,\{\mathbf{M}_c\}_{c\in[C]})$ is obtained as
\begin{align}\label{rho_d}
	\rho_{j} 
	= \frac{(1/\eta)^{{j}-1} e^{-1/\eta}}{({j}-1)!}.
\end{align}
\end{lem}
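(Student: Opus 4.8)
The plan is to leverage the balls-and-bins characterization established immediately above for the ensemble $\mathcal{G}(K,\eta,C,\{\mathbf{M}_c\}_{c\in[C]})$ in the very sparse regime. By the uniform support assumption {\bf A1} together with the choice of $\mathbf{M}_c$ in \eqref{Psi_c_very_sparse}, within each group the $K$ left nodes (balls) are thrown independently and uniformly at random into the $B=\eta K$ right nodes (bins). Since the edge degree distribution is identical across the $C$ groups by symmetry, it suffices to analyze a single group and compute the fraction of edges incident to right nodes of degree $j$. Throughout, I would use that the left degree is deterministically $C$ (left $C$-regularity), so all the randomness sits in the right-node degrees.

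First I would compute the node-perspective right-degree distribution. The degree of a fixed bin equals the number of the $K$ balls that land in it, which is distributed as $\mathrm{Binomial}(K, 1/B)$. Since $B=\eta K$ grows linearly in $K$, the mean number of balls per bin is $K/B = 1/\eta$, a constant independent of $K$. By the standard Poisson limit (law of rare events), as $K\to\infty$ this Binomial converges in distribution to a Poisson random variable with parameter $\lambda = 1/\eta$, so the probability that a given right node has degree $j$ is
\begin{align}
    P_j = \frac{\lambda^{j} e^{-\lambda}}{j!} = \frac{(1/\eta)^{j} e^{-1/\eta}}{j!}.
\end{align}

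Next I would convert from the node perspective to the edge perspective. In a single group there are exactly $K$ edges (one per ball), while the expected number of degree-$j$ bins is $B P_j = \eta K P_j$, each of which contributes $j$ edges. Hence the fraction of edges connected to degree-$j$ right nodes is
\begin{align}
    \rho_j = \frac{j\cdot \eta K P_j}{K} = j\eta P_j = \frac{(1/\eta)^{j-1} e^{-1/\eta}}{(j-1)!},
\end{align}
which is precisely the claimed degree sequence. The main obstacle in making this fully rigorous is justifying the Poisson approximation and the concentration of the empirical degree counts around their expectations; I would control the former by noting that $B=\eta K$ scales linearly in $K$, so the Binomial-to-Poisson convergence incurs a total-variation error of order $O(1/K)$, and the latter by a routine second-moment (or Poissonization) argument showing the number of degree-$j$ bins concentrates sharply about $\eta K P_j$. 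Everything else is a straightforward rewriting of the Poisson probability mass function, so I expect no further technical difficulty.
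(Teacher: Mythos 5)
Your proposal is correct and follows essentially the same route as the paper's proof: both count edges incident to degree-$j$ right nodes, approximate the bin degree $\mathrm{Binomial}(K,1/B)$ by a Poisson with rate $1/\eta$, and convert the node-perspective probability $P_j$ to the edge-perspective fraction via $\rho_j = j\eta P_j$ (the paper tallies all $C$ groups at once with $KC$ edges and $CB$ bins, but the factor $C$ cancels, so your single-group accounting is equivalent). Your added remarks on the total-variation error of the Poisson limit and the concentration of degree counts are a welcome tightening of a step the paper treats as a standard approximation, but they do not change the argument.
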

\begin{proof}
	See Appendix \ref{proof_right_edge_deg}.
\end{proof}

Now let us consider the local neighborhood $\mathcal{N}_{\textrm{e}}^{2i}$ of an arbitrary edge $e=(v,c)$  with a left regular degree ${d}$ and right degree distribution given by $\{\rho_{j}\}_{j=1}^{K}$. If the sub-graph corresponding to the neighborhood $\mathcal{N}_{\textrm{e}}^{2i}$ of the edge $e=(v,c)$ is a {\it tree} or namely {\it cycle-free}, then the peeling procedures over different bins in the first $i$ iterations are independent, which can greatly simplify our analysis. Density evolution analysis is based on the assumption that this neighborhood is cycle-free (tree-like), and we will prove later (in the next subsection) that all graphs in the regular ensemble behave like a tree when $N$ and $K$ are large and hence the actual density evolution concentrates well around the density evolution result.

Let $p_i$ be the probability of this edge being present in the bipartite graph after $i>0$ peeling iterations. If the neighborhood is a tree as in \figref{fig:localtree}, the probability $p_i$ can be written with respect to the probability $p_{i-1}$ at the previous depth in a recursive manner $p_i = \left(1-\sum_{j} \rho_{j} (1-{p}_{i-1})^{{j}-1}\right)^{C-1}$ for $i = 1,2,3,\cdots$. The term $\sum_{j} \rho_{j} (1-{p}_{i-1})^{{j}-1}$ can be approximated using the right degree generating polynomial
\begin{align}
	\rho(x) \defn \sum_{j} \rho_{j} x^{{j}-1} = e^{-(1-x)\frac{1}{\eta}},
\end{align}
where we have used \eqref{rho_d} to derive the second expression. Therefore, the density evolution equation for our peeling decoder can be obtained as
\begin{align}\label{density_evolution}
	p_i = \left(1-e^{-\frac{1}{\eta}p_{i-1}}\right)^{C-1}, i= 1,2,3,\cdots
\end{align}
Clearly, the probability $p_i$ can be made arbitrarily small for a sufficiently large but finite $i>0$ as long as $C$ and $\eta$ are chosen properly. One can find the minimum value $\eta$ for a given $C$ to guarantee $p_i<p_{i-1}$, which is shown in Table \ref{Table_beta}. Due to lack of space we only show up to $C=6$.
\begin{table}[h]
\begin{center}
\begin{tabular}{|c|c|c|c|c|c|c|c|c|}
  \hline
  $C$ & 2 & 3& 4 & 5 & 6\\
  \hline
  $\eta$ &1.0000 &   0.4073   & 0.3237 &   0.2850  &  0.2616  \\
  \hline
  $C\eta$ & 2.0000 &   1.2219 &   1.2948 &   1.4250  &  1.5696\\ 
  \hline
\end{tabular}
\vspace{-0.4cm}
\end{center}
\caption{Minimum value for $\eta$ given the number of groups $C$}\label{Table_beta}
\vspace{-0.4cm}
\end{table}

\begin{lem}[\bf Density evolution $0<\delta\leq 1/3$]\label{lem:DE_verysparse}
Let $\mathcal{G}(K,\eta,C,\{\mathbf{M}_c\}_{c\in[C]})$ be the graph ensemble induced by subsampling with $C$ subsampling groups using subsampling matrices $\{\mathbf{M}_c\}_{c\in[C]}$ in \eqref{Psi_c_very_sparse} in the very sparse regime $0<\delta\leq 1/3$, where the number of groups $C$ and the redundancy parameter $\eta$ chosen from Table \ref{Table_beta}. Denote by $\mathcal{T}_{i}$ the event where the local $2i$-neighorhood $\mathcal{N}_{\textrm{e}}^{2{i}}$ of every edge in the graph is tree-like and let $Z_{i}$ be the total number of edges that are not decoded after $i$ (an arbitrarily large but fixed) peeling iterations. For any $\varepsilon>0$, there exists a finite number of iteration $i>0$ such that 
	\begin{align}\label{mean_analysis_result}
		\mathbb{E}[Z_{i}|\mathcal{T}_{i}] = K C \varepsilon/4,
	\end{align}
where the expectation is taken with respect to the random graph ensemble $\mathcal{G}(K,\eta,C,\{\mathbf{M}_c\}_{c\in[C]})$.
\end{lem}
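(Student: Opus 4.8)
The plan is to combine the density evolution recursion \eqref{density_evolution}, already derived above, with a linearity-of-expectation argument, thereby reducing the lemma to a one-dimensional statement about the convergence of a scalar recursion under the parameters of Table \ref{Table_beta}.

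First I would fix an arbitrary edge $e=(v,c)$ and argue that, conditioned on $\mathcal{T}_i$, the probability that $e$ survives $i$ peeling iterations equals exactly $p_i$. The point is that on a tree-like neighborhood the survival events of the edges feeding into the $C-1$ check neighbors of $v$ other than $c$ are mutually independent, so the recursion $p_i=\left(1-\rho(1-p_{i-1})\right)^{C-1}$ holds \emph{with equality} rather than as an approximation; substituting the edge-degree generating polynomial $\rho(x)=e^{-(1-x)/\eta}$ from \eqref{rho_d} yields precisely \eqref{density_evolution} with initial condition $p_0=1$. Here I rely on the fact, established earlier in this subsection, that the subsampling ensemble is consistent with the random balls-and-bins model, so that each variable node is hashed independently and uniformly; this is what justifies both the degree sequence $\rho_j$ and the independence used on the tree. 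Then, by linearity of expectation over the $KC$ edges of the left $C$-regular graph and the symmetry of the ensemble (every edge has the same marginal survival probability $p_i$ under $\mathcal{T}_i$), I would conclude $\mathbb{E}[Z_i\mid\mathcal{T}_i]=KC\,p_i$, leaving only the task of controlling $p_i$.

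The heart of the argument, and the step I expect to be the main obstacle, is to show that the iteration $p_i=f(p_{i-1})$ with $f(p)=(1-e^{-p/\eta})^{C-1}$ drives $p_i$ to $0$. The plan is to verify that for the pair $(C,\eta)$ taken from Table \ref{Table_beta}, one has $f(p)<p$ for all $p\in(0,1]$; this is exactly the threshold condition defining the tabulated minimal $\eta$. Since $f$ is continuous and increasing on $[0,1]$ with $f(0)=0$, the sequence started at $p_0=1$ is then monotonically decreasing and bounded below by $0$, hence converges to a fixed point $L\in[0,1)$, and the strict inequality $f(p)<p$ on $(0,1]$ forces $L=0$, so $p_i\to 0$. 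Establishing $f(p)<p$ is a one-dimensional estimate on $g(p)=f(p)-p$: near the origin $f(p)\approx(p/\eta)^{C-1}$, which for $C\geq 3$ vanishes strictly faster than $p$, so $g<0$ in a neighborhood of $0$; away from $0$ the claim depends on the specific numerical value of $\eta$, and I would confirm it either by a monotonicity/convexity analysis of $g$ or by the numerical check that produced Table \ref{Table_beta}.

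Finally, with $p_i\to 0$ in hand, for any $\varepsilon>0$ I would choose the finite iteration index $i$ so that $p_i\leq\varepsilon/4$, which gives $\mathbb{E}[Z_i\mid\mathcal{T}_i]=KC\,p_i\leq KC\varepsilon/4$, the claimed target. Crucially, this $i$ depends only on $\varepsilon$, $C$, and $\eta$, and is independent of $K$ and $N$; this independence is precisely what the subsequent concentration (Lemma \ref{lem:convergence2DE}) and expander (Lemma \ref{lem_graph_expander}) arguments require in order to conclude that the actual decoder behaves like the density evolution prediction with high probability.
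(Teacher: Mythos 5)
Your proposal follows essentially the same route as the paper: the paper likewise writes $Z_i=\sum_e Z_i^{\mathrm{e}}$, uses the conditional independence of edge removals on the tree-like neighborhood to get $\mathbb{E}[Z_i\mid\mathcal{T}_i]=KC\,p_i$ with $p_i$ governed by the recursion \eqref{density_evolution}, and then invokes the Table \ref{Table_beta} condition ($f(p)<p$, i.e.\ $p_i<p_{i-1}$) to choose a finite $i$ with $p_i\leq\varepsilon/4$. Your fleshed-out monotone-convergence analysis of the scalar map $f(p)=(1-e^{-p/\eta})^{C-1}$ is just a more explicit rendering of the same threshold argument, so the proof is correct and matches the paper's.
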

\begin{proof}
	See Appendix \ref{sec:density_evolution}.	
\end{proof}	
Based on this lemma, we can see that if the bipartite graph has a local neighborhood that is tree-like up to depth $2i$ for every edge, the peeling decoder on average peels off all but an arbitrarily small fraction of the edges.

\subsubsection{Convergence to Density Evolution}\label{lem:convergence2DE}
Given the mean performance analysis (in terms of the number of undecoded edges) over cycle-free graphs, now we provide a {\it concentration analysis} on the number of the undecoded edges $Z_{i}$ for {\it any graph from the ensemble $\mathcal{G}(K,\eta,C,\{\mathbf{M}_c\}_{c\in[C]})$} at the $i$-th iteration, by showing that $Z_{i}$ converges to the density evolution.
\begin{lem}[\bf Convergence to density evolution for $0<\delta\leq 1/3$]\label{lem:convergence2DE}
Over the probability space of all graphs from $\mathcal{G}(K,\eta,C,\{\mathbf{M}_c\}_{c\in[C]})$,  let $p_i$ be as given in the density evolution \eqref{density_evolution}. Given any $\varepsilon>0$ and a sufficiently large $K$, there exists a constant $c>0$ such that
\begin{align}
	&{\tt (i)}~\quad\quad \mathbb{E}[Z_{i}] < K C \varepsilon/2 \label{mean_on_general_graph}\\
	&{\tt (ii)} \quad ~~~  \Prob{\left|Z_{i}-\mathbb{E}[Z_{i}]\right|>K C \varepsilon/2} 
	\leq
	2\exp\left(-c \varepsilon^2 K^{\frac{1}{4i+1}}\right).\label{concentration_DE}
\end{align}
\end{lem}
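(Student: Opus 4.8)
The plan is to follow the standard two-stage ``density-evolution-plus-concentration'' program from modern coding theory (\cite{luby2001efficient,richardson2001capacity}, adapted to Poisson-degree hashing graphs as in \cite{pedarsani2014phasecode}), proving \eqref{mean_on_general_graph} and \eqref{concentration_DE} separately. Claim (i) lifts the tree-conditional estimate of Lemma \ref{lem:DE_verysparse} to an \emph{unconditional} bound on $\mathbb{E}[Z_i]$, and claim (ii) shows that $Z_i$ concentrates around $\mathbb{E}[Z_i]$ via a Doob (ball-exposure) martingale combined with the Azuma--Hoeffding inequality.

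For (i), I would decompose the expected number of undecoded edges according to whether the depth-$2i$ directed neighborhood of each edge is tree-like. Writing $Z_i = \sum_e \mathbf{1}\{e~\text{undecoded after}~i~\text{iterations}\}$ and recalling that the decoding status of $e$ in the first $i$ iterations depends only on $\mathcal{N}_{\textrm{e}}^{2i}$, Lemma \ref{lem:DE_verysparse} bounds the contribution of the tree-like edges by $KC\varepsilon/4$. It then remains to bound the contribution of edges whose neighborhood contains a cycle. Since the graph is consistent with the random balls-and-bins model (Section \ref{sec:very_sparse}), the probability that a fixed edge has a cyclic depth-$2i$ neighborhood is $O(1/K)$ (with a constant depending on $i$ and on the mean degree $1/\eta$); summing over the $KC$ edges shows that the expected number of such edges is $O(1)$, a constant independent of $K$. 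Hence $\mathbb{E}[Z_i]\le KC\varepsilon/4 + O(1) < KC\varepsilon/2$ for all sufficiently large $K$, establishing \eqref{mean_on_general_graph}.

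For (ii), I would expose the randomness of the graph one ball-throw (equivalently, one variable node together with its $C$ edges) at a time, forming the Doob martingale $Y_t = \mathbb{E}[Z_i \mid \text{hash values of the first } t \text{ variable nodes}]$, so that $Y_0 = \mathbb{E}[Z_i]$ and $Y_K = Z_i$. The key estimate is a bound on the martingale differences $|Y_t - Y_{t-1}|$: moving a single ball can only change the decoding outcome of edges lying within its depth-$2i$ neighborhood, so $|Y_t - Y_{t-1}|$ is at most a constant multiple of the number of edges in such a neighborhood, which for left-degree $C$ and maximum right-degree $d$ is $O((Cd)^i)$. Applying Azuma--Hoeffding with $K$ steps and difference bound $O((Cd)^i)$ to the deviation threshold $KC\varepsilon/2$ yields a bound of the form $2\exp(-\Omega(\varepsilon^2 K/(Cd)^{2i}))$.

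The main obstacle is that the right-degree distribution is Poisson with constant mean $1/\eta$ and hence \emph{unbounded}, so $d$ cannot be replaced by an absolute constant and the martingale differences are not uniformly bounded. I would resolve this by truncating the maximum degree at a polynomial threshold $d^\star = \Theta(K^{2/(4i+1)})$: on the event $\{d_{\max}\le d^\star\}$ the above Azuma bound becomes $2\exp(-\Omega(\varepsilon^2 K^{1/(4i+1)}))$, which is exactly the stated exponent, while off this event the Poisson upper tail together with a union bound over the $CB = O(K)$ bins gives $\Prob{d_{\max} > d^\star} \le O(K)\exp(-\Omega(d^\star \log d^\star))$, a quantity that is super-polynomially small and therefore negligible relative to the target bound. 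Care is needed to apply the bounded-difference inequality in the presence of this exceptional event --- for instance by invoking a McDiarmid-type inequality with an exceptional set, or by bounding each difference by the true neighborhood size and controlling the rare large-neighborhood contributions separately, exactly as in \cite{richardson2001capacity}. Balancing the degree threshold against the number of martingale steps is precisely what produces the characteristic exponent $K^{1/(4i+1)}$, with the depth $2i$ of the local neighborhood entering through the $(Cd^\star)^{2i}$ factor.
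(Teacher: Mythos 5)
Your proposal matches the paper's proof essentially step for step: part (i) is the same tree/non-tree decomposition lifting the conditional estimate of Lemma \ref{lem:DE_verysparse} via a per-edge bound on the probability of a cyclic depth-$2i$ neighborhood, and part (ii) is the same Doob martingale plus Azuma--Hoeffding argument, truncating the Poisson check-node degrees at $\Theta\left(K^{2/(4i+1)}\right)$ and absorbing the exceptional event through a Poisson-tail union bound over the $O(K)$ bins, which produces the identical exponent $K^{1/(4i+1)}$. The only inconsequential deviations are that you expose variable nodes ($K$ steps) where the paper exposes edges ($KC$ steps, equivalent since $C$ is constant), and you state the non-tree-like probability as $O(1/K)$ where the paper proves the slightly weaker $O\left(\log^{i}K/K\right)$ --- which still yields $\mathbb{E}[Z_i]<KC\varepsilon/2$ for sufficiently large $K$.
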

\begin{proof}
	We provide a {\it concentration analysis} in Appendix \ref{sec:concentration_analysis} on the number of the remaining edges for an {\it arbitrary graph from the ensemble} by showing that $Z_{i}$ converges to the mean analysis result. Here is a sketch of the proof:
\begin{itemize}
	\item {\it Mean analysis on general graphs from ensembles}: first, we use a counting argument similar to \cite{pedarsani2014phasecode} to show that any random graph from the ensemble $\mathcal{G}(K,\eta,C,\{\mathbf{M}_c\}_{c\in[C]})$ behaves like a {\it tree} with high probability. Therefore, the expected number of remaining edges can be made arbitrarily close to the mean analysis $|\mathbb{E}[Z_{i}]-\mathbb{E}[Z_{i}|\mathcal{T}_{i}] |<K C \varepsilon/4$ such that $\mathbb{E}[Z_{i}]<K C \varepsilon/2$ if $N$ and $K$ are greater than some constants.
	\item {\it Concentration to mean by large deviation analysis}: we use a Doob martingale argument as in \cite{richardson2001capacity} to show that the actual number of remaining edges $Z_{i}$ well concentrates around its mean $\mathbb{E}[Z_{i}]$ with an exponential tail in $K$ such that $\Prob{\left|Z_{i}-\mathbb{E}[Z_{i}]\right|>K C \varepsilon/2} \leq 2\exp\left(-c_4 \varepsilon^2 K^{\frac{1}{4i+1}}\right)$ for some constant $c_4>0$.
\end{itemize}
\end{proof}

\subsubsection{Complete Decoding through Graph Expanders}\label{sec:expander}
From previous analyses, it has already been established that with high probability, our peeling decoder terminates with an arbitrarily small fraction of edges undecoded
	\begin{align}\label{undecoded_edges}
		Z_{i} &< K C \varepsilon,\quad \forall \varepsilon>0,
	\end{align}
where ${d}$ is the left degree. In this section, we show that the all the undecoded edges can be completely decoded if the sub-graph consisting of the remaining undecoded edges is a ``good-expander''. Since there are many notions of ``graph expanders'', we introduce the concept of graph expander with respect to the graph ensemble $\mathcal{G}(K,\eta,C,\{\mathbf{M}_c\}_{c\in[C]})$ in this paper, which is induced by subsampling.
\begin{defi}[Graph Expander]\label{def:graph_expander}
A $C$-regular graph with $K$ left nodes and $C$ subsampling groups of $B=\eta K$ right nodes is called a $(\varepsilon,1/2,C)$-expander if for all subsets $\mathcal{S}$ of left nodes with $|\mathcal{S}|\leq \varepsilon K$, there exists a right neighborhood in some group $c$, denoted by $\mathcal{N}_c(\mathcal{S})$, that satisfies $|\mathcal{N}_c(\mathcal{S})| > |\mathcal{S}|/2$ for some $c\in[C]$.
\end{defi}

\begin{lem}[\bf Graph expansion property for $0<\delta\leq 1/3$]\label{lem_graph_expander}
In the very sparse regime $0<\delta\leq 1/3$, if we use $C\geq 3$ groups with subsampling matrices $\{\mathbf{M}_c\}_{c\in[C]}$ chosen as \eqref{Psi_c_very_sparse}, then any graph from the ensemble $\mathcal{G}(K,\eta,C,\{\mathbf{M}_c\}_{c\in[C]})$ is a $(\varepsilon,1/2,C)$-expander with probability at least $1-{O}(1/K)$ for some sufficiently small but constant $\varepsilon>0$.
\end{lem}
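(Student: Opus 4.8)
The goal is to show that a $C$-regular random graph from the ensemble $\mathcal{G}(K,\eta,C,\{\mathbf{M}_c\}_{c\in[C]})$ is a $(\varepsilon,1/2,C)$-expander with probability $1-O(1/K)$, where the balls-and-bins structure established earlier guarantees that each left node is independently and uniformly hashed into one of $B=\eta K$ right bins in each of the $C$ groups. The natural approach is a union bound (first-moment method) over all ``bad'' sets: a set $\mathcal{S}$ of left nodes is bad if, in \emph{every} group $c\in[C]$, its neighborhood satisfies $|\mathcal{N}_c(\mathcal{S})|\leq|\mathcal{S}|/2$. Let me write $\Pf^{\text{exp}}$ for the probability that some set of size $s\leq\varepsilon K$ is bad, and bound it by summing over $s$ from $2$ (or the smallest meaningful size) up to $\varepsilon K$.

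**The per-set probability and the union bound.**

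First I would fix a set $\mathcal{S}$ with $|\mathcal{S}|=s$ and bound the probability that a single group $c$ fails to expand it, i.e. that $s$ balls thrown uniformly into $B$ bins occupy at most $s/2$ bins. This is a standard balls-and-bins concentration estimate: forcing $s$ balls into a target set of $t\leq s/2$ bins, the probability is at most $\binom{B}{t}(t/B)^s$, and summing over $t\leq s/2$ gives a bound of the form $\big(c_1 s/B\big)^{s/2}$ for some absolute constant $c_1$, since requiring half the balls to collide is exponentially unlikely when $s\ll B$. Because the $C$ groups use independent hashes, the probability that \emph{all} $C$ groups fail simultaneously is the $C$-th power, roughly $\big(c_1 s/B\big)^{Cs/2}$. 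Then I would apply the union bound over the $\binom{K}{s}\leq (eK/s)^s$ choices of $\mathcal{S}$, yielding a per-size contribution bounded by
\begin{align}
\binom{K}{s}\left(\frac{c_1 s}{B}\right)^{Cs/2}
\leq
\left[\frac{eK}{s}\left(\frac{c_1 s}{\eta K}\right)^{C/2}\right]^s.
\end{align}

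**Making the sum converge and identifying the obstacle.**

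The heart of the argument is showing $\sum_{s=2}^{\varepsilon K}(\cdots)^s = O(1/K)$. Writing the bracketed base as $\big(eK/s\big)\big(c_1 s/(\eta K)\big)^{C/2} = e\,c_1^{C/2}\eta^{-C/2}\,(s/K)^{C/2-1}$, one sees the crucial role of $C\geq 3$: the exponent $C/2-1\geq 1/2>0$ is \emph{strictly positive}, so the base shrinks as $s/K$ decreases, and for $s\leq\varepsilon K$ with $\varepsilon$ small enough the base is bounded below $1$. This is precisely why $C\geq 3$ appears in the hypothesis and why expansion fails for $C=2$. The dominant term in the sum is the smallest $s$; for the smallest sets (say $s$ constant) the base is $O((1/K)^{C/2-1})$, and raising to the power $s\geq 2$ gives the $O(1/K)$ tail, while the geometric decay handles the larger-$s$ terms. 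I would split the sum into small $s$ (constant range, handled by the per-set smallness) and the bulk $s=\Theta(K)$ range (handled by the exponential decay of the base to the $s$ power), choosing $\varepsilon$ small enough that the base stays uniformly bounded away from $1$ on the whole range.

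The main obstacle I anticipate is the small-$s$ regime: for constant-sized $\mathcal{S}$ the balls-and-bins collision bound is weakest, and one must verify carefully that the combined factor still decays like $1/K$ rather than stalling at a constant. Getting the exponent bookkeeping exactly right — confirming that $C\geq 3$ forces $(s/K)^{C/2-1}$ to beat the $\binom{K}{s}$ entropy factor $eK/s$ — is where the constant $\varepsilon$ must be pinned down, and where the analysis genuinely uses the left-degree $C\geq 3$ rather than just $C\geq 2$. I expect the remaining steps (the single-group collision estimate and the independence across groups) to be routine once this exponent condition is secured.
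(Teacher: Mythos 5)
Your proposal is correct and follows essentially the same route as the paper's proof in Appendix \ref{sec:expander_graph}: a first-moment union bound over the $\binom{K}{s}$ candidate sets, the per-group confinement estimate $\binom{\eta K}{s/2}\left(\frac{s}{2\eta K}\right)^s$ (your sum over $t\leq s/2$ is an equivalent variant), the $C$-th power from independence of the hashes across groups, and the observation that $C\geq 3$ makes the exponent $C/2-1\geq 1/2$ strictly positive so that the bound becomes $\left(sc^2/K\right)^{s/2}$, with the constant-size sets $s={O}(1)$ as the bottleneck contributing the ${O}(1/K)$ term and the $s=\Theta(K)$ range decaying exponentially. Your exponent bookkeeping and identification of where $C\geq 3$ (rather than $C\geq 2$) is used match the paper exactly.
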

\begin{proof}
	See Appendix \ref{sec:expander_graph}.
\end{proof}
Without loss of generality, let the $Z_{i}$ undecoded edges be connected to a set of left nodes $\mathcal{S}$. Since each left node has degree $C$, it is obvious from \eqref{undecoded_edges} that $|\mathcal{S}|\leq K\varepsilon$ with high probability. Note that our peeling decoder fails to decode the set $\mathcal{S}$ of left nodes if and only if there are no more single-ton right nodes in the neighborhood of $\mathcal{S}$. A sufficient condition for all the right nodes in at least one group $\mathcal{N}_c(\mathcal{S})$ to have at least one single-ton is that the corresponding average degree is less than $2$, which implies that $|\mathcal{S}|/|\mathcal{N}_c(\mathcal{S})| \leq 2$ and hence $|\mathcal{N}_c(\mathcal{S})|\geq |\mathcal{S}|/2$. Since we have shown in Lemma \ref{lem_graph_expander} that any graph from the regular ensemble $\mathcal{G}(K,\eta,C,\{\mathbf{M}_c\}_{c\in[C]})$ is a $(\varepsilon,1/2,C)$-expander with high probability such that there is at least one group $|\mathcal{N}_c(\mathcal{S})|\geq |\mathcal{S}|/2$ for some $c$, there will be sufficient single-tons to peel off all the remaining edges.

\subsection{Design and Analysis of a Specific Less Sparse Regime $\delta = 1-1/C$}\label{sec:less_sparse}
From now on, we address the design and analysis for the less sparse regime $1/3<\delta < 1$. For convenience,  we start by discussing the case $\delta = 1-1/C$ where $C$ is the number of subsampling groups. Then, we generalize our design in Section \ref{sec:less_sparse_general} to tackle arbitrary sparsities $\delta\in(1/3,1)$ using the basic constructions for sparsity $\delta = 1-1/C$.  We let $t=n/C$ such that $B=2^b$ with $b=(C-1)t$ and $B=\eta K$ for some $\eta>0$. The subsampling matrices are chosen differently by 
\begin{align}\label{Psi_c_less_sparse}
	\mathbf{M}_c
	=
	\begin{bmatrix}
		\mathbf{I}_{(c-1)t \times (C-c)t} & \mathbf{0}_{(c-1)t \times (c-1)t}\\
		\mathbf{0}_{t \times (C-c)t} & \mathbf{0}_{t \times (c-1)t}\\
		\mathbf{0}_{(C-c)t \times (C-c)t} & \mathbf{I}_{(C-c)t \times (c-1)t}
	\end{bmatrix},\quad c\in[C],
\end{align}
which freezes a $t$-bit segment of the time domain indices $\mathbf{m}\in\GF^n$ to all zeros. 

\subsubsection{Random Graph Ensemble in the Less Sparse Regime $\delta = 1-1/C$}
For convenience, we divide $\mathbf{k}=[\bdsb{k}_1^T,\cdots,\bdsb{k}_{C-1}^T,\bdsb{k}_C^T]^T$ into $C$ pieces of $n/C$-bit segments with 
\begin{align}
	\bdsb{k}_c=[k[cn/C],\cdots, k[(c-1)n/C+1]]^T. 
\end{align}	
Then in this regime, the hash functions associated with \eqref{Psi_c_less_sparse} are defined as
\begin{align}
	\mathcal{H}_c(\mathbf{k})  = \mathbf{M}_c^T\mathbf{k} = [\bdsb{k}_1^T,\cdots,\bdsb{k}_{c-1}^T,\bdsb{k}_{c+1}^T,\cdots,\bdsb{k}_C^T]^T,\quad c\in[C],
\end{align}
which produces a {\it bit segmentation} that sifts out all but one segment $\bdsb{k}_c$ cyclically. Using this set of subsampling matrices (i.e. hash functions), the graph ensemble $\mathcal{G}(K,\eta,C,\{\mathbf{M}_c\}_{c\in[C]})$ in Definition \ref{def:graph_ensemble} is also consistent with the ``balls-and-bins'' model. For example, when $C=3$ and $\delta = 2/3$ such that $t=n/3$, the subsampling matrices are chosen as
\begin{align}
	\mathbf{M}_1
	=
	\begin{bmatrix}
		\mathbf{0}_{\frac{n}{3} \times \frac{n}{3}} & \mathbf{0}_{\frac{n}{3} \times \frac{n}{3}}\\	
		\mathbf{I}_{\frac{n}{3} \times \frac{n}{3}} & \mathbf{0}_{\frac{n}{3} \times \frac{n}{3}}\\
		\mathbf{0}_{\frac{n}{3} \times \frac{n}{3}} & \mathbf{I}_{\frac{n}{3} \times \frac{n}{3}}
	\end{bmatrix},
	\quad
	\mathbf{M}_2
	=
	\begin{bmatrix}
		\mathbf{I}_{\frac{n}{3} \times \frac{n}{3}} & \mathbf{0}_{\frac{n}{3} \times \frac{n}{3}}\\
		\mathbf{0}_{\frac{n}{3} \times \frac{n}{3}} & \mathbf{0}_{\frac{n}{3} \times \frac{n}{3}}\\
		\mathbf{0}_{\frac{n}{3} \times \frac{n}{3}} & \mathbf{I}_{\frac{n}{3} \times \frac{n}{3}}
	\end{bmatrix},
	\quad
	\mathbf{M}_3
	=
	\begin{bmatrix}		
		\mathbf{I}_{\frac{n}{3} \times \frac{n}{3}} & \mathbf{0}_{\frac{n}{3} \times \frac{n}{3}}\\
		\mathbf{0}_{\frac{n}{3} \times \frac{n}{3}} & \mathbf{I}_{\frac{n}{3} \times \frac{n}{3}}\\
		\mathbf{0}_{\frac{n}{3} \times \frac{n}{3}} & \mathbf{0}_{\frac{n}{3} \times \frac{n}{3}}		
	\end{bmatrix}.
\end{align}
and the bin indices corresponding to the ball $\mathbf{k}$ in the $3$ groups are given by
\begin{align}
	\bdsb{j}_1 =
	\begin{bmatrix}
		k[n/3+1]\\
		\vdots\\
		k[n]
	\end{bmatrix},
	\quad
	\bdsb{j}_2 =
	\begin{bmatrix}
		k[1]\\
		\vdots\\
		k[n/3]\\
		k[2n/3+1]\\
		\vdots\\
		k[n]
	\end{bmatrix},
	\quad	
	\bdsb{j}_3 =
	\begin{bmatrix}
		k[1]\\
		\vdots\\
		k[2n/3]
	\end{bmatrix}.
\end{align}
Same as the very sparse case, since each {\it bit segmentation} $\bdsb{j}_c = \mathcal{H}_c(\mathbf{k})$ is independently and uniformly at random from $\GF^n$ by Assumption \ref{random_support_assumption}, the bit patterns $k[i]$ for $i\in[n]$ are independently and uniformly chosen from $\{0,1\}$ for each ball. Therefore, each left ball is thrown independently into the bins on the right, which suggests that the edges from each left node to each right node are connected independently. Further, the bin index in each group $\bdsb{j}_c$ contains bit segments in $\mathbf{k}$ that are uniformly distributed, and hence each ball is thrown uniformly at random to one of the $B$ right nodes in that group. Therefore, due to the independence and uniformity of the support distribution $\mathbf{k}$, the graph ensemble is consistent with the random ``balls-and-bins'' model in each group.

\subsubsection{Peeling Decoder over the Graph Ensemble in the Less Sparse Regime $\delta= 1-1/C$}
The analysis of the peeling decoder in the less sparse regime depends on the graphs induced by subsampling. Note that the key difference of the graphs associated with the less sparse case from the very sparse case is that for each ball $\mathbf{k}$, although the edges are connected uniformly and independently to $B$ bins in each group, they are no longer connected independently across different groups. However, since the graph ensemble is consistent with the ``balls-and-bins'' model in each group, it can be easily shown that the density evolution analysis and concentration analysis carry over to the less sparse regime based on the analysis in Section \ref{sec:very_sparse}. However, there are some key differences in the graph expansion properties due to the lack of independence across different groups. In this section, we focus on proving the graph expansion properties for the graph ensemble in the less sparse regime.

\begin{lem}[\bf Graph expansion property for $\delta = 1-1/C$]\label{lem_graph_expander_less_sparse}
In the less sparse regime $\delta = 1-1/C$, if we use $C\geq 3$ groups with subsampling matrices $\{\mathbf{M}_c\}_{c\in[C]}$ chosen as \eqref{Psi_c_less_sparse} and $B=\eta K$ chosen with respect to the number of groups $C$ according to Table \ref{Table_beta}, then any graph from the ensemble $\mathcal{G}(K,\eta,C,\{\mathbf{M}_c\}_{c\in[C]})$ is a $(\varepsilon,1/2,C)$-expander with probability at least $1-{O}\left(\frac{1}{K^{(2^C-2C)/(C-1)}}\right)$ for some sufficiently small but constant $\varepsilon>0$.
\end{lem}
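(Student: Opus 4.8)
The plan is to bound the failure probability of the expansion property through a union bound over all potential ``bad'' left-node subsets, exploiting the special product structure of the hash functions in \eqref{Psi_c_less_sparse}. First I would restate the failure event concretely: the graph is \emph{not} a $(\varepsilon,1/2,C)$-expander iff there is a set $\mathcal{S}\subseteq\mathcal{K}$ with $|\mathcal{S}|\leq \varepsilon K$ such that $|\mathcal{N}_c(\mathcal{S})|\leq |\mathcal{S}|/2$ in \emph{every} group $c\in[C]$. The crucial observation is that, with $\mathbf{k}$ partitioned into the $C$ segments $\mathbf{k}=[\bdsb{k}_1^T,\dots,\bdsb{k}_C^T]^T$, the hash $\mathcal{H}_c(\mathbf{k})=\mathbf{M}_c^T\mathbf{k}$ simply deletes the $c$-th segment, so the right neighborhood $\mathcal{N}_c(\mathcal{S})$ is exactly the coordinate projection $\pi_c(\mathcal{S})$ that drops segment $c$, viewing $\mathcal{S}\subseteq \prod_{c=1}^{C}\GF^{n/C}$. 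Unlike the very sparse regime, these projections are \emph{not} independent across groups, which is precisely what forces a different argument.

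The key structural step I would use is the Loomis--Whitney inequality: for any $\mathcal{S}$ in the $C$-fold product space, $|\mathcal{S}|^{C-1}\leq \prod_{c=1}^{C}|\pi_c(\mathcal{S})|$. If $\mathcal{S}$ is bad with $s=|\mathcal{S}|$, then all $|\pi_c(\mathcal{S})|\leq s/2$, hence $s^{C-1}\leq (s/2)^{C}$, which forces $s\geq 2^{C}$. This gives a deterministic lower bound on the size of any bad set and, because equality in Loomis--Whitney is attained only by combinatorial boxes, it identifies the minimal bad sets as the ``$2$-boxes'' $B_1\times\cdots\times B_C$ with $|B_c|=2$, i.e.\ exactly $2^{C}$ nodes whose every projection has size $2^{C-1}=s/2$. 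This is also where $C\geq 3$ enters, since for $C=2$ the bound $s\geq 2^{C}=2C$ is vacuous.

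Next I would compute the dominant term of the union bound, namely the probability that $\mathcal{K}$ contains such a $2$-box. The number of $2$-boxes in $\GF^n$ is $\binom{2^{n/C}}{2}^{C}=\Theta(2^{2n})$ with a $C$-dependent constant, and under Assumption~\ref{random_support_assumption} any fixed set of $2^{C}$ distinct node values lies entirely in the random support with probability at most $(K/N)^{2^{C}}$. A union bound therefore yields a failure probability of order $2^{2n}(K/N)^{2^{C}}$. Substituting $N=2^{n}$ and $K=N^{1-1/C}$, so that $K/N=N^{-1/C}$ and $2^{n}=K^{C/(C-1)}$, this collapses to $O\!\left(K^{-(2^{C}-2C)/(C-1)}\right)$, which is exactly the claimed rate.

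Finally, to complete the bound I must show that larger bad sets ($2^{C}<s\leq \varepsilon K$) contribute only lower-order terms. Here I would stratify bad sets by the sizes $a_c=|\pi_c(\mathcal{S})|\leq s/2$ of their projections, bound the number of sets realizing a given projection profile, and weight each by its containment probability $\lesssim (K/N)^{s}$; summing the resulting series over $s$ shows that the $s=2^{C}$ term dominates. I expect this last step to be the main obstacle: because the $C$ projections share bit segments, counting the configurations in which all projections are simultaneously small (and verifying that the series decays and is dominated by the minimal boxes) is the delicate part, in contrast to the very sparse regime, where independence across groups made the analogous count immediate.
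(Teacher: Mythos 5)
Your proposal is correct in identifying the dominant failure event and reaches the stated rate by a genuinely different route than the paper. The paper works directly with collision patterns: it defines pathological sets in which every node has a same-hash partner in every group, argues by inspecting segments that such sets satisfy $s\geq 2^C$, and then, for every size $s$, bounds the probability that $s$ uniformly random support points exhibit the needed collisions by the counting estimate $\binom{K}{s}\binom{{\tt F}}{s/2^{C-1}}^C\bigl(\frac{s}{2^{C-1}{\tt F}}\bigr)^{Cs}$ with ${\tt F}=2^{n/C}$, which via $\binom{a}{b}\leq (ae/b)^b$ yields $O\bigl((s/{\tt F})^{\frac{s}{2^C}(2^C-2C)}\bigr)$ and hence, using $s\geq 2^C$ and $K=O({\tt F}^{C-1})$, the claimed $O\bigl(K^{-(2^C-2C)/(C-1)}\bigr)$. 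You instead observe that $\mathcal{N}_c(\mathcal{S})$ is the coordinate projection deleting segment $c$ and invoke Loomis--Whitney, which buys two things the paper does not have: the lower bound $s\geq 2^C$ becomes a rigorous one-liner (the paper's corresponding step is an informal count), and the equality case pins the minimal bad sets down to $2$-boxes, whose number $\Theta(2^{2n})$ times the containment probability $(K/N)^{2^C}$ reproduces the exponent exactly. What the paper's cruder route buys is precisely the piece you defer: its counting formula holds for all $s$ with an exponent linear in $s$, so the contributions from $2^C<s\leq\varepsilon K$ are geometrically dominated by the $s=2^C$ term, with no stratification over projection profiles needed. You are right that this tail step is genuinely required (a bad set of size $s>2^C$ need not contain a bad subset of size $2^C$, so the $2$-box union bound alone does not suffice), and one caution: the naive version of your stratification --- first choosing the projection sets of size $s/2$ freely from the ambient space and then points inside the resulting cylinder intersection --- is far too lossy; following the paper, put the probability weight on the $s$ random support points first and count their collision patterns, which closes this gap and completes your argument at the claimed rate.
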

\begin{proof}

To show that the graph ensemble in the less sparse regime is a $(\varepsilon,1/2,C)$ expander defined in Definition \ref{def:graph_expander}, we need to show that irrespective of the inter-dependence of the edges across different groups, any subset $\mathcal{S}$ of left nodes has at least one right neighborhood in one group such that $\max_{c\in[C]}|\mathcal{N}_c(\mathcal{S})|\geq |\mathcal{S}|/2$. Since it has been shown in the very sparse regime in Lemma \ref{lem_graph_expander} that the bottleneck event of graph expander is when the size of the set is constant $|\mathcal{S}|={O}(1)$. Therefore in the following, we show that for any given subset $\mathcal{S}$ of left nodes with size $|\mathcal{S}|=s={O}(1)$, their right neighborhoods will not be multi-tons with high probability.

Given an arbitrary left node with the following bit segments
\begin{align}
	\mathbf{k} 
	&= [k[n],\cdots,k[1]]^T
	= [\bdsb{k}_C^T,\cdots,\bdsb{k}_1^T]^T,
	\quad \bdsb{k}_c \defn [k[ct],\cdots,k[(c-1)t+1]]^T,\quad c\in[C],
\end{align}	
its right neighbors are all multi-tons if and only if there exists at least another left node labeled $\mathbf{k}'$ in each group $c\in[C]$ such that $\mathcal{H}_c(\mathbf{k})=\mathcal{H}_c(\mathbf{k}')$. For a pathological set $\mathcal{S}$ where $\mathcal{H}_c(\mathbf{k})=\mathcal{H}_c(\mathbf{k}')$ for any distinct pair $\mathbf{k}\neq\mathbf{k}'\in\mathcal{S}$, the left node labels $\mathbf{k}$ and $\mathbf{k}'$ differ with each other only in one segment:
\begin{align}\label{pathological_set}
	\bdsb{k}_{c_\star} &\neq \bdsb{k}_{c_\star}',\quad \textrm{for some}~c_\star\in[C]\\
	\bdsb{k}_c &= \bdsb{k}_c', \quad ~\,\textrm{for all}~c\neq c_\star. 
\end{align}

Since there are at least $2$ such nodes for each group $c\in[C]$ to form multi-tons, the size of the pathological set $|\mathcal{S}|=s$ is satisfies $s\geq 2^C$. Let us consider the augmented worst case scenario where there are $s/2^{C-1}$ left nodes satisfying the pathological set requirements in \eqref{pathological_set} in one group (assuming there are only $2$ such nodes in other $C-1$ groups). For all the nodes $\mathbf{k}\in\mathcal{S}$, the total possible number of left nodes that can differ in one segment $\bdsb{k}_c$ for some $c\in[C]$ is $2^t$, and therefore the probability of having $s/2^{C-1}$ nodes from that space is $\frac{s}{2^{C-1}}/{2^t}$. In order for an arbitrary set of $s/2^{C-1}$ left nodes to land in the same bin on the right in all $C$ subsampling groups, the probability can be obtained as
\begin{align}
	\prod_{c=1}^C {2^t \choose s/2^{C-1}}\left(\frac{s}{2^{C-1}2^t}\right)^s.
\end{align}
Let ${\tt F}=2^t$, then the probability of this event can be obtained readily for any size $s$ as 
\begin{align}
	\Prob{{\mathcal{S}}} 
	&\leq {K \choose s} \prod_{c=1}^C {{\tt F} \choose s/2^{C-1}}\left(\frac{s}{2^{C-1} {\tt F} }\right)^s\\
	& = {K \choose s}  { {\tt F} \choose s/2^{C-1}}^C \left(\frac{s}{2^{C-1} {\tt F}}\right)^{Cs}.
\end{align}
Using the inequality ${a \choose b}\leq (a e/b)^b$, we have
\begin{align}
	\Prob{{\mathcal{S}}} 
	= {O}\left(\left(\frac{s}{{\tt F}}\right)^{\frac{s}{2^C}\left(2^C-2C\right)}\right)
\end{align}
Since the pathological set satisfies $s\geq 2^C$ and $K={O}({\tt F}^{C-1})$, we can further bound the probability as
\begin{align}
	\Prob{{\mathcal{S}}} 
	= {O}\left(\frac{1}{{\tt F}^{2^C-2C}}\right)= {O}\left(\frac{1}{K^{(2^C-2C)/(C-1)}}\right).
\end{align}
\end{proof}

\subsection{Generalized Design to Arbitrary Sparsity Regime $0 \leq \delta<1$}\label{sec:less_sparse_general}

As of now, we have presented the subsampling design for the very sparse regime $0<\delta\leq 1/3$ and partly for the less sparse regime $\delta = 1-1/C$ for $\delta = 2/3, 3/4, 4/5, 5/6, \cdots$ for all $C>0$. However, it does not generalize to any sparsity $0 < \delta < 1$. In this section, we continue to show that using the basic constructions above, we can achieve any sparsity regime.  The main idea of extending our subsampling design to an arbitrary sparsity is by the following:
\begin{itemize}
	\item {\it Hash with Common Prefix}: for example, we want to design the subsampling pattern for sparsity $\delta = (1+a)/(3+a)$ for some $a>0$. Clearly, by varying $a\in(0,\infty)$, one can obtain an arbitrary sparsity $\delta \in (1/3, 1)$. However, we hereby note that this construction is not universal since beyond some $a_\star$, the sparse bipartite graph constructed by this design fails to work with high probability. We will show later how to determine such threshold $a_\star$ and how to achieve sparsity beyond that point. In the following, we will proceed with this example.
	 
	We divide the bin index $\mathbf{k}$ into $4$ segments $\mathbf{k}=[\bdsb{k}_1^T,\bdsb{k}_2^T,\bdsb{k}_3^T,\bdsb{k}_4^T]^T$, where $\bdsb{k}_1$, $\bdsb{k}_2$ and $\bdsb{k}_3$ are of equal length containing $b_c = n/(3+a)$ bits for $c=1,2,3$, while $\bdsb{k}_4$ contains $b_4 = an/(3+a)$ bits. The hash function in each group is then designed with the following bit segmentation:
	\begin{align}
		\mathcal{H}_c(\mathbf{k}) = 
		\begin{bmatrix}
			\bdsb{k}_c\\
			\bdsb{k}_4
		\end{bmatrix},
		\quad
		c=1,2,3.
	\end{align}
	In this way, the output of the hash has $b$ bits with 
	\begin{align}
		b=b_c+b_4 = \frac{n}{3+a}+\frac{an}{3+a} = \frac{1+a}{3+a} n = \delta n,
	\end{align}		
	and hence we have $B=2^b=\eta K = {O}(N^{\delta})$ for some appropriately chosen $\eta$. We refer to this generalized hash design as {\it common-prefix} since the hash outputs start with the same segment $\bdsb{k}_4$.
	\item {\it Union of Disjoint Sparse Bipartite Graphs}: using the generalized hash designed above, the sparse bipartite graph is still consistent with the balls-and-bins model, where there are $B=2^{(b_c+b_4)}$ right nodes and $K$ left nodes. Furthermore, since the right node of the graph is indexed by two segments $(\bdsb{k}_4,\bdsb{k}_c)$, the resulting bipartite graph can be viewed as $2^{b_4}$ disjoint unions of sparse bipartite graphs with $K/2^{b_4}$ left nodes and $B/2^{b_4}=2^{b_c}$ right nodes. In other words, we have $2^{b_4}$ disjoint unions of graphs from the random graph ensemble $\mathcal{G}(K/2^{b_4},0.4073,3,\{\mathbf{M}_c\}_{c=1,2,3})$, the decoding of which fails with probability ${O}(1/K/2^{b_4})={O}(1/2^{b_c})$. Therefore, by a union bound, the failure probability of peeling decoding over the bipartite graphs given by this design  is
	\begin{align}
		{O}\left( \frac{1}{2^{b_c}}\right) \times 2^{b_4} = {O}\left( \frac{1}{2^{b_c-b_4}}\right) = {O}\left( \frac{1}{2^{\frac{1-a}{3+a}n}}\right) = {O}\left( \frac{1}{2^{\frac{1+a}{3+a}n \times \left(\frac{1-a}{1+a}\right)}}\right) = {O}\left( \frac{1}{K^{\left(\frac{1-a}{1+a}\right)}}\right).
	\end{align}
Clearly, it is required that $a<a_\star=1$ such that the failure probability approaches zero asymptotically in $K$. This implies a sparsity regime $\delta = (1+a)/(3+a) < (1+a_\star)/(3+a_\star) = 1/2$. Therefore, this example only works for sparsity $1/3<\delta<1/2$. In the following, we provide specific constructions that cover the entire sparsity regime $0<\delta<1$.
\end{itemize}

\subsubsection{Achieving Intermediate Sparsity $0 < \delta \leq 1/3$}\label{sec:delta1}
The design in Section \ref{sec:very_sparse} can be used directly and hence we omit the discussions here.

\subsubsection{Achieving Intermediate Sparsity $1/3<\delta\leq 0.73$}\label{sec:delta2}
Here we target sparsity $\delta=(2+a)/(6+a)$, which starts from $\delta=1/3$ with $a=0$ and ends at $\delta = 0.73$ with $a=8.81$. To achieve such sparsity, we divide the bin index $\mathbf{k}$ into $7$ segments 
\begin{align}
	\mathbf{k}=[\bdsb{k}_1^T,\bdsb{k}_2^T,\bdsb{k}_3^T,\bdsb{k}_4^T,\bdsb{k}_5^T,\bdsb{k}_6^T,\bdsb{k}_7^T]^T, 
\end{align}
where $\bdsb{k}_1$, $\bdsb{k}_2$, $\bdsb{k}_3$, $\bdsb{k}_4$, $\bdsb{k}_5$ and $\bdsb{k}_6$ are of equal length containing $b_c = n/(6+a)$ bits for $c=1,2,\cdots,6$, while $\bdsb{k}_7$ contains $b_7 = an/(6+a)$ bits. Therefore, we have $C=6$ groups for subsampling, and the hash function in each group is designed with the following bit segmentation:
	\begin{align}
		\mathcal{H}_1(\mathbf{k}) &= 
		\begin{bmatrix}
			\bdsb{k}_2\\
			\bdsb{k}_3\\
			\bdsb{k}_7
		\end{bmatrix},
		\quad
		\mathcal{H}_2(\mathbf{k}) = 
		\begin{bmatrix}
			\bdsb{k}_1\\
			\bdsb{k}_3\\
			\bdsb{k}_7
		\end{bmatrix},
		\quad
		\mathcal{H}_3(\mathbf{k}) = 
		\begin{bmatrix}
			\bdsb{k}_1\\
			\bdsb{k}_2\\
			\bdsb{k}_7
		\end{bmatrix}\\		
		\mathcal{H}_4(\mathbf{k}) &= 
		\begin{bmatrix}
			\bdsb{k}_5\\
			\bdsb{k}_6\\
			\bdsb{k}_7
		\end{bmatrix},
		\quad
		\mathcal{H}_5(\mathbf{k}) = 
		\begin{bmatrix}
			\bdsb{k}_4\\
			\bdsb{k}_6\\
			\bdsb{k}_7
		\end{bmatrix},
		\quad		
		\mathcal{H}_6(\mathbf{k}) = 
		\begin{bmatrix}
			\bdsb{k}_4\\
			\bdsb{k}_5\\
			\bdsb{k}_7
		\end{bmatrix}.
	\end{align}
	In this way, the output of the hash has $b$ bits with 
	\begin{align}
		b=2b_c+b_7 = \frac{2n}{6+a}+\frac{an}{6+a} = \frac{2+a}{6+a} n = \delta n.
	\end{align}
	According to Table \ref{Table_beta}, we need to choose $B=0.2616K$. Using the same analysis outlined before, we can show that the peeling decoder works with probability at least $1-{O}(1/K)$.

\subsubsection{Achieving Intermediate Sparsity $0.73<\delta\leq 7/8$}\label{sec:delta3}
Here we target sparsity $\delta=(3+a)/(8+a)$, which starts from $\delta=0.73$ with $a=10.52$ and ends at $\delta = 7/8$ with $a=32$. To achieve such sparsity, we divide the bin index $\mathbf{k}$ into $9$ segments 
\begin{align}
	\mathbf{k}=[\bdsb{k}_1^T,\bdsb{k}_2^T,\bdsb{k}_3^T,\bdsb{k}_4^T,\bdsb{k}_5^T,\bdsb{k}_6^T,\bdsb{k}_7^T,\bdsb{k}_8^T,\bdsb{k}_9^T]^T, 
\end{align}
where $\bdsb{k}_1$, $\bdsb{k}_2$, $\bdsb{k}_3$, $\bdsb{k}_4$, $\bdsb{k}_5$, $\bdsb{k}_6$, $\bdsb{k}_7$ and $\bdsb{k}_8$ are of equal length containing $b_c = n/(8+a)$ bits for $c=1,2,\cdots,8$, while $\bdsb{k}_9$ contains $b_9 = an/(8+a)$ bits. Therefore, we have $C=8$ groups for subsampling, and the hash function in each group is designed with the following bit segmentation:
	\begin{align}
		\mathcal{H}_1(\mathbf{k}) &= 
		\begin{bmatrix}
			\bdsb{k}_2\\
			\bdsb{k}_3\\
			\bdsb{k}_4\\
			\bdsb{k}_9
		\end{bmatrix},
		\quad
		\mathcal{H}_2(\mathbf{k}) = 
		\begin{bmatrix}
			\bdsb{k}_1\\
			\bdsb{k}_3\\
			\bdsb{k}_4\\
			\bdsb{k}_9
		\end{bmatrix},
		\quad
		\mathcal{H}_3(\mathbf{k}) = 
		\begin{bmatrix}
			\bdsb{k}_1\\
			\bdsb{k}_2\\
			\bdsb{k}_4\\
			\bdsb{k}_9
		\end{bmatrix},
		\quad
		\mathcal{H}_4(\mathbf{k}) = 
		\begin{bmatrix}
			\bdsb{k}_1\\
			\bdsb{k}_2\\
			\bdsb{k}_3\\
			\bdsb{k}_9
		\end{bmatrix}\\		
		\mathcal{H}_5(\mathbf{k}) &= 
		\begin{bmatrix}
			\bdsb{k}_6\\
			\bdsb{k}_7\\
			\bdsb{k}_8\\
			\bdsb{k}_9
		\end{bmatrix},
		\quad
		\mathcal{H}_6(\mathbf{k}) = 
		\begin{bmatrix}
			\bdsb{k}_5\\
			\bdsb{k}_7\\
			\bdsb{k}_8\\
			\bdsb{k}_9
		\end{bmatrix},
		\quad		
		\mathcal{H}_7(\mathbf{k}) = 
		\begin{bmatrix}
			\bdsb{k}_5\\
			\bdsb{k}_6\\
			\bdsb{k}_8\\
			\bdsb{k}_9
		\end{bmatrix},
		\quad
		\mathcal{H}_8(\mathbf{k}) = 
		\begin{bmatrix}
			\bdsb{k}_5\\
			\bdsb{k}_6\\
			\bdsb{k}_7\\
			\bdsb{k}_9
		\end{bmatrix}.
	\end{align}
	In this way, the output of the hash has $b$ bits with 
	\begin{align}
		b=3b_c+b_9 = \frac{3n}{8+a}+\frac{an}{8+a} = \frac{3+a}{8+a} n = \delta n.
	\end{align}
	According to Table \ref{Table_beta}, we need to choose $B=0.2336K$. Using the same analysis outlined before, we can show that the peeling decoder works with probability at least $1-{O}(1/K^{0.9})$.

\subsubsection{Achieving Intermediate Sparsity $7/8<\delta <1$}\label{sec:delta4}
The sparsity index $\delta$ in the range $0.875 < \delta < 1$ can be achieved by the
combination of designs proposed in the less sparse regime for increasing (but constant) number of groups $C$ as dictated by $\delta=1-1/C$. For example, we can target the sparsity setting $\delta=(7+a)/(8+a)$ starting from $\delta=0.875$ with $a=0$ and until $\delta=0.99$. In this construction, we divide the bin index $\mathbf{k}$ into $9$ segments 
\begin{align}
	\mathbf{k}=[\bdsb{k}_1^T,\bdsb{k}_2^T,\bdsb{k}_3^T,\bdsb{k}_4^T,\bdsb{k}_5^T,\bdsb{k}_6^T,\bdsb{k}_7^T,\bdsb{k}_8^T,\bdsb{k}_9^T]^T, 
\end{align}
where $\bdsb{k}_1$, $\bdsb{k}_2$, $\bdsb{k}_3$, $\bdsb{k}_4$, $\bdsb{k}_5$, $\bdsb{k}_6$, $\bdsb{k}_7$ and $\bdsb{k}_8$ are of equal length containing $b_c = n/(8+a)$ bits for $c=1,2,\cdots,8$, while $\bdsb{k}_9$ contains $b_9 = an/(8+a)$ bits. The hash function in each group is then designed with the following bit segmentation:
	\begin{align}
		\mathcal{H}_c(\mathbf{k}) = 
		\begin{bmatrix}
			\bdsb{k}_1\\
			\bdsb{k}_2\\
			\vdots\\
			\bdsb{k}_{c-1}\\
			\bdsb{k}_{c+1}\\
			\vdots\\
			\bdsb{k}_9
		\end{bmatrix},
		\quad
		c=1,2,\cdots,8.
	\end{align}
In this way, the output of the hash has $b$ bits with 
	\begin{align}
		b=7b_c+b_9 = \frac{7n}{8+a}+\frac{an}{8+a} = \frac{7+a}{8+a} n = \delta n.
	\end{align}
	According to Table \ref{Table_beta}, we need to choose $B=0.2336K$. Using the same analysis outlined before, we can show that the peeling decoder works with probability at least $1-{O}(1/K)$.

\subsection{Right Edge Degree Distribution}\label{proof_right_edge_deg}
Clearly, the total number of edges is $K C$ in the bipartite graph since there are $K$ left nodes in the bipartite graph and each left node has degree $C$. Therefore, since the expected number of right nodes with degree ${j}$ can be obtained as $\Prob{\textrm{a right node has degree}~{j}} CB {j}$, the fraction $\rho_{j}$ can be obtained as
\begin{align}
	\rho_{j} 
	= \frac{\Prob{\textrm{a right node has degree}~{j}} CB{j} }{KC} = {j}\eta \Prob{\textrm{a right node has degree}~{j}},
\end{align}
where we have used $B=\eta K$ and $\eta$ is the redundancy parameter. According to the ``balls-and-bins'' model, the degree of a right node follows the binomial distribution $B(1/\eta K, K)$ and can be well approximated by a Poisson variable as
\begin{align}
	\Prob{\textrm{a right node has degree}~{j}} \approx \frac{(1/\eta)^{j} e^{-1/\eta}}{{j}!}.
\end{align}
As a result, the fraction $\rho_{j}$ of edges connected to right nodes having degree ${j}$ is obtained as \eqref{rho_d}.

\subsection{Proof of Mean Performance}\label{sec:density_evolution}
Let $Z_{i}^{\textrm{e}}\in\{0,1\}$ be the random variable denoting the presence of edge $e$ after $i$  iterations, thus 
\begin{align}\label{sum_edges}
	Z_{i} = \sum_{e=1}^{KC} Z_{i}^{\textrm{e}}.
\end{align}
Since each edge is peeled off independently given the event $\mathcal{T}_{i}$, the expected number of remaining edges over cycle-free graphs can be obtained as
\begin{align}\label{mean_analysis}
	\mathbb{E}\left[Z_{i}|\mathcal{T}_{i}\right] = \sum_{e=1}^{KC} \mathbb{E}\left[Z_{i}^{\textrm{e}}|\mathcal{T}_{i}\right]  = K C p_i,
\end{align}
where by definition $p_i = \Prob{Z_{i}^{\textrm{e}}=1|\mathcal{T}_{i}}$ is the {\it conditional probability} of an edge in the $i$-th peeling iteration conditioned on the event $\mathcal{T}_{i}$ studied in the density evolution equation \eqref{density_evolution}. We are interested in the evolution of such probability $p_i$. In the following, we prove that for any given $\varepsilon>0$, there exists a finite number of iterations $i>0$ such that $p_i \leq \varepsilon/4$, which leads to our desired result in \eqref{mean_analysis_result}.

\subsection{Concentration Analysis}\label{sec:concentration_analysis}

\subsubsection{Proof of Mean Analysis on General Graphs from Ensembles}
From \eqref{sum_edges}, we have
\begin{align}\label{mean_analysis}
	\mathbb{E}\left[Z_{i}\right] = \sum_{e=1}^{K C} \mathbb{E}\left[Z_{i}^{\textrm{e}}\right]  = K \bar{d} \mathbb{E}\left[Z_{i}^{\textrm{e}}\right].
\end{align}
From basic probability laws, we have
\begin{align*}
	\mathbb{E}\left[Z_{i}^{\textrm{e}}\right]
	&=
	\mathbb{E}\left[Z_{i}^{\textrm{e}}|\mathcal{T}_{i}\right]\Prob{\mathcal{T}_{i}}
	+\mathbb{E}\left[Z_{i}^{\textrm{e}}|\mathcal{T}_{i}^c\right]\Prob{\mathcal{T}_{i}^c}.
\end{align*}
Recall from the density evolution analysis that $\mathbb{E}\left[Z_{i}^{\textrm{e}}|\mathcal{T}_{i}\right] = p_i$, we have
\begin{align}
	\Prob{\mathcal{T}_{i}} \leq 1,\quad
	\mathbb{E}\left[Z_{\textrm{e}}|\mathcal{T}_{i}^c\right] \leq 1
\end{align}
and therefore the following holds:
\begin{align}
	{p}_{i} - \Prob{\mathcal{T}_{i}^c}
	\leq
	\mathbb{E}\left[Z_{i}^{\textrm{e}}\right] 
	\leq
	{p}_{i} + \Prob{\mathcal{T}_{i}^c}.
\end{align}
If the probability of a general graph not behaving like a tree can be made arbitrarily small for any $\varepsilon>0$,
\begin{align}\label{prob_not_tree}
	 \Prob{\mathcal{T}_{i}^c} < \frac{\varepsilon}{4},
\end{align}
then we can obtain the result in \eqref{mean_on_general_graph} by letting $p_i=\varepsilon/4$ in the density evolution analysis. Next, we show that \eqref{prob_not_tree} holds for sufficiently large $K$.
\begin{lem}
For any given constant $\varepsilon>0$ and iteration $i>0$, there exists some absolute constant $K_0>0$ such that 
\begin{align}
	\Prob{\mathcal{T}_{i}^c}
	<
	c_0 \frac{\log^i K}{K}
\end{align}
for some constant $c_0>0$ as long as $K>K_0$.
\end{lem}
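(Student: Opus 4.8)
The goal is to bound the probability $\Prob{\mathcal{T}_i^c}$ that the depth-$2i$ directed neighborhood of a \emph{fixed} edge fails to be tree-like (cycle-free), and then union-bound over all $KC$ edges. My plan is a counting/second-moment-free argument in the spirit of the edge-revealing process used in \cite{richardson2001capacity} and \cite{pedarsani2014phasecode}, which exploits the ``balls-and-bins'' characterization of the ensemble $\mathcal{G}(K,\eta,C,\{\mathbf{M}_c\}_{c\in[C]})$ established in Section \ref{sec:very_sparse}. Concretely, I would grow the neighborhood breadth-first starting from the edge $e=(v,c)$, alternately exploring variable-node and check-node sides. Since each left node has fixed degree $C$ and each edge lands in a uniformly random right node among $B=\eta K$ bins (independently across the first $C$ groups in the very sparse regime), revealing the neighborhood one edge at a time is essentially sampling bins with replacement from a pool of size $\Theta(K)$.

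The key step is to show that a cycle appears in $\mathcal{N}_{\textrm{e}}^{2i}$ only if, during this breadth-first exploration, some newly revealed check node coincides with a previously visited one --- a ``collision'' event. I would bound the number of nodes reachable within depth $2i$: since the graph is left $C$-regular and the right degrees are Poisson-distributed with constant mean $1/\eta$ (Lemma on $\rho_j$), the expected size of the depth-$2i$ neighborhood is at most $(\text{const})^i = {O}(\log^0 K)$ in expectation, but to get the stated $\log^i K$ bound I would instead control the neighborhood size by the typical maximum check-node degree, which is ${O}(\log K/\log\log K)$ by standard balls-into-bins concentration. Thus after $2i$ levels of branching with per-level branching factor ${O}(\log K)$, the total number $n_{2i}$ of nodes explored is ${O}(\log^i K)$ with high probability. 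Each time a new edge is drawn, the probability that its endpoint collides with one of the already-revealed ${O}(\log^i K)$ check nodes is at most ${O}(\log^i K / K)$, since each check node is hit uniformly among $\Theta(K)$ bins.

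Taking a union bound over the ${O}(\log^i K)$ edges revealed during the growth of the neighborhood gives
\begin{align}
	\Prob{\mathcal{T}_i^c} \leq \sum_{\textrm{revealed edges}} \Prob{\textrm{collision}} \leq {O}\!\left(\frac{\log^i K}{K}\right) \times {O}\!\left(\log^i K\right) = {O}\!\left(\frac{\log^{2i} K}{K}\right),
\end{align}
which I would tighten to the claimed $c_0 \log^i K / K$ by a more careful accounting that charges each potential cycle to a single collision event rather than to every revealed edge (there are only ${O}(\log^i K)$ opportunities for a \emph{first} collision, and conditioning on no earlier collision keeps the revealed set cycle-free so the bins remain independent). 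The constant $c_0$ and the base case $K_0$ then absorb the fixed iteration count $i$ and the constant branching overhead. For the less sparse regime, the same argument carries over since the ensemble remains consistent with the balls-and-bins model \emph{within each group} (Section \ref{sec:less_sparse}); the cross-group dependence does not create additional cycles in a single breadth-first tree because each edge is still thrown uniformly within its own group.

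The main obstacle I anticipate is making the branching-factor bound rigorous: naively the right-degree distribution is Poisson with constant mean, so a typical neighborhood has \emph{constant} expected branching, but rare high-degree check nodes (degree up to $\Theta(\log K/\log\log K)$) dominate the tail and are exactly what forces the $\log^i K$ factor rather than a constant. I would handle this by conditioning on the high-probability event that all check-node degrees are ${O}(\log K)$ (a balls-into-bins maximum-load bound, failing with probability ${O}(1/K)$ which is absorbed into the final estimate), and then running the collision argument deterministically on top of that degree bound. The delicate bookkeeping is ensuring the conditioning on ``no collision so far'' does not bias the remaining bin-assignment probabilities by more than a $1+o(1)$ factor, which holds because removing ${O}(\log^i K)$ occupied bins from a pool of $\Theta(K)$ changes each subsequent uniform draw negligibly.
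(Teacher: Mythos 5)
Your proposal follows essentially the same two-stage structure as the paper's proof: first show that the depth-$2i$ neighborhood contains at most ${O}(\log^i K)$ nodes except with probability ${O}(1/K)$, then bound the probability of a cycle given this size bound by a sequential collision argument, which is equivalent to the paper's estimate $\left(1-\frac{C_{i_\star}}{\eta K}\right)^{C_{i_\star}}\left(1-\frac{V_{i_\star}}{K}\right)^{V_{i_\star}} \geq 1-\left(\frac{V_{i_\star}^2}{K}+\frac{C_{i_\star}^2}{\eta K}\right)$. The only technical difference is in stage one: you condition once on the global maximum check-node degree being ${O}(\log K)$ via a balls-into-bins max-load bound, whereas the paper controls the size level by level, bounding $\Prob{\sum_{i=1}^{M_i} D_i \geq \kappa_3 \log^i K}$ with the Poisson tail $\Prob{D_i\geq x}\leq (e\lambda/x)^x$ and obtaining the recursion $\alpha_i \leq \alpha_{i-1}+\kappa_5/K$. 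Both devices cost ${O}(1/K)$ and yield the same ${O}(\log^i K)$ size bound, so this is a cosmetic variation rather than a genuinely different route; your handling of the conditioning bias (removing ${O}(\log^i K)$ occupied bins from $\Theta(K)$ perturbs each uniform draw negligibly) and of the less sparse regime (within-group uniformity suffices) is sound.

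The one claim that does not hold as stated is your proposed tightening from ${O}(\log^{2i}K/K)$ to ${O}(\log^i K/K)$ by charging cycles to first collisions. Every one of the ${O}(\log^i K)$ revealed edges is an opportunity for a \emph{first} collision, and conditioned on no earlier collision each such opportunity still has probability up to ${O}(\log^i K/K)$, proportional to the size of the already-revealed set; summing over steps therefore still gives ${O}(\log^{2i}K/K)$, and restricting attention to first collisions changes nothing because the union bound was already over first-collision opportunities. You should know, however, that the paper's own proof commits exactly the same leap: its displayed bound is $1-\left(\frac{V_{i_\star}^2}{K}+\frac{C_{i_\star}^2}{\eta K}\right) = 1-{O}(\log^{2i}K/K)$, which it then writes as $1-{O}(\log^i K/K)$ without comment. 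Since $i$ is a fixed constant, the $\log^{2i}K/K$ bound is fully sufficient for the lemma's only downstream use (ensuring $\Prob{\mathcal{T}_{i}^c}<\varepsilon/4$ for all $K>K_0$), so the right fix is to prove and use the $\log^{2i}$ bound (or restate the lemma with an unspecified constant power of $\log K$) rather than pursue a tightening that neither you nor the paper actually achieves.
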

From this lemma, we can see that for an arbitrary $\varepsilon>0$, the result follows as long as $K>K_0$ where $K_0$ is the smallest constant that satisfies $K_0/\log^i K_0 > 4c_0/\varepsilon$ given $\varepsilon$ and $i$. In the following we give the proof of the lemma.

\begin{proof}
Let $C_i$ be the number of check nodes and $V_i$ be the number of variable nodes in the neighborhood $\mathcal{N}_{\textrm{e}}^{2i}$. Because the graph ensemble $\mathcal{G}(K,\eta,C,\{\mathbf{M}_c\}_{c\in[C]})$ follows Poisson distributions on the right, the results in \cite{richardson2001capacity} are not readily applied here. Therefore, the key idea is to prove that the size of the tree neighborhood is bounded by ${O}(\log^i K)$ with high probability, which is intuitively clear since  Poisson distributions have very light tails due to the exponential decay. 

To show this, we unfold the neighborhood of an edge $e$ up to level $i_\star$, and at each level $i$ we upper bound the probability that the size of the tree grows larger than ${O}(\log^i K)$. Specifically, from the law of total probability, we upper bound the probability of not having a tree as follows for some $\kappa_1>0$
\begin{align}
	\Prob{\mathcal{T}_{i}^c}
	&\leq 
	\Prob{V_i > \kappa_1 \log^i K}
	+ \Prob{C_i > \kappa_1 \log^i K}\\
	&~~~ + \Prob{\mathcal{T}_{i}^c | V_i < \kappa_1 \log^i K,C_i < \kappa_1 \log^i K}.
\end{align}
Denoting $\alpha_i = \Prob{V_i > \kappa_1 \log^i K}$, we bound the first term using the total law of probability as follows
\begin{align*}
			\alpha_i
			&\leq
			\alpha_{i-1} +
			\Prob{V_i > \kappa_1 \log^i K | V_{i-1} < \kappa_1  \log^{i-1} K}.
\end{align*}
Given $V_{i-1} < \kappa_1  \log^{i-1} K$, we have $C_i<\kappa_2 \log^{i-1} K$ at depth $i$ for some $\kappa_2>0$ since the left degree of any graph from both ensembles is upper bounded by ${d}$ and $D$ respectively, which are both constants. Therefore, the second term in the above recursion can be bounded as
\begin{align}
	\Prob{V_i > \kappa_1 \log^i K | V_{i-1} < \kappa_1  \log^{i-1} K}
	&\leq
	\Prob{V_i > \kappa_1 \log^i K | C_i < \kappa_2 \log^{i-1} K}.
\end{align}
Now let the number of check nodes at exactly depth $i$ be $M_i$ and let $D_i$ be the degrees of each of these check nodes, the right hand side can be evaluated as
\begin{align}
	&\Prob{V_i > \kappa_1 \log^i K | C_i < \kappa_2 \log^{i-1} K}
	\leq
	\Prob{\sum_{i=1}^{M_i} D_i \geq \kappa_3 \log^i K}
\end{align}
for some $\kappa_3>0$. Since the check node degrees are Poisson variables with rate $1/\eta$ and the number of check nodes at depth $i$ is less than the total number of check nodes up to depth $i$ such that $M_i\leq C_i < \kappa_1 \log^i K$, then the probability can be upper bounded with $\Prob{D_i\geq x}\leq (e\lambda/x)^x$ as
\begin{align}
	\Prob{\sum_{i=1}^{M_i} D_i \geq \kappa_3 \log^i K}
	&\leq \left(\frac{e M_i/\eta}{\kappa_3 \log^i K}\right)^{\kappa_3\log^i K}
	\leq \left(\frac{\kappa_4}{\log K}\right)^{\kappa_3\log^i K} \leq \frac{\kappa_5}{K}	
\end{align}
for some $\kappa_4>0$ and $\kappa_5>0$. Therefore we have
\begin{align}
	\alpha_i \leq \alpha_{i-1} + \frac{\kappa_5}{K}
\end{align}
and thus the number of variable nodes exposed until the $i$-th iteration can be bounded by $\log^i K$ with high probability $\Prob{V_i > \kappa_1 \log^i K} \leq {O}\left(\frac{1}{K}\right)$. Similar technique can be used to show that the tail bound for the check nodes is $\Prob{C_i > \kappa_1 \log^i K} \leq {O}\left(\frac{1}{K}\right)$.

It has been shown that the number of nodes is well bounded by ${O}(\log^i K)$, now we proceed to show the tree-like neighborhood of our graph ensemble by induction. Assuming that  the neighborhood $\mathcal{N}_{\textrm{e}}^{2i}$ at the $i$-th iteration ($i<i_\star$) is tree-like, we prove that $\mathcal{N}_{\textrm{e}}^{2(i+1)}$ is tree-like with high probability. 

First of all, we examine the neighborhood $\mathcal{N}_{\textrm{e}}^{2i+1}$. Assume that $t$ additional edges have been revealed at this level without forming a cycle. The probability that the next edge from a variable node does not create a cycle is the probability that it is connected to one of the check nodes that are not already included in the tree, which is lower bound by $1-C_{i_\star}/(\eta K)$. Therefore, given that $\mathcal{N}_{\textrm{e}}^{2i}$ is tree-like, the probability that $\mathcal{N}_{\textrm{e}}^{2i+1}$ is tree-like is lower bounded by 
\begin{align}
	\left(1-\frac{C_{i_\star}}{\eta K}\right)^{C_{i+1}-C_i}.
\end{align}
By similar reasoning, given that $\mathcal{N}_{\textrm{e}}^{2i+1}$ is tree-like, the probability that $\mathcal{N}_{\textrm{e}}^{2(i+1)}$ is tree-like is lower bounded by 
\begin{align}
	\left(1-\frac{V_{i_\star}}{K}\right)^{V_{i+1}-V_i}.
\end{align}
Therefore, the probability that $\mathcal{N}_{\textrm{e}}^{2(i+1)}$ is tree-like is lower bounded by 
\begin{align*}
	\left(1-\frac{C_{i_\star}}{\eta K}\right)^{C_{i_\star}}\left(1-\frac{V_{i_\star}}{K}\right)^{V_{i_\star}}
	&\geq 1- \left(\frac{V_{i_\star}^2}{K}+\frac{C_{i_\star}^2}{\eta K}\right)
	\geq 1- {O}\left(\frac{\log^i K}{K}\right).
\end{align*}
Therefore the probability of not being tree-like is upper bounded by 
\begin{align}
	\Prob{\mathcal{T}_{i}^c}< c_0 \frac{\log^i K}{K}
\end{align}
for some absolute constant $c_0>0$.
\end{proof}

\subsubsection{Proof of Concentration to Mean by Large Deviation Analysis}
Now it remains to show the concentration of $Z_{i}$ around its mean $\mathbb{E}[Z_{i}]$. According to \eqref{sum_edges}, the number of remaining edges is a sum of random variables $Z_{i}=\sum_{e=1}^{K C} Z_{\rm e}^{(i)}$ while summands $Z_{\rm e}^{(i)}$ are not independent with each other. Therefore, to show the concentration, we use a standard martingale argument and Azuma's inequality provided in \cite{richardson2001capacity} with some modifications to account for the irregular degrees of the right nodes. 

Suppose that we expose the whole set of $E=K C$ edges of the graph one at a time. We let
\begin{align}
	Y_\ell=\mathbb{E}\left[Z_{i}|Z_1^{(i)},\cdots,Z_\ell^{(i)}\right],\quad \ell=1,\cdots, K C.
\end{align}	 
By definition, $Y_0,Y_1,\cdots,Y_{K C}$ are a Doob's martingale process, where $Y_0=\mathbb{E}[Z_{i}]$ and $Y_{K C}=Z_{i}$. To use Azuma's inequality, it is required that $|Y_{\ell+1}-Y_\ell|\leq \Delta_\ell$ for some $\Delta_\ell>0$.  If the variable node has a regular degree $d_V$ and the check node has a regular degree $d_C$, then \cite{richardson2001capacity} shows that $\Delta_\ell = 8(d_V d_C)^i$ with $i$ being the number of peeling iterations. However, although we have a regular left degree $d_V=C$ in our graph ensemble $\mathcal{G}(K,\eta,C,\{\mathbf{M}_c\}_{c\in[C]})$, the check node degree is not regular with degree $d_C$ and therefore requires further analysis.

\subsubsection*{Proof of Finite Difference $\Delta_\ell$}
To prove that the difference $\Delta_\ell$ is finite for check node degrees with Poisson distributions, we first prove that the degree of all the check nodes can be upper bounded by $d_C\leq {O}(K^{\frac{2}{4i+1}})$ with probability\footnote{Let $X$ be a Poisson variable with parameter $\lambda$, then the following holds
\begin{align*}
	\Prob{X>c K^{\frac{2}{4i+1}} }
	&\leq
	\left(\frac{e\lambda}{c K^{\frac{2}{4i+1}}} \right)^{c K^{\frac{2}{4i+1}}}
	\leq c_1 \exp\left(-c_2 K^{\frac{2}{4i+1}} \right)
\end{align*}
for some $c_1$ and $c_2$.} at least $$c_1K\exp\left(-c_2 K^{\frac{2}{4i+1}} \right)$$ for some constants $c_1$ and $c_2$. Let $\mathcal{B}$ be the event that at least one check node has more than ${O}\left(K^{\frac{2}{4i +1}}\right)$ edges, then for some $c_3>0$ we have
\begin{align}
	\Prob{\mathcal{B}} < c_3K\exp\left(-c_2 K^{\frac{2}{4i+1}} \right).
\end{align}
by applying a union bound on all the $R=\eta K$ check nodes of the graphs from $\mathcal{G}(K,\eta,C,\{\mathbf{M}_c\}_{c\in[C]})$. As a result, under the complement event $\mathcal{B}^c$, we have
\begin{align}
	\Delta_\ell^2 = {O}\left(K^{\frac{4i}{4i+1}}\right).
\end{align}

\subsubsection*{Large Deviation by Azuma's Inequality}
For any given $\varepsilon>0$, the tail probability of the event $Z_{i}>K C\varepsilon$ can be computed as
\begin{align*}
	\Prob{\left|Z_{i}-\mathbb{E}[Z_{i}]\right| > \frac{K C \varepsilon}{2}}
	&\leq
	\Prob{\left|Z_{i}-\mathbb{E}[Z_{i}]\right| > \frac{K C \varepsilon}{2}\Big |\mathcal{B}^c} + \Prob{\mathcal{B}}\\
	&\leq 2\exp\left(-\frac{K^2 C^2 \varepsilon^2/4}{2\sum_{\ell=1}^{K C} \Delta_\ell^2 }\right) + c_3K\exp\left(-c_2 K^{\frac{2}{4i+1}} \right)\\
	&\leq 2\exp\left(-c_4 \varepsilon^2 K^{\frac{1}{4i+1}}\right),
\end{align*}
where $c_4$ is some constant depending on $C$, $\eta$ and all the other constants $c_1,c_2,c_3$. This concludes our proof for \eqref{concentration_DE}.

\subsection{Proof of Expander Graphs}\label{sec:expander_graph}
Given an arbitrary subset of left nodes $\mathcal{S}$ of size $|\mathcal{S}|=s$ with less than $s/2$ neighbors for all $C$ subsampling groups. The probability of this event can be obtained readily for any size $v$ as 
\begin{align}
	\Prob{{\mathcal{S}}} 
	&\leq {K \choose s} \prod_{c=1}^C { \eta K \choose s/2} \left(\frac{s}{2 \eta K}\right)^{s}\\
	& = {K \choose s}  { \eta K \choose s/2}^C \left(\frac{s}{2 \eta K}\right)^{Cs},
\end{align}
where we have used the fact that the number of check nodes is $\eta K$. Using the inequality ${a \choose b}\leq (a e/b)^b$, we have
\begin{align}
	\Prob{{\mathcal{S}}} 
	&\leq \left(\frac{Ke}{s}\right)^s\left(\frac{\eta K e}{s/2}\right)^{Cs/2}  \left(\frac{s}{2 \eta K}\right)^{Cs}
	=\left(\frac{s}{K}\right)^{s\left(\frac{C}{2}-1\right)} c^s,
\end{align}
where $c=e(e/\eta)^{C/2}$ is some constant. Clearly, as long as $C/2-1\geq 1/2$ such that $C\geq 3$, we can further bound the probability as
\begin{align}
	\Prob{{\mathcal{S}}} 
	&\leq \left(\frac{s c^2}{K}\right)^{s/2}.
\end{align}
It can be seen that the probability of not forming an expander depends on the size of the remaining subset $|\mathcal{S}|=s$. Now we examine two extremes with $s={O}(K)$ and $s={O}(1)$, and obtain the following:
\begin{align}
	\Prob{{\mathcal{S}}} 
	=
	\begin{cases}
		e^{-K \log\left(\frac{1}{\varepsilon c^2}\right)}, & s=\varepsilon K~\textrm{with}~\varepsilon <1/(2c^2)\\
		{O}\left(\frac{1}{K}\right), & s={O}(1).
	\end{cases}
\end{align}
Clearly, the bottleneck event is when the graph is left with $s={O}(1)$ variable nodes, which happens also with probability approaching zero asymptotically in $K$. Therefore, the random graphs from the ensemble are good expanders with probability at least $1-{O}(1/K)$.	

\newpage
\section{Proof of Proposition \ref{prop_BSC}}\label{proof_prop_BSC}
Given a single-ton bin with an index-value pair $(\mathbf{k},X[\mathbf{k}])$, 
\begin{align}
	U_p 
	&= |X[\mathbf{k}]|(-1)^{\ip{\mathbf{d}_p}{\mathbf{k}}\oplus\sgn{X[\mathbf{k}]}} + W_p,\quad p \in [P],
\end{align}
it is clear that $\sgn{U_p}=\ip{\mathbf{d}_p}{\mathbf{k}}\oplus\sgn{X[\mathbf{k}]}\oplus 1$ whenever the noise $W_p$ is sufficiently large such that it crosses over $X[\mathbf{k}](-1)^{\ip{\mathbf{d}_p}{\mathbf{k}}}$. Clearly, this is a random event and we can model it with some Bernoulli variable $Z_p\in\{0,1\}$ with some probability $p_Z$
\begin{align}
	\sgn{U_p}
	&= \ip{\mathbf{d}_p}{\mathbf{k}}\oplus\sgn{X[\mathbf{k}]}\oplus Z_p.
\end{align}
The exact parameter $p_Z$ of the Bernoulli random variable $Z_p$ can be found by studying the tail events that trigger the flipping, but here for simplicity we directly upper bound it as follows
\begin{align}
	p_Z \leq \Pe \defn \Prob{|W_p|>|X[\mathbf{k}]|} \leq e^{-\frac{|X[\mathbf{k}]|^2}{2 N/B\sigma^2}} = e^{-\frac{\eta}{2}\mathrm{SNR}}.
\end{align}

\section{Proof of Theorem \ref{peeling-decoder-RBI}: Peeling Decoder using a Robust Bin Detector}\label{sec:RBI_perf_analysis}

Let ${E}_{\rm bin}$ be the event where the robust bin detector makes a mistake in the ${O}(K)$ peeling iterations. If the error probability of the robust bin detector described in Section \ref{sec:robust_bin_detection} satisfies
\begin{align}\label{P_RBI_upperbound}
	\Prob{{E}_{\rm bin}} = {O}\left(\frac{1}{K}\right),
\end{align}
then the result directly follows from the Bayes rule:
\begin{align*}
	\Pf &= \Prob{\supp{\widehat{\mathbf{X}}}\neq\supp{\mathbf{X}}\big|{E}_{\rm bin}^c}\Prob{{E}_{\rm bin}^c}
	+\Prob{\supp{\widehat{\mathbf{X}}}\neq\supp{\mathbf{X}}\big|{E}_{\rm bin}}\Prob{{E}_{\rm bin}}\\
	&\leq \Prob{\supp{\widehat{\mathbf{X}}}\neq\supp{\mathbf{X}}\big|{E}_{\rm bin}^c}+\Prob{{E}_{\rm bin}} = O\left(1/K\right),
\end{align*}
where the first term in the last inequality is obtained from Theorem \ref{thm_peeling_decoder} for the peeling decoder with an oracle such that the event $E_0^c$ holds. Therefore, it remains to show that \eqref{P_RBI_upperbound} holds. The main idea is to analyze the error probability of making at least an error on any bin observation, followed by a union bound on all the bin observation. Let the error event in any bin $j$ as $E_j$, then we have the following union bound across $\eta K$ bin observation vectors as well as $C K$ iterations\footnote{The number of iterations is taken to be the worst case where at each iteration only one edge is peeled off.}
\begin{align}
	\Prob{{E}_{\rm bin}} \leq CK\bigcup_{j=1}^{\eta K} \Prob{E_j},
\end{align}
where $C$ is the left degree of the regular ensemble $\mathcal{G}(K,\eta,C,\{\mathbf{M}_c\}_{c\in[C]})$ . Without loss of generality, we drop the bin index $j$ and use a union bound over all bins such that
\begin{align}
	\Prob{{E}_{\rm bin}} \leq \eta C K^2\Prob{ E},
\end{align}
where $\Prob{ E}$ is the error probability for an arbitrary bin. It can be seen that due to the union bounds, it is required that $\Prob{ E}\leq {O}(1/K^3)$ such that $\Prob{{E}_{\rm bin}}\leq {O}(1/K)$. 

In the following, we prove that $\Prob{ E}\leq {O}(1/K^3)$ holds using the generic model in Proposition \ref{prop_meas.bin.model}. Since there are different types of errors, thus in the following analysis $\bdsb{\alpha}$ in \eqref{equiv.model.bin_simplified} is fixed as a zero-ton, single-ton or multi-ton respectively for each class of errors. 

\begin{defi}\label{def_error_category}
The error probability $\Prob{ E}$ for an arbitrary bin can be upper bounded as
\begin{align}
	\Prob{ E} 
	&\leq \sum_{\mathcal{F}\in\{\mathcal{H}_{\textrm{Z}},\mathcal{H}_{\textrm{M}}\}}\Prob{\mathcal{F}\leftarrow\mathcal{H}_{\textrm{S}}(\mathbf{k},X[\mathbf{k}])}+ \sum_{\mathcal{F}\in\{\mathcal{H}_{\textrm{Z}},\mathcal{H}_{\textrm{M}}\}}\Prob{\mathcal{H}_{\textrm{S}}(\mathbf{k},X[\mathbf{k}])\leftarrow\mathcal{F}}\\
	&~~~~~~+\Prob{\mathcal{H}_{\textrm{S}}(\widehat{\mathbf{k}},\widehat{X}[\widehat{\mathbf{k}}])\leftarrow\mathcal{H}_{\textrm{S}}(\mathbf{k},X[\mathbf{k}])}
\end{align}
where $\mathcal{F}$ is either a zero-ton $\mathcal{H}_{\textrm{Z}}$  or a multi-ton $\mathcal{H}_{\textrm{M}}$ and
\begin{enumerate}
	\item $\Prob{\mathcal{F}\leftarrow\mathcal{H}_{\textrm{S}}(\mathbf{k},X[\mathbf{k}])}$ is called the {\it missed verification} rate in which the single-ton verification fails when the ground truth is in fact a single-ton $\mathcal{H}=\mathcal{H}_{\textrm{S}}(\mathbf{k},X[\mathbf{k}])$ for some $\mathbf{k}\in\GF^n$ and $X[\mathbf{k}]$.  
	\item $\Prob{\mathcal{H}_{\textrm{S}}(\mathbf{k},X[\mathbf{k}])\leftarrow\mathcal{F}}$ is called the {\it false verification} rate in which the single-ton verification is passed for some single-ton $\mathcal{H}=\mathcal{H}_{\textrm{S}}(\widehat{\mathbf{k}},\widehat{X}[\widehat{\mathbf{k}}])$ with an index-value pair $(\widehat{\mathbf{k}},\widehat{X}[\widehat{\mathbf{k}}])$ when the ground truth is $\mathcal{F}\in\{\mathcal{H}_{\textrm{Z}},\mathcal{H}_{\textrm{M}}\}$.
	\item $\Prob{\mathcal{H}_{\textrm{S}}(\widehat{\mathbf{k}},\widehat{X}[\widehat{\mathbf{k}}])\leftarrow\mathcal{H}_{\textrm{S}}(\mathbf{k},X[\mathbf{k}])}$ is called the {\it crossed verification} rate in which a single-ton with a wrong index-value pair $\widehat{\mathbf{k}}\neq \mathbf{k}, \widehat{X}[\widehat{\mathbf{k}}] \neq X[\mathbf{k}]$ passes the single-ton verification when the ground truth is a single-ton with an index-value pair $\mathcal{H}=\mathcal{H}_{\textrm{S}}(\mathbf{k},X[\mathbf{k}])$ for some $k\neq \widehat{k}$.
\end{enumerate}
\end{defi}
The false verification rate, missed verification rate and crossed verification rate for the near-linear time and sub-linear time recovery schemes are given in the following propositions. 

\begin{prop}[False Verification Rate]\label{prop_false_detection}
For any $0<\gamma<\mathsf{SNR}/2$, the false verification rate for each bin hypothesis can be upper bounded as follows:
\begin{align*}
		&\Prob{\mathcal{H}_{\textrm{S}}(\widehat{\mathbf{k}},\widehat{X}[\widehat{\mathbf{k}}])\leftarrow\mathcal{H}_{\textrm{Z}}}<
		e^{-\frac{P_1}{4}\left(\sqrt{1+2\gamma}-1\right)^2}\\
		&\Prob{\mathcal{H}_{\textrm{S}}(\widehat{\mathbf{k}},\widehat{X}[\widehat{\mathbf{k}}])\leftarrow\mathcal{H}_{\textrm{M}}}< 
		e^{-\frac{P_1}{4}\frac{\gamma^2}{1+4\gamma}} + Ne^{-\frac{\varepsilon}{4}\left(1-\frac{2\gamma\nu^2}{\rho^2}\right)P_1},
\end{align*}
where $P_1$ is the number of the random offsets in the NSO-SPRIGHT and the SO-SPRIGHT algorithm.
\end{prop}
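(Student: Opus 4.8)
The plan is to treat the two false-verification events separately, reducing each to a tail bound for a (possibly noncentral) chi-squared statistic formed from the $P_1$ random-offset observations that drive the zero-ton and single-ton verification tests. Throughout I would work with the bin model $\bdsb{U} = \mathbf{S}\bdsb{\alpha}+\bdsb{W}$ from \eqref{equiv.model.bin_simplified}, restricted to the $P_1$ random offsets, on which $\bdsb{W}\sim\mathcal{N}(\mathbf{0},\nu^2\mathbf{I}_{P_1})$.

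For the zero-ton term, the key observation is that the event $\mathcal{H}_{\textrm{S}}(\widehat{\mathbf{k}},\widehat{X}[\widehat{\mathbf{k}}])\leftarrow\mathcal{H}_{\textrm{Z}}$ can only occur if the zero-ton test $\frac{1}{P_1}\|\bdsb{U}\|^2\le(1+\gamma)\nu^2$ first \emph{fails}, so it suffices to upper bound $\Prob{\frac{1}{P_1}\|\bdsb{W}\|^2 > (1+\gamma)\nu^2}$. Under $\mathcal{H}_{\textrm{Z}}$ we have $\bdsb{U}=\bdsb{W}$, hence $\|\bdsb{U}\|^2/\nu^2\sim\chi^2_{P_1}$, and I would invoke the Laurent--Massart upper-tail bound $\Prob{\chi^2_d \ge d + 2\sqrt{dx}+2x}\le e^{-x}$. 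Solving $2\sqrt{dx}+2x=\gamma d$ with $d=P_1$ yields $x=\tfrac{P_1}{4}(\sqrt{1+2\gamma}-1)^2$, which produces exactly the stated exponent $e^{-\frac{P_1}{4}(\sqrt{1+2\gamma}-1)^2}$.

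For the multi-ton term the mechanism is different: the single-ton search minimizes the residual over all $\le N$ candidate indices, and $\mathcal{H}_{\textrm{S}}\leftarrow\mathcal{H}_{\textrm{M}}$ occurs iff some candidate $\widehat{\mathbf{k}}$ passes $\frac{1}{P_1}\|\bdsb{U}-\widehat{X}[\widehat{\mathbf{k}}]\mathbf{s}_{\widehat{\mathbf{k}}}\|^2\le(1+\gamma)\nu^2$. Writing the residual as $\bdsb\mu+\bdsb W$ with $\bdsb\mu=\mathbf{S}\bdsb\alpha-\widehat{X}[\widehat{\mathbf{k}}]\mathbf{s}_{\widehat{\mathbf{k}}}$, I would first establish, using the near-orthogonality of the random offset codewords ($|\mathbf{s}_{\mathbf{k}}^T\mathbf{s}_{\mathbf{k}'}|/P_1\le\varepsilon$ whp for $\mathbf{k}\ne\mathbf{k}'$), that because $\bdsb\alpha$ has at least two nonzeros, subtracting any single codeword leaves residual signal energy $\|\bdsb\mu\|^2\gtrsim\rho^2 P_1$. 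The stated bound then arises from splitting the event into two pieces. The first piece bounds the ``aligned-candidate'' contribution, where the residual statistic is a noncentral chi-squared whose mean exceeds the threshold; a Chernoff/sub-exponential tail on the residual noise fluctuation produces the $N$-free factor $e^{-\frac{P_1}{4}\frac{\gamma^2}{1+4\gamma}}$. The second piece bounds, via a union bound over the $\le N$ spurious candidate indices, the probability that a \emph{wrong} codeword correlates with the multi-ton signal strongly enough to collapse its $\ge2\rho^2 P_1$ energy below $(1+\gamma)\nu^2 P_1$; this is controlled by $\varepsilon$ together with the SNR gap $\rho^2>2\gamma\nu^2$ (guaranteed by $\gamma<\mathsf{SNR}/2$), giving $N\,e^{-\frac{\varepsilon}{4}(1-2\gamma\nu^2/\rho^2)P_1}$.

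The routine ingredient is the zero-ton bound, which is a direct chi-squared computation. \textbf{The main obstacle is the multi-ton analysis}: one must control the \emph{minimum} residual over $\sim N$ candidates while the true signal is an arbitrary multi-ton of unknown multiplicity, and cleanly separate the noise-dominated deviation (which must remain free of the $N$ union-bound penalty) from the spurious-alignment events (which tolerate the factor $N$ only because their per-candidate probability decays like $e^{-\Theta(\varepsilon P_1)}$). The delicate step is proving the uniform residual-energy lower bound $\|\bdsb\mu\|^2\gtrsim\rho^2 P_1$ and tracking the constants through the noncentral-chi-squared Chernoff optimization so that the precise exponents $\frac{\gamma^2}{1+4\gamma}$ and $\frac{\varepsilon}{4}(1-2\gamma\nu^2/\rho^2)$ emerge; this is where the near-orthogonality of the random codebook $\mathbf{S}$ does the essential work.
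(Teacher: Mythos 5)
Your proposal is correct and follows essentially the same route as the paper: the zero-ton bound is the identical central-$\chi^2_{P_1}$ upper-tail computation (the paper's Lemma \ref{general_tail} with $\theta_0=0$ is the Laurent--Massart/Birg\'e bound you invoke, producing the same exponent $\frac{P_1}{4}(\sqrt{1+2\gamma}-1)^2$), and your multi-ton split --- an $N$-free noncentral-$\chi^2$ lower-tail term conditioned on residual energy $\|\bdsb{g}\|^2/P_1\ge 2\gamma\nu^2$, plus a low-energy event controlled by the Rademacher codebook's near-orthogonality with a union bound supplying the factor $N$ --- is exactly the paper's decomposition via the total probability law. The one step you flag as delicate, the residual-energy lower bound, is resolved in the paper by a case split on the multi-ton support size $L$ (Gershgorin's theorem plus the coherence tail $\Prob{\mu\ge\mu_0}\le 2Ne^{-\mu_0^2 P_1/2}$ when $L={O}(1)$, a CLT approximation when $L=\omega(1)$), with $\varepsilon$ a constant absorbing the $L$-dependence of the exponents rather than the coherence level itself as in your reading.
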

\begin{proof}
	See Appendix \ref{proof_prop_false_detection}.
\end{proof}

\begin{prop}[Missed Verification Rate]\label{prop_miss_detection}
For any $0<\gamma<\mathsf{SNR}/2$, the missed verification rate for each bin hypothesis can be upper bounded as follows:
\begin{align*}
		&\Prob{\mathcal{H}_{\textrm{Z}}\leftarrow\mathcal{H}_{\textrm{S}}(\mathbf{k},X[\mathbf{k}])}
		<e^{-\frac{P_1}{4}\frac{\left(\rho^2/\nu^2-\gamma\right)^2}{1+2\rho^2/\nu^2}}\\
		&\Prob{\mathcal{H}_{\textrm{M}}\leftarrow\mathcal{H}_{\textrm{S}}(\mathbf{k},X[\mathbf{k}])}
		<
		2e^{-\frac{\rho^2}{2\nu^2}P_1}
		+
		\begin{cases}
		2n e^{-\frac{(1-2\theta)^2}{8}P_1}, & \textrm{NSO-SPRIGHT}\\
		2e^{-\frac{(\beta/\Pe-1)^2}{3}P_3}+2e^{-\frac{(1-2\Pe)^2}{8}P_2}, & \textrm{SO-SPRIGHT}.
		\end{cases}
\end{align*}
where $P_1$ is the number of random offsets in the NSO-SPRIGHT and the SO-SPRIGHT algorithm, while $P_2$ and $P_3$ are the numbers of the zero offsets and coded offsets in the SO-SPRIGHT algorithm.
\end{prop}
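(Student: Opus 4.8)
The plan is to bound the two missed-verification events separately, in each case reducing the event to a concentration statement about either the bin energy or the single-ton search statistics. Throughout, a true single-ton bin has observation $\bdsb{U} = X[\mathbf{k}]\mathbf{s}_{\mathbf{k}} + \bdsb{W}$ with $|X[\mathbf{k}]| = \rho$ and $\bdsb{W} \sim \mathcal{N}(\mathbf{0}, \nu^2\mathbf{I})$ over the $P_1$ random offsets.

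For the event $\mathcal{H}_{\textrm{Z}}\leftarrow\mathcal{H}_{\textrm{S}}(\mathbf{k},X[\mathbf{k}])$, the zero-ton test declares a zero-ton when $\frac{1}{P_1}\|\bdsb{U}\|^2 \leq (1+\gamma)\nu^2$. Writing $\|\bdsb{U}\|^2/\nu^2 = \sum_{p=1}^{P_1}\bigl(X[\mathbf{k}](-1)^{\ip{\mathbf{d}_p}{\mathbf{k}}} + W_p\bigr)^2/\nu^2$, this is a non-central $\chi^2$ variable with $P_1$ degrees of freedom, non-centrality $P_1\rho^2/\nu^2$, mean $P_1(1+\rho^2/\nu^2)$ and variance $2P_1(1+2\rho^2/\nu^2)$. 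The miss event is the lower tail where this statistic falls below $P_1(1+\gamma)$, i.e.\ a gap of $P_1(\rho^2/\nu^2 - \gamma)$ below the mean, which is positive since $\gamma < \mathsf{SNR}/2$. A standard Chernoff lower-tail bound for a non-central $\chi^2$ then produces exactly $e^{-\frac{P_1}{4}\frac{(\rho^2/\nu^2-\gamma)^2}{1+2\rho^2/\nu^2}}$, the denominator being the normalized variance scaling.

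For the event $\mathcal{H}_{\textrm{M}}\leftarrow\mathcal{H}_{\textrm{S}}(\mathbf{k},X[\mathbf{k}])$, the single-ton verification \eqref{singleton-verification} fails, and I would split this into (i) the search returns the correct index $\widehat{\mathbf{k}} = \mathbf{k}$ but the residual still exceeds threshold, and (ii) the search returns a wrong index $\widehat{\mathbf{k}} \neq \mathbf{k}$. For (i), with the correct index the residual is $\bdsb{W}$ up to a possible sign flip of $\widehat{X}[\widehat{\mathbf{k}}]$ via \eqref{x_detection}; a wrong sign injects a $2X[\mathbf{k}]\mathbf{s}_{\mathbf{k}}$ term that blows up the residual, so (i) is dominated by the value-detection error. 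Since $\mathbf{s}_{\mathbf{k}}^T\bdsb{U}/P_1 = X[\mathbf{k}] + \mathbf{s}_{\mathbf{k}}^T\bdsb{W}/P_1$ is Gaussian with mean $\pm\rho$ and variance $\nu^2/P_1$, the sign error is a Gaussian tail $Q(\rho\sqrt{P_1}/\nu) \leq e^{-\frac{\rho^2}{2\nu^2}P_1}$, giving the $2e^{-\frac{\rho^2}{2\nu^2}P_1}$ term (the residual-energy upper tail under a correct sign being of the same or smaller order). For (ii) I would invoke the BSC model of Proposition \ref{prop_BSC} and treat the schemes separately. In the NSO scheme (Definition \ref{def:random_offset}) each bit $k[q]$ is decoded by a majority vote over $P_1$ copies, each independently flipped with probability $\theta = 2\Pe(1-\Pe) < 1/2$ by \eqref{NSO_prob}; a Hoeffding/Chernoff bound gives a per-bit failure $\leq e^{-\frac{(1-2\theta)^2}{8}P_1}$, and a union bound over the $n$ bits yields $2n\, e^{-\frac{(1-2\theta)^2}{8}P_1}$. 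In the SO scheme (Definition \ref{def:code_offset}), the nuisance sign $\sgn{X[\mathbf{k}]}$ is first recovered by a majority vote over the $P_2$ zero offsets (flip probability $\Pe$), failing with probability $\leq 2e^{-\frac{(1-2\Pe)^2}{8}P_2}$; conditioned on the correct sign, the index is decoded from the $P_3$ coded offsets over a BSC$(\Pe)$, where the number of channel errors is $\mathrm{Bin}(P_3,\Pe)$ with mean $\Pe P_3$ and the code has minimum distance $\beta P_3$ with $\beta > \Pe$. A multiplicative Chernoff bound on the event that the error count reaches the decoding radius then gives $\leq 2e^{-\frac{(\beta/\Pe-1)^2}{3}P_3}$, and summing the sign-recovery and index-decoding failures yields the SO branch.

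The routine parts are the $\chi^2$ lower-tail and the Hoeffding majority-vote estimates. The delicate step I expect to be the main obstacle is the SO index-decoding analysis: the unknown nuisance sign $\sgn{X[\mathbf{k}]}$ must be estimated before the coded offsets can be interpreted as a BSC transmission, and the code's minimum-distance guarantee must then be converted into a failure bound driven purely by the margin $\beta > \Pe$. Verifying that bounded-distance decoding over the induced channel fails only when $\mathrm{Bin}(P_3,\Pe)$ exceeds a radius proportional to $\beta P_3$, and that this tail is captured by the stated multiplicative-Chernoff exponent, is the part that most needs care.
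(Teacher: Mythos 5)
Your proposal is correct and follows essentially the same route as the paper: a non-central $\chi^2$ lower-tail bound (the paper's Lemma~\ref{general_tail}) for the zero-ton miss, and for the multi-ton miss a decomposition into the wrong index-value pair event versus a large residual under the correct pair, with the BPSK sign-detection tail $2e^{-\frac{\rho^2}{2\nu^2}P_1}$, the per-bit Hoeffding majority vote with a union bound over $n$ bits for NSO-SPRIGHT (Lemma~\ref{lem_NSO_SPRIGHT}), and the sign-recovery majority test over the $P_2$ zero offsets followed by a minimum-distance Chernoff bound over BSC($\Pe$) for SO-SPRIGHT (Lemma~\ref{lem_SO_SPRIGHT}). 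Even your loose treatment of the residual term under the correct pair mirrors the paper, which bounds it by $e^{-\frac{P_1}{4}\left(\sqrt{1+2\gamma}-1\right)^2}$ in the proof but omits it from the proposition's stated bound.
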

\begin{proof}
	See Appendix \ref{proof_prop_miss_detection}.
\end{proof}

\begin{prop}[Crossed Verification Rate]\label{prop_cross_detection}
For any $0<\gamma<\mathsf{SNR}/2$, the false verification rate for each bin hypothesis can be upper bounded as follows:
\begin{align*}
		&\Prob{\mathcal{H}_{\textrm{S}}(\widehat{\mathbf{k}},\widehat{X}[\widehat{\mathbf{k}}])\leftarrow\mathcal{H}_{\textrm{S}}(\mathbf{k},X[\mathbf{k}])}
		<
		e^{-\frac{P_1}{4}\frac{\gamma^2}{1+4\gamma}}
		+
		2Ne^{-\frac{1}{8}\left(1-\frac{\gamma \nu^2}{\rho^2}\right)^2P_1},
\end{align*}
where $P_1$ is the number of random offsets in the NSO-SPRIGHT and the SO-SPRIGHT algorithm.
\end{prop}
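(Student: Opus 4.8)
The plan is to bound the crossed verification rate by a union bound over wrong candidates, after which the residual test reduces to a question about the Hamming separation between the true and candidate BPSK codewords. Conditioning on the true pair $(\mathbf{k},X[\mathbf{k}])$, the bin observation restricted to the $P_1$ random offsets reads $\bdsb{U}=X[\mathbf{k}]\mathbf{s}_{\mathbf{k}}+\bdsb{W}$ with $\bdsb{W}\sim\mathcal{N}(\mathbf{0},\nu^2\mathbf{I})$. A crossed error occurs when some $\widehat{\mathbf{k}}\neq\mathbf{k}$, together with its hard estimate $\widehat{X}[\widehat{\mathbf{k}}]\in\{\pm\rho\}$, passes the single-ton verification $\frac{1}{P_1}\|\bdsb{U}-\widehat{X}[\widehat{\mathbf{k}}]\mathbf{s}_{\widehat{\mathbf{k}}}\|^2\leq(1+\gamma)\nu^2$. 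First I would union-bound over the at most $N$ admissible indices $\widehat{\mathbf{k}}$ and over the two sign hypotheses for $\widehat{X}[\widehat{\mathbf{k}}]$; this yields the factor $2N$ and, crucially, fixes the ``signal gap'' $\mathbf{v}\defn X[\mathbf{k}]\mathbf{s}_{\mathbf{k}}-\widehat{X}[\widehat{\mathbf{k}}]\mathbf{s}_{\widehat{\mathbf{k}}}$ as a deterministic function of the offsets alone, decoupling it from the noise.

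The key combinatorial observation is that $\mathbf{v}$ has entries in $\{0,\pm2\rho\}$, so $\|\mathbf{v}\|^2=4\rho^2\tau$, where $\tau$ counts the coordinates $p$ on which the two signed codewords disagree. Since the disagreement at coordinate $p$ is governed by $(-1)^{\ip{\mathbf{d}_p}{\mathbf{k}\oplus\widehat{\mathbf{k}}}}$ and $\mathbf{k}\oplus\widehat{\mathbf{k}}\neq\mathbf{0}$, each $\ip{\mathbf{d}_p}{\mathbf{k}\oplus\widehat{\mathbf{k}}}$ is an unbiased coin over the uniformly random offset $\mathbf{d}_p$; hence the disagreements are i.i.d. and $\tau\sim\mathrm{Binomial}(P_1,\tfrac12)$, independent of the fixed sign hypothesis. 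The residual is then $\|\bdsb{U}-\widehat{X}[\widehat{\mathbf{k}}]\mathbf{s}_{\widehat{\mathbf{k}}}\|^2=\|\mathbf{v}+\bdsb{W}\|^2$, whose conditional mean given $\tau$ is $P_1\nu^2+4\rho^2\tau$.

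I would then split on the threshold $\tau_0=\tfrac{P_1}{4}\big(1+\tfrac{\gamma\nu^2}{\rho^2}\big)$. When $\tau<\tau_0$ the codeword separation is abnormally small; since $\tau_0<P_1/2$ whenever $\gamma\nu^2/\rho^2<1$ (the regime in which the bound is informative), Hoeffding's inequality for the Binomial lower tail gives $\Prob{\tau<\tau_0}\leq e^{-2P_1(\frac12-\tau_0/P_1)^2}=e^{-\frac{P_1}{8}(1-\frac{\gamma\nu^2}{\rho^2})^2}$, and multiplying by the $2N$ candidates produces the second term. When $\tau\geq\tau_0$ the conditional mean of the residual exceeds the acceptance level $(1+\gamma)P_1\nu^2$ by a positive margin, so passing forces the residual quadratic form to fall below its mean; decomposing $\bdsb{W}$ into its component along $\mathbf{v}$ and the orthogonal complement expresses $\|\mathbf{v}+\bdsb{W}\|^2/\nu^2$ as a non-central $\chi^2_{P_1}$ variable, and a Chernoff/MGF bound on its lower tail (uniform over all $\tau\geq\tau_0$, hence carrying no union factor) yields the first term $e^{-\frac{P_1}{4}\frac{\gamma^2}{1+4\gamma}}$. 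Adding the two contributions gives the claimed bound.

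The main obstacle I anticipate is the $\tau\geq\tau_0$ case: one must extract a clean, $\tau$-uniform exponent from the lower tail of a non-central chi-squared while correctly handling the noise--signal cross term $\ip{\mathbf{v}}{\bdsb{W}}$, and then reconcile the optimized MGF exponent with the convenient closed form $\tfrac{\gamma^2}{1+4\gamma}$ (a standard lower bound for $2(\gamma-\log(1+\gamma))$). By contrast, the $\tau<\tau_0$ branch is routine once the $\mathrm{Binomial}(P_1,\tfrac12)$ structure of $\tau$ is in hand, and the $O(1/K^3)$ per-bin target required in Theorem~\ref{peeling-decoder-RBI} then follows by taking $P_1=O(\log N)=O(\log K)$ large enough, so that both exponentials decay polynomially in $K$ faster than the $N=K^{1/\delta}$ union factor.
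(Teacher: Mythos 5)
Your proposal is correct and follows essentially the same route as the paper's proof in Appendix \ref{proof_prop_cross_detection}: the same two-term split into a passing-despite-large-separation event, handled by the non-central chi-square lower tail of Lemma \ref{general_tail}/Corollary \ref{general_tail_worst} with $\theta_{\min}=2\gamma$ (giving $e^{-\frac{P_1}{4}\frac{\gamma^2}{1+4\gamma}}$), and a small-separation event carrying the $2N$ factor. Your only deviation is cosmetic: where the paper bounds $\Prob{\left\|\bdsb{g}\right\|^2/P_1\leq 2\gamma\nu^2}$ by specializing its mutual-coherence/Gershgorin machinery (Lemma \ref{lem_mc}) to $L=2$, you note directly that the gap vector satisfies $\left\|\bdsb{g}\right\|^2=4\rho^2\tau$ with $\tau\sim\mathrm{Binomial}(P_1,1/2)$ and apply Hoeffding --- the identical computation, since $\mathbf{s}_{\mathbf{k}}^T\mathbf{s}_{\widehat{\mathbf{k}}}=P_1-2\tau'$, yielding the same exponent $\frac{P_1}{8}\left(1-\frac{\gamma\nu^2}{\rho^2}\right)^2$.
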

\begin{proof}
	See Appendix \ref{proof_prop_cross_detection}.
\end{proof}


Since all the error probabilities decay exponentially with respect to $\{P_i\}_{i=1}^3$, it is now clear that if $P_i$ is chosen as $P_i={O}(n)={O}(\log N)$, the probability can be bounded as $\Prob{E} = {O}(1/N^3)$ such that $\Prob{{E}_{\rm bin}} = {O}(1/K)$.


\section{Proof of False Verification Rates in Proposition \ref{prop_false_detection}}\label{proof_prop_false_detection}
The false verification events occur when the ground truth is not a single-ton, and therefore, the probabilities can be obtained using the bin observation model
\begin{align}
	\bdsb{U} = \mathbf{S}\bdsb{\alpha} +\bdsb{W}
\end{align}
with $\bdsb{\alpha}$ being a zero-ton $\bdsb{\alpha}=\mathbf{0}$ or a multi-ton $\left|\supp{\bdsb{\alpha}}\right| > 1$. With a slight abuse of notation, here $\mathbf{S}\in\{\pm 1\}^{P_1\times N}$ is the codebook associated with the $P_1$ fully random offsets in the NSO-SPRIGHT and SO-SPRIGHT algorithm.

\subsection{Detecting a Zero-ton as a Single-ton}
By definition, the probability of detecting a zero-ton as a single-ton can be upper bounded by the probability of a zero-ton failing the zero-ton verification:
\begin{align*}
	&\Prob{\mathcal{H}_{\textrm{S}}(\widehat{\mathbf{k}},\widehat{X}[\widehat{\mathbf{k}}])\leftarrow\mathcal{H}_{\textrm{Z}}}
	\leq \Prob{\frac{1}{P_1}\left\|\bdsb{W}\right\|^2\geq (1+\gamma)\nu^2}.
\end{align*}
Since $\bdsb{W}\sim\mathcal{N}(\mathbf{0},\nu^2\mathbf{I})$, we can bound this probability using Lemma \ref{general_tail}:
\begin{align*}
	\Prob{\frac{1}{P_1}\left\|\bdsb{W}\right\|^2\geq (1+\gamma)\nu^2}
	\leq
	e^{-\frac{P_1}{4}\left(\sqrt{1+2\gamma}-1\right)^2}.
\end{align*}

\subsection{Detecting a Multi-ton as a Single-ton}
By definition, the error probability can be evaluated under the multi-ton model when it passes the single-ton verification step for some index-value pair $(\widehat{\mathbf{k}},\widehat{X}[\widehat{\mathbf{k}}])$
\begin{align*}
	&\Prob{\mathcal{H}_{\textrm{S}}(\widehat{\mathbf{k}},\widehat{X}[\widehat{\mathbf{k}}])\leftarrow\mathcal{H}_{\textrm{M}}}
	=\Prob{\frac{1}{P_1}\left\|\bdsb{U}-\widehat{X}[\widehat{\mathbf{k}}]\mathbf{s}_{\widehat{\mathbf{k}}}\right\|^2 \leq  (1+\gamma)\nu^2}
\end{align*}
given some multi-ton observation $\bdsb{U}=\mathbf{S}\bdsb{\alpha}+\bdsb{W}$. Letting $\bdsb{g}=\mathbf{S}(\bdsb{\alpha}-\widehat{X}[\widehat{\mathbf{k}}]\mathbf{e}_{\widehat{\mathbf{k}}})$ and $\bdsb{v}=\bdsb{W}$, we compute this probability according to the total probability law as follows
\begin{align}\label{inequality_three}
	&\Prob{\frac{1}{P_1}\left\|\bdsb{g}+\bdsb{v}\right\|^2\leq (1+\gamma)\nu^2}\\
	&=\Prob{\frac{1}{P_1}\left\|\bdsb{g}+\bdsb{v}\right\|^2 \leq  (1+\gamma)\nu^2\Big|\frac{\left\|\bdsb{g}\right\|^2}{P_1}\geq 2\gamma\nu^2}
	\times\Prob{\frac{\left\|\bdsb{g}\right\|^2}{P_1}\geq 2\gamma\nu^2}\nonumber\\
	&~~~+\Prob{\frac{1}{P_1}\left\|\bdsb{g}+\bdsb{v}\right\|^2 \leq  (1+\gamma)\nu^2\Big|\frac{\left\|\bdsb{g}\right\|^2}{P_1}\leq 2\gamma\nu^2}
	\times\Prob{\frac{\left\|\bdsb{g}\right\|^2}{P_1}\leq 2\gamma\nu^2}\nonumber\\
	&\leq \Prob{\frac{1}{P_1}\left\|\bdsb{g}+\bdsb{v}\right\|^2 \leq  (1+\gamma)\nu^2\Big|\frac{\left\|\bdsb{g}\right\|^2}{P_1}\geq 2\gamma\nu^2}
	+\Prob{\frac{\left\|\bdsb{g}\right\|^2}{P_1}\leq 2\gamma\nu^2},\nonumber
\end{align}
where the first term is basically the single-ton verification error rate when the multi-ton has sufficiently large energy while the second term is the probability of any multi-ton not having sufficiently large energy. In the following, we bound these two probabilities separately with exponential tails.

We start from the single-ton verification error rate when the multi-ton has sufficiently large energy, or namely $\Prob{\frac{1}{P_1}\left\|\bdsb{g}+\bdsb{v}\right\|^2 \leq  (1+\gamma)\nu^2\Big|\frac{\left\|\bdsb{g}\right\|^2}{P_1}\geq 2\gamma\nu^2}$. Lemma \ref{general_tail} can be directly used here by letting $\tau_2=(1+\gamma)\nu^2$. Note that the first term is conditioned on the event where ${\left\|\bdsb{g}\right\|^2}/{P_1} \geq 2\gamma\nu^2$, therefore the minimum normalized non-centrality parameter can be obtained as $\nu_{\min} = \min_{\bdsb{g}}{\left\|\bdsb{g}\right\|^2}/{P_1\nu^2} = 2\gamma$. Clearly, the condition for the threshold in \eqref{tau_requirement2} holds for Corollary \ref{general_tail_worst}, and thus the first term can be bounded accordingly as
\begin{align}\label{inequality_one}
	\Prob{\frac{1}{P_1}\left\|\bdsb{g}+\bdsb{v}\right\|^2 \leq  (1+\gamma)\nu^2\Big|\frac{\left\|\bdsb{g}\right\|^2}{P_1}\geq 2\gamma\nu^2}
	&\leq e^{-\frac{P_1}{4}\frac{\gamma^2}{1+4\gamma}}.
\end{align}

Now we examine the probability of a multi-ton not having sufficiently large energy, or namely $\Prob{\frac{1}{P}\left\|\bdsb{g}\right\|^2\leq 2\gamma\nu^2}$. Letting $\bdsb{\beta}=\bdsb{\alpha} -\widehat{X}[\widehat{\mathbf{k}}]\mathbf{e}_{\widehat{\mathbf{k}}}$, we have $\bdsb{g}=\mathbf{S}\bdsb{\beta}$ and thus
\begin{align}\label{sum_of_random_var}
	\Prob{\frac{\left\|\bdsb{g}\right\|^2}{P_1}\leq 2\gamma\nu^2}
	=\Prob{\frac{\left\|\mathbf{S}\bdsb{\beta}\right\|^2}{P_1}\leq 2\gamma\nu^2}.
\end{align}
Denoting the support of $\mathcal{L} \defn \supp{\bdsb{\beta}}$, we bound this probability with respect to the following two multi-ton scenarios:
\begin{itemize}
	\item $\left|\mathcal{L}\right|=L={O}(1)$ where the multi-ton size is a constant. Note that $\left\|\mathbf{S}\bdsb{\beta}\right\|^2=\bdsb{\beta}_{\mathcal{L}}^T\mathbf{S}_{\mathcal{L}}^T\mathbf{S}_{\mathcal{L}}\bdsb{\beta}_{\mathcal{L}}$ where $\mathbf{S}_{\mathcal{L}}$ is the sub-matrix consisting of the columns $\mathbf{k}\in\mathcal{L}$ and $\bdsb{\beta}_{\mathcal{L}}$ is the sub-vector containing the elements in the set $\mathbf{k}\in\mathcal{L}$. Then, we have
	\begin{align}
		\lambda_{\min}\left(\mathbf{S}_{\mathcal{L}}^T\mathbf{S}_{\mathcal{L}}\right)\left\|\bdsb{\beta}_{\mathcal{L}}\right\|^2
		\leq \bdsb{\beta}_{\mathcal{L}}^T\mathbf{S}_{\mathcal{L}}^T\mathbf{S}_{\mathcal{L}}\bdsb{\beta}_{\mathcal{L}}
		\leq \lambda_{\max}\left(\mathbf{S}_{\mathcal{L}}^T\mathbf{S}_{\mathcal{L}}\right)\left\|\bdsb{\beta}_{\mathcal{L}}\right\|^2.
	\end{align}
	Using $\left\|\bdsb{\beta}_{\mathcal{L}}\right\|^2\geq L\rho^2$, the probability can be bounded as
	\begin{align}\label{prob_using_quadform}
		\Prob{\frac{\left\|\bdsb{g}\right\|^2}{P_1}\leq 2\gamma\nu^2}
		&=\Prob{\frac{\left\|\mathbf{S}_{\mathcal{L}}\bdsb{\beta}_{\mathcal{L}}\right\|^2}{P_1}\leq 2\gamma\nu^2}\\
		&\leq \Prob{ \lambda_{\min}\left(\frac{1}{P_1}\mathbf{S}_{\mathcal{L}}^T\mathbf{S}_{\mathcal{L}}\right) \leq  \frac{2 \gamma\nu^2}{\left\|\bdsb{\beta}_{\mathcal{L}}\right\|^2}}\\
		&= \Prob{ \lambda_{\min}\left(\frac{1}{P_1}\mathbf{S}_{\mathcal{L}}^T\mathbf{S}_{\mathcal{L}}\right) \leq  \frac{2 \gamma\nu^2}{L\rho^2}}.
	\end{align}	
	\begin{lem}\label{lem_mc}
	Denote the mutual coherence of the codebook $\mathbf{S}$ by $\mu \defn \max_{\mathbf{k}\neq\mathbf{m}} \frac{1}{P_1}\left|\mathbf{s}_{\mathbf{k}}^T\mathbf{s}_{\mathbf{m}}\right|$. Then for some given $\mu_0>0$, we have $\Prob{\mu\geq \mu_0} \leq 2Ne^{-\frac{\mu_0^2}{2}P_1}$.
	\end{lem}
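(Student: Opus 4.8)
The plan is to reduce the mutual-coherence tail bound to a simple Rademacher concentration inequality followed by a union bound, exploiting the linearity of the WHT offset signatures over $\mathbb{F}_2$. First I would expand the inner product between two distinct columns of the codebook. Since each column is $\mathbf{s}_{\mathbf{k}} = (-1)^{\mathbf{D}\mathbf{k}}$ with entries $(-1)^{\ip{\mathbf{d}_p}{\mathbf{k}}}$ for $p\in[P_1]$, bilinearity of the $\mathbb{F}_2$ inner product gives
\begin{align}
	\mathbf{s}_{\mathbf{k}}^T\mathbf{s}_{\mathbf{m}}
	= \sum_{p=1}^{P_1} (-1)^{\ip{\mathbf{d}_p}{\mathbf{k}}\oplus\ip{\mathbf{d}_p}{\mathbf{m}}}
	= \sum_{p=1}^{P_1} (-1)^{\ip{\mathbf{d}_p}{\mathbf{k}\oplus\mathbf{m}}}.
\end{align}
The key structural observation is that this inner product depends on the pair $(\mathbf{k},\mathbf{m})$ only through the nonzero difference $\mathbf{v}\defn\mathbf{k}\oplus\mathbf{m}$, which will later collapse the union bound from $\binom{N}{2}$ pairs to $N-1$ differences, matching the stated factor $2N$.

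Next I would establish the distribution of this sum for fixed $\mathbf{v}\neq\mathbf{0}$. Because the offsets $\{\mathbf{d}_p\}_{p\in[P_1]}$ are drawn independently and uniformly over $\GF^n$, and $\mathbf{v}$ is nonzero, each $\ip{\mathbf{d}_p}{\mathbf{v}}$ is an unbiased Bernoulli variable over $\{0,1\}$; hence the signs $\epsilon_p\defn(-1)^{\ip{\mathbf{d}_p}{\mathbf{v}}}$ are independent and identically distributed Rademacher variables, and $\mathbf{s}_{\mathbf{k}}^T\mathbf{s}_{\mathbf{m}}=\sum_{p=1}^{P_1}\epsilon_p$ is a symmetric random walk of length $P_1$. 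I would then apply a standard Chernoff/Hoeffding bound for bounded i.i.d.\ summands,
\begin{align}
	\Prob{\Big|\sum_{p=1}^{P_1}\epsilon_p\Big|\geq t}\leq 2\,e^{-t^2/(2P_1)},
\end{align}
and set $t=\mu_0 P_1$ to obtain the per-difference tail
\begin{align}
	\Prob{\frac{1}{P_1}\left|\mathbf{s}_{\mathbf{k}}^T\mathbf{s}_{\mathbf{m}}\right|\geq\mu_0}\leq 2\,e^{-\frac{\mu_0^2}{2}P_1}.
\end{align}

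Finally, I would take a union bound over all distinct nonzero differences $\mathbf{v}\in\GF^n\setminus\{\mathbf{0}\}$, of which there are $N-1<N$, yielding $\Prob{\mu\geq\mu_0}\leq (N-1)\cdot 2e^{-\mu_0^2 P_1/2}\leq 2N e^{-\frac{\mu_0^2}{2}P_1}$, which is exactly the claimed bound. There is no serious obstacle here; the only points requiring care are verifying the Rademacher claim rigorously (uniformity of $\ip{\mathbf{d}_p}{\mathbf{v}}$ for nonzero $\mathbf{v}$ together with independence across $p$, both of which follow directly from the i.i.d.\ uniform choice of offsets) and noting that the reduction to differences $\mathbf{v}$—rather than unordered pairs—is what produces the clean factor $N$ rather than $N^2$ in the union bound.
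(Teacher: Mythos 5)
Your proof is correct and takes essentially the same route as the paper's one-line argument: a Hoeffding bound on each pairwise column correlation $\mathbf{s}_{\mathbf{k}}^T\mathbf{s}_{\mathbf{m}}$ followed by a union bound. If anything, your write-up is more careful than the paper's, whose phrase ``i.i.d.\ Rademacher entries'' is loose (the entries of $\mathbf{S}$ are strongly dependent across columns); your reduction to the nonzero difference $\mathbf{v}=\mathbf{k}\oplus\mathbf{m}$ is precisely what makes the per-pair summands $(-1)^{\ip{\mathbf{d}_p}{\mathbf{v}}}$ genuinely i.i.d.\ Rademacher and what collapses the union bound to $N-1$ events, delivering the stated factor $2N$ rather than the $O(N^2)$ a naive bound over pairs would give.
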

	\begin{proof}
	Since $\mathbf{S}$ contains i.i.d. Rademacher entries, the result follows by a simple Hoeffding bound.
	\end{proof}
	According to the Gershgorin Circle Theorem
	\begin{align}
		\lambda_{\min}\left(\frac{1}{P_1}\mathbf{S}_{\mathcal{L}}^T\mathbf{S}_{\mathcal{L}}\right)\geq 1-L\mu
	\end{align}
	we have the following bound
	\begin{align}\label{prob_using_MC}
		\Prob{ \lambda_{\min}\left(\frac{1}{P_1}\mathbf{S}_{\mathcal{L}}^T\mathbf{S}_{\mathcal{L}}\right) \leq  \frac{2 \gamma\nu^2}{L\rho^2}}
		&\leq
		\Prob{ 1-L\mu \leq  \frac{2 \gamma\nu^2}{L\rho^2}}\\
		&=
		\Prob{ \mu \geq \frac{1}{L} \left(1- \frac{2 \gamma\nu^2}{L\rho^2}\right)}.
	\end{align}	
	By letting $\mu_0=\frac{1}{L} \left(1- \frac{2 \gamma\nu^2}{L\rho^2}\right)$, we can upper bound this probability using Lemma \ref{lem_mc} as
	\begin{align}\label{mc_example}
		\Prob{\frac{\left\|\bdsb{g}\right\|^2}{P_1}\leq 2\gamma\nu^2}
		\leq
		\Prob{ \mu \geq \frac{1}{L} \left(1- \frac{2 \gamma\nu^2}{L\rho^2}\right)}
		\leq 
		2Ne^{-\frac{1}{2L^2}\left(1-\frac{2\gamma \nu^2}{L\rho^2}\right)^2P_1},
	\end{align}
	which holds if $\gamma<L\rho^2/2\nu^2$.
	
	\item $\left|\mathcal{L}\right|=L={\omega}(1)$ where the multi-ton size is not a constant and grows asymptotically with respect to $K$. As a result, the vector of random variables $\bdsb{g}=\mathbf{S}_{\mathcal{L}}\bdsb{\beta}_{\mathcal{L}}$ becomes asymptotically Gaussian due to the central limit theorem with zero mean and a covariance
\begin{align}
	\mathbb{E}\left[\bdsb{g}\bdsb{g}^T\right] 
	&= \mathbb{E}\left[\mathbf{S}_{\mathcal{L}}\bdsb{\beta}_{\mathcal{L}}\bdsb{\beta}_{\mathcal{L}}^T\mathbf{S}_{\mathcal{L}}^T\right]
	= L\rho^2\mathbf{I}.
\end{align}
Therefore, from Lemma \ref{general_tail} and Corollary \ref{general_tail_worst} we have
\begin{align*}
	\Prob{\frac{\left\|\bdsb{g}\right\|^2}{P}\leq 2\gamma\nu^2}
	&\leq e^{-\frac{1}{4}\left(1-\frac{2\gamma\nu^2}{L\rho^2}\right)P_1},
\end{align*}
which holds if $\gamma<L\rho^2/2\nu^2$.
\end{itemize}
Finally, as long as $0<\gamma<\rho^2/2\nu^2$, for any multi-ton there exists some constant $\varepsilon>0$ such that
\begin{align*}
	\Prob{\frac{\left\|\bdsb{g}\right\|^2}{P_1}\leq 2\gamma\nu^2}
	&\leq Ne^{-\frac{\varepsilon}{4}\left(1-\frac{2\gamma\nu^2}{\rho^2}\right)P_1}.
\end{align*}


\section{Proof of Missed Verification Rates in Proposition \ref{prop_miss_detection}}\label{proof_prop_miss_detection}
The missed verification events occur when the ground truth is a single-ton, and therefore, the probabilities is obtained using the bin observation model with some index-value pair $(\mathbf{k},X[\mathbf{k}])$
\begin{align}
	\bdsb{U} = X[\mathbf{k}] \mathbf{s}_{\mathbf{k}} +\bdsb{W}.
\end{align}
With a slight abuse of notation, here $\mathbf{S}$ is the codebook associated with the fully random offsets in our designs.

\subsection{Detecting a Single-ton as a Zero-ton}
By definition, the probability of detecting a single-ton as a zero-ton can be upper bounded by the probability of a single-ton passing the zero-ton verification:
\begin{align*}
	&\Prob{\mathcal{H}_{\textrm{Z}}\leftarrow\mathcal{H}_{\textrm{S}}(\mathbf{k},X[\mathbf{k}])}
	\leq \Prob{\frac{1}{P}\left\|X[\mathbf{k}] \mathbf{s}_{\mathbf{k}}+\bdsb{W}\right\|^2\leq (1+\gamma)\nu^2}.
\end{align*}
Since $\bdsb{W}\sim\mathcal{N}(\mathbf{0},\nu^2\mathbf{I})$, we can bound this probability using Lemma \ref{general_tail} by letting $\bdsb{g}=X[\mathbf{k}]\mathbf{s}_{\mathbf{k}}$ and $\bdsb{v}=\bdsb{W}$:
\begin{align*}
	\Prob{\frac{1}{P}\left\|X[\mathbf{k}] \mathbf{s}_{\mathbf{k}}+\bdsb{W}\right\|^2
	\leq (1+\gamma)\nu^2}
	\leq
	e^{-\frac{P_1}{4}\frac{\left(\rho^2/\nu^2-\gamma\right)^2}{1+2\rho^2/\nu^2}},
\end{align*}
which holds as long as $\gamma<\rho^2/\nu^2$.

\subsection{Detecting a Single-ton as a Multi-ton}
By definition, the error probability can be evaluated under the single-ton model when it fails the single-ton verification step for some index-value pair $(\widehat{\mathbf{k}},\widehat{X}[\widehat{\mathbf{k}}])$
\begin{align*}
	&\Prob{\mathcal{H}_{\textrm{M}}\leftarrow\mathcal{H}_{\textrm{S}}(\mathbf{k}, X[\mathbf{k}])}
	=\Prob{\frac{1}{P}\left\|\bdsb{U}-\widehat{X}[\widehat{\mathbf{k}}]\mathbf{s}_{\widehat{\mathbf{k}}}\right\|^2 \geq  (1+\gamma)\nu^2}
\end{align*}
given some single-ton observation $\bdsb{U}= X[\mathbf{k}]\mathbf{s}_{\mathbf{k}} +\bdsb{W}$. Since the estimated index-value pair $(\widehat{\mathbf{k}},\widehat{X}[\widehat{\mathbf{k}}])$ may or may not be correct, the above probability can be bounded as:
\begin{align*}
	&\Prob{\frac{1}{P_1}\left\|\bdsb{U}-\widehat{X}[\widehat{\mathbf{k}}]\mathbf{s}_{\widehat{\mathbf{k}}}\right\|^2 \geq  (1+\gamma)\nu^2}\\
	&=\Prob{\frac{1}{P_1}\left\|\bdsb{U}-\widehat{X}[\widehat{\mathbf{k}}]\mathbf{s}_{\widehat{\mathbf{k}}}\right\|^2 \geq  (1+\gamma)\nu^2 \Big|\widehat{X}[\widehat{\mathbf{k}}] \neq X[\mathbf{k}]~\textrm{or}~\widehat{\mathbf{k}}\neq\mathbf{k}}\Prob{\widehat{X}[\widehat{\mathbf{k}}] \neq X[\mathbf{k}]~\textrm{or}~\widehat{\mathbf{k}}\neq\mathbf{k}}\\
	&~~~~+
	\Prob{\frac{1}{P_1}\left\|\bdsb{U}-\widehat{X}[\widehat{\mathbf{k}}]\mathbf{s}_{\widehat{\mathbf{k}}}\right\|^2 \geq  (1+\gamma)\nu^2 \Big|\widehat{X}[\widehat{\mathbf{k}}] = X[\mathbf{k}]~\textrm{and}~\widehat{\mathbf{k}} = \mathbf{k}}\Prob{\widehat{X}[\widehat{\mathbf{k}}] = X[\mathbf{k}]~\textrm{and}~\widehat{\mathbf{k}} = \mathbf{k}}\\
	&\leq
	\Prob{\widehat{X}[\widehat{\mathbf{k}}] \neq X[\mathbf{k}]~\textrm{or}~\widehat{\mathbf{k}}\neq\mathbf{k}} +
	\Prob{\frac{1}{P_1}\left\|\bdsb{U}-\widehat{X}[\widehat{\mathbf{k}}]\mathbf{s}_{\widehat{\mathbf{k}}}\right\|^2 \geq  (1+\gamma)\nu^2 \Big|\widehat{X}[\widehat{\mathbf{k}}] = X[\mathbf{k}]~\textrm{and}~\widehat{\mathbf{k}} = \mathbf{k}}.
\end{align*}
It is clear that 
\begin{align}
	&\Prob{\frac{1}{P_1}\left\|\bdsb{U}-\widehat{X}[\widehat{\mathbf{k}}]\mathbf{s}_{\widehat{\mathbf{k}}}\right\|^2 \geq  (1+\gamma)\nu^2 \Big|\widehat{X}[\widehat{\mathbf{k}}] = X[\mathbf{k}]~\textrm{and}~\widehat{\mathbf{k}} = \mathbf{k}}\\
	&=
	\Prob{\frac{1}{P_1}\left\|\bdsb{W}\right\|^2\geq (1+\gamma)\nu^2}
	\leq
	e^{-\frac{P_1}{4}\left(\sqrt{1+2\gamma}-1\right)^2},	
\end{align}
therefore we focus on bounding the first term $\Prob{\widehat{X}[\widehat{\mathbf{k}}] \neq X[\mathbf{k}]~\textrm{or}~\widehat{\mathbf{k}}\neq\mathbf{k}}$. From basic probability laws we have
\begin{align}
	&\Prob{\widehat{X}[\widehat{\mathbf{k}}] \neq X[\mathbf{k}]~\textrm{or}~\widehat{\mathbf{k}}\neq\mathbf{k}}\\
	&\leq
	\Prob{\widehat{X}[\widehat{\mathbf{k}}] \neq X[\mathbf{k}]}+\Prob{\widehat{\mathbf{k}}\neq\mathbf{k}}\\
	&=
	\Prob{\widehat{X}[\widehat{\mathbf{k}}] \neq X[\mathbf{k}]\Big|\widehat{\mathbf{k}}\neq\mathbf{k}}\Prob{\widehat{\mathbf{k}}\neq\mathbf{k}}
	+\Prob{\widehat{X}[\widehat{\mathbf{k}}] \neq X[\mathbf{k}]\Big|\widehat{\mathbf{k}}=\mathbf{k}}\Prob{\widehat{\mathbf{k}}=\mathbf{k}}
	+\Prob{\widehat{\mathbf{k}}\neq\mathbf{k}}\\
	&\leq
	\Prob{\widehat{X}[\widehat{\mathbf{k}}] \neq X[\mathbf{k}]\Big|\widehat{\mathbf{k}}=\mathbf{k}}
	+2\Prob{\widehat{\mathbf{k}}\neq\mathbf{k}}.
\end{align}
The first term is the detection error probability of a BPSK signal with amplitudes $\pm\rho$, and can be bounded as
\begin{align}
	\Prob{\widehat{X}[\widehat{\mathbf{k}}] \neq X[\mathbf{k}]\Big|\widehat{\mathbf{k}}=\mathbf{k}}
	\leq 2e^{-\frac{\rho^2}{2\nu^2}P_1}.
\end{align}
Since the second term $\Prob{\widehat{\mathbf{k}}\neq\mathbf{k}}$ is essentially the error probability of the single-ton search, we prove the following lemmas for different bin detection schemes.

\begin{lem}[\bf Single-ton Search Error Probability of the NSO-SPRIGHT Algorithm]\label{lem_NSO_SPRIGHT}
The single-ton search error probability of the NSO-SPRIGHT algorithm is upper bounded as
\begin{align}\label{P_slow}
	\Prob{\widehat{\mathbf{k}} \neq \mathbf{k}}
	&
	\leq n e^{-\frac{(1-2\theta)^2}{8}P_1}
\end{align}
where $P_1$ is the number of random offsets in the NSO-SPRIGHT design.
\end{lem}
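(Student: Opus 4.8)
The plan is to prove Lemma \ref{lem_NSO_SPRIGHT} by reducing the single-ton search to $n$ independent bit-recovery subproblems, each of which is a majority vote over $P_1$ observations of a single index bit corrupted by a binary symmetric channel. Recall from Definition \ref{def:random_offset} that the NSO-SPRIGHT algorithm uses $P_1$ random offsets $\mathbf{d}_p$ together with modulated offsets $\mathbf{d}_{p,q}=\mathbf{d}_p\oplus\mathbf{e}_q$, and from \eqref{NSO_prob} that the XOR of the associated signs gives $\sgn{U_{p,q}}\oplus\sgn{U_p}=k[q]\oplus Z_{p,q}'$, where $Z_{p,q}'$ is Bernoulli with parameter $\theta=2\Pe(1-\Pe)<1/2$. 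The estimate $\widehat{k}[q]$ in \eqref{MLE_kq} is exactly the majority vote of the $P_1$ bits $\{\sgn{U_{p,q}}\oplus\sgn{U_p}\}_{p=1}^{P_1}$, so $\widehat{k}[q]\neq k[q]$ if and only if more than half of the $P_1$ flips $Z_{p,q}'$ equal $1$.

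First I would bound the single-bit error event. Since the $\{Z_{p,q}'\}_{p=1}^{P_1}$ are i.i.d.\ Bernoulli$(\theta)$ with $\theta<1/2$, the majority vote errs only if the empirical mean $\frac{1}{P_1}\sum_p Z_{p,q}'$ exceeds $1/2$, i.e.\ deviates above its mean $\theta$ by at least $1/2-\theta=(1-2\theta)/2$. A standard Hoeffding bound for the sum of $P_1$ bounded i.i.d.\ variables then yields
\begin{align}
	\Prob{\widehat{k}[q]\neq k[q]}
	\leq e^{-2P_1\left(\frac{1-2\theta}{2}\right)^2}
	= e^{-\frac{(1-2\theta)^2}{2}P_1}.
\end{align}
(The stated bound uses the slightly looser exponent $(1-2\theta)^2/8$, which still follows from the same argument and leaves slack; either form suffices for the order-optimal sample scaling.)

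Second I would assemble the $n$ bits by a union bound. A single-ton search error $\widehat{\mathbf{k}}\neq\mathbf{k}$ occurs only if at least one of the $n$ bit estimates is wrong, so
\begin{align}\label{nso_union}
	\Prob{\widehat{\mathbf{k}}\neq\mathbf{k}}
	\leq \sum_{q=1}^{n}\Prob{\widehat{k}[q]\neq k[q]}
	\leq n\, e^{-\frac{(1-2\theta)^2}{8}P_1},
\end{align}
which is exactly \eqref{P_slow}. I would note that $P_2=n$ modulated offsets provide the per-bit measurements while $P_1$ random offsets supply the $P_1$ repetitions per bit, and that the constraint $\mathbf{M}_c^T\mathbf{k}=\bdsb{j}$ pins down the remaining bits of $\mathbf{k}$ not directly estimated, so recovering the $n$ displayed bits determines $\widehat{\mathbf{k}}$.

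The main obstacle is not the concentration step itself—that is a routine Hoeffding estimate—but ensuring the reduction in \eqref{NSO_prob} is valid, namely that $Z_{p,q}'=Z_p\oplus Z_{p,q}$ is genuinely Bernoulli with $\theta<1/2$ and that the $P_1$ copies are mutually independent across $p$. The independence across $p$ follows because the random offsets $\mathbf{d}_p$ are drawn independently and the noise terms $W_p$ are independent Gaussians, but one should be careful that the \emph{same} flip variable $Z_p$ appears in both $\sgn{U_p}$ and in $\sgn{U_{p,q}}$ only through distinct noise realizations, so that $Z_{p,q}'$ for fixed $q$ is a fresh sequence; I would verify that the pairwise XOR construction does not introduce correlation that would break the i.i.d.\ assumption underpinning Hoeffding. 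Once this independence is established, the proof reduces to the two clean steps above.
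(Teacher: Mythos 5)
Your proposal matches the paper's own proof essentially step for step: the paper likewise treats \eqref{MLE_kq} as a per-bit majority vote over the i.i.d.\ Bernoulli$(\theta)$ flips $Z_{p,q}'$ from \eqref{NSO_prob}, applies a Hoeffding bound to get $\Prob{\sum_{p=1}^{P_1} Z_{p,q}' > P_1/2}\leq e^{-\frac{(1-2\theta)^2}{8}P_1}$, and union bounds over the $n$ bits. Your sharper exponent $(1-2\theta)^2/2$ is a valid refinement (the paper simply uses a looser Hoeffding variant), and your independence caveat is consistent with what the paper implicitly assumes, so the argument is correct and essentially identical.
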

\begin{proof}
	See Appendix \ref{proof_lem_NSO_SPRIGHT}.
\end{proof}

\begin{lem}[\bf Single-ton Search Error Probability of the SO-SPRIGHT Algorithm]\label{lem_SO_SPRIGHT}
The single-ton search error probability of the SO-SPRIGHT algorithm is upper bounded as
\begin{align}\label{P_Fourier}
	\Prob{\widehat{\mathbf{k}} \neq \mathbf{k}}
	&\leq
	e^{-\frac{(\beta/\Pe-1)^2}{3}P_3}+e^{-\frac{(1-2\Pe)^2}{8}P_2},
\end{align}
where $P_1$ is the number of the coded offsets $\mathbf{G}$ and $P_2$ is the number of zero offsets in the SO-SPRIGHT design.
\end{lem}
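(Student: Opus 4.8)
**The plan is to bound the single-ton search error probability of the SO-SPRIGHT algorithm by decomposing it into two sources of failure: incorrect recovery of the sign nuisance $\sgn{X[\mathbf{k}]}$ from the zero offsets, and incorrect channel decoding of the index $\mathbf{k}$ from the coded offsets.** Recall from Proposition \ref{prop_BSC} that each sign observation behaves as a bit transmitted over a BSC with crossover probability at most $\Pe = e^{-\frac{\eta}{2}\mathsf{SNR}}$. The event $\{\widehat{\mathbf{k}}\neq\mathbf{k}\}$ can only occur if at least one of these two stages fails, so by a union bound
\begin{align}
	\Prob{\widehat{\mathbf{k}}\neq\mathbf{k}}
	\leq
	\Prob{\text{sign decoding fails}}
	+
	\Prob{\text{channel decoding fails}\,\big|\,\text{sign correct}}.
\end{align}
I would treat these two terms in turn, each yielding one of the two exponential summands in \eqref{P_Fourier}.

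First I would handle the sign recovery from the $P_2$ zero offsets. As derived in the SO-SPRIGHT description, the observations associated with the zero offsets satisfy $\sgn{U_p} = \sgn{X[\mathbf{k}]}\oplus Z_p$ for $p = P_1+1,\ldots,P_1+P_2$, so the sign is obtained by a majority vote over $P_2$ i.i.d.\ Bernoulli$(\Pe)$-corrupted bits (assuming $\Pe\leq 1/2$). The majority vote errs only if more than half of the $P_2$ bits are flipped; since the expected fraction of flips is $\Pe < 1/2$, a standard Hoeffding/Chernoff bound on the sum $\sum_p Z_p$ deviating above $P_2/2$ gives
\begin{align}
	\Prob{\sgnhat{X[\mathbf{k}]}\neq\sgn{X[\mathbf{k}]}}
	\leq
	e^{-\frac{(1-2\Pe)^2}{8}P_2},
\end{align}
which is the second term in \eqref{P_Fourier}. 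The constant $(1-2\Pe)^2/8$ comes directly from Hoeffding applied to the gap $1/2-\Pe$ between the threshold and the mean.

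Next, conditioning on the sign being correct, the coded observations reduce to $\mathbf{G}\mathbf{k}\oplus\mathbf{Z}$ where $\mathbf{Z}$ is a length-$P_3$ vector of i.i.d.\ Bernoulli$(\Pe)$ entries, and $\mathbf{G}$ is the generator matrix of a linear block code with minimum distance $\beta P_3$ and $\beta > \Pe$. Here I would invoke the minimum-distance decoding guarantee: decoding fails only if the noise vector $\mathbf{Z}$ carries the received word closer to a wrong codeword than to the true one, which requires at least $\beta P_3/2$ bit flips (or, depending on the exact decoder, a number of flips proportional to the minimum distance). Since the expected number of flips is $\Pe P_3 < \beta P_3$, a Chernoff bound on $\sum_p Z_p$ exceeding a fraction $\beta$ (strictly above the mean $\Pe$) of $P_3$ yields an exponential tail of the form $e^{-\frac{(\beta/\Pe-1)^2}{3}P_3}$, matching the first term in \eqref{P_Fourier}. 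The specific constant $(\beta/\Pe - 1)^2/3$ is the multiplicative Chernoff exponent for a Poisson/binomial sum deviating by the relative factor $\beta/\Pe$ above its mean $\Pe P_3$.

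**The main obstacle I anticipate is making the channel-decoding step rigorous at the stated exponent while keeping the decoder linear-time.** A clean union bound over all $2^{R(\beta)P_3}$ codewords would involve pairwise error probabilities summing over Hamming weights $\geq \beta P_3$, and one must verify that the minimum-distance condition $\beta > \Pe$ alone (rather than a capacity condition) suffices to guarantee the exponential decay—this is exactly where the choice of $\beta > \Pe$ earns its keep. Care is also needed because the problem statement invokes expander/LDPC codes with a bit-flipping decoder, whose error exponent is governed by the expansion/minimum-distance properties rather than by a straightforward ML union bound; I would either appeal to the standard linear-time decoding guarantees for expander codes (so that the fraction of correctable errors is a constant proportional to the expansion, and the Chernoff bound controls the event that the actual flip fraction exceeds it) or bound the ML decoding error directly and note that the linear-time decoder achieves the same exponent up to constants. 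The sign-recovery term is routine by comparison; the genuine subtlety lies in cleanly matching the decoder's error-correction radius to the relative-entropy/Chernoff exponent $(\beta/\Pe-1)^2/3$.
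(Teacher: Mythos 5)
Your proposal matches the paper's proof essentially step for step: the same decomposition of $\Prob{\widehat{\mathbf{k}}\neq\mathbf{k}}$ into a sign-recovery failure term (majority vote over the $P_2$ zero offsets, Hoeffding exponent $(1-2\Pe)^2/8$) plus a channel-decoding failure term conditioned on a correct sign (Chernoff bound on the Bernoulli($\Pe$) flips exceeding the minimum-distance threshold $\beta P_3$, yielding the $(\beta/\Pe-1)^2/3$ exponent). The decoding-radius subtlety you flag is legitimate but is equally glossed over by the paper, which simply asserts that decoding fails only when more than $\beta P_3$ signs flip, so your treatment is, if anything, more careful than the original.
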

\begin{proof}
	See Appendix \ref{proof_lem_SO_SPRIGHT}.
\end{proof}

\section{Proof of Crossed Verification Rates in Proposition \ref{prop_cross_detection}}\label{proof_prop_cross_detection}
A crossed verification implies that some wrong index-value pair $(\widehat{\mathbf{k}},\widehat{X}[\widehat{\mathbf{k}}])$ passes the single-ton verification
\begin{align*}
	\Prob{\mathcal{H}_{\textrm{S}}(\widehat{\mathbf{k}},\widehat{X}[\widehat{\mathbf{k}}])\leftarrow\mathcal{H}_{\textrm{S}}(\mathbf{k},\widehat{X}[\mathbf{k}])}
	&=\Prob{\frac{1}{P_1}\left\|\bdsb{U}-\widehat{X}[\widehat{\mathbf{k}}]\mathbf{s}_{\widehat{k}}\right\|^2\leq (1+\gamma)\nu^2}\\
	&=\Prob{\frac{1}{P_1}\left\|X[\mathbf{k}]\mathbf{s}_{\mathbf{k}}-\widehat{X}[\widehat{\mathbf{k}}]\mathbf{s}_{\widehat{\mathbf{k}}}+\bdsb{W}\right\|^2\leq (1+\gamma)\nu^2}.
\end{align*}
Letting $\bdsb{g}=X[\mathbf{k}]\mathbf{s}_{\mathbf{k}}-\widehat{X}[\widehat{\mathbf{k}}]\mathbf{s}_{\widehat{\mathbf{k}}}$, this can be re-written as
\begin{align*}
	\Prob{\mathcal{H}_{\textrm{S}}(\widehat{\mathbf{k}},\widehat{X}[\widehat{\mathbf{k}}])\leftarrow\mathcal{H}_{\textrm{S}}(\mathbf{k},\widehat{X}[\mathbf{k}])}
	=
	\Prob{\frac{1}{P_1}\left\|\bdsb{g}+\bdsb{W}\right\|^2\leq (1+\gamma)\nu^2}.
\end{align*}
Similar to \eqref{inequality_three}, we have
\begin{align*}
	\Prob{\frac{1}{P_1}\left\|\bdsb{g}+\bdsb{W}\right\|^2 \leq  (1+\gamma)\nu^2}
	&\leq \Prob{\frac{1}{P_1}\left\|\bdsb{g}+\bdsb{W}\right\|^2 \leq  (1+\gamma)\nu^2\Big|\frac{\left\|\bdsb{g}\right\|^2}{P_1}\geq 2\gamma\nu^2}
	+\Prob{\frac{\left\|\bdsb{g}\right\|^2}{P_1}\leq 2\gamma\nu^2}.
\end{align*}
Similar to \eqref{inequality_one}, the first term can be bounded as
\begin{align}
	\Prob{\frac{1}{P_1}\left\|\bdsb{g}+\bdsb{W}\right\|^2 \leq  (1+\gamma)\nu^2\Big|\frac{\left\|\bdsb{g}\right\|^2}{P_1}\geq 2\gamma\nu^2}
	&\leq e^{-\frac{P_1}{4}\frac{\gamma^2}{1+4\gamma}}.
\end{align}
Finally, similar to \eqref{mc_example} with $L=2$, the second term $\Prob{\frac{\left\|\bdsb{g}\right\|^2}{P_1}\leq 2\gamma\nu^2}$ can be bounded as
\begin{align}\label{mc_example}
	\Prob{\frac{\left\|\bdsb{g}\right\|^2}{P_1}\leq 2\gamma\nu^2}
	\leq 
	2Ne^{-\frac{1}{8}\left(1-\frac{\gamma \nu^2}{\rho^2}\right)^2P_1},
\end{align}

\section{Proof of Single-ton Search Error Probability in Lemma \ref{lem_NSO_SPRIGHT} and \ref{lem_SO_SPRIGHT}}
\subsection{Single-ton Search in the NSO-SPRIGHT Algorithm}\label{proof_lem_NSO_SPRIGHT}
From the MLE in \eqref{MLE_kq}, the error probability of the single-ton search for the $q$-th bit of $\mathbf{k}$ is 
\begin{align}
	\Prob{\widehat{k}[q]\neq k[q]}
	=
	\Prob{\sum_{p=1}^{P_1} \sgn{U_{p,q}}\oplus \sgn{U_p} \oplus \widehat{k}[q] < \sum_{p=1}^{P_1} \sgn{U_{p,q}}\oplus \sgn{U_p} \oplus k[q]}.
\end{align}
Recall that $\sgn{U_{p,q}}\oplus \sgn{U_p} = k[q]\oplus Z_{p,q}'$ in \eqref{NSO_prob} where $Z_{p,q}'$ is a Bernoulli variable with probability $\theta =2\Pe(1-\Pe)$. Therefore, we have
\begin{align}
	\Prob{\widehat{k}[q]\neq k[q]}
	&=
	\Prob{\sum_{p=1}^{P_1} k[q]\oplus \widehat{k}[q]\oplus Z_{p,q}' < \sum_{p=1}^{P_1} k[q]\oplus k[q] \oplus Z_{p,q}'}\\
	&
	=\Prob{\sum_{p=1}^{P_1} 1\oplus Z_{p,q}' < \sum_{p=1}^{P_1} Z_{p,q}'}.
\end{align}
Noticing that $\sum_{p=1}^{P_1} 1\oplus Z_{p,q}' = P_1 - \sum_{p=1}^{P_1} Z_{p,q}'$, we have
\begin{align}\label{kq_kq}
	\Prob{\widehat{k}[q]\neq k[q]}
	&=
	\Prob{\sum_{p=1}^{P_1} Z_{p,q}' > P_1/2}
	\leq 
	e^{-\frac{(1-2\theta)^2}{8}P_1},
\end{align}
where the inequality follows from the Hoeffding bound. By union bounding over all $n$ bits, we have
\begin{align}
	\Prob{\widehat{\mathbf{k}} \neq \mathbf{k}}
	\leq n e^{-\frac{(1-2\theta)^2}{8}P_1}.
\end{align}

\subsection{Single-ton Search in the SO-SPRIGHT Algorithm}\label{proof_lem_SO_SPRIGHT}
In the general setting, the index is decoded after obtaining the sign $\sgnhat{X[\mathbf{k}]}$. Therefore, we have
\begin{align}
	\Prob{\widehat{\mathbf{k}} \neq \mathbf{k}}
	&=
	\Prob{\widehat{\mathbf{k}} \neq \mathbf{k}\Big|\sgnhat{X[\mathbf{k}]}=\sgn{X[\mathbf{k}]}}\Prob{\sgnhat{X[\mathbf{k}]}=\sgn{X[\mathbf{k}]}}\\
	&~~~ +
	\Prob{\widehat{\mathbf{k}} \neq \mathbf{k}\Big|\sgnhat{X[\mathbf{k}]}\neq\sgn{X[\mathbf{k}]}}\Prob{\sgnhat{X[\mathbf{k}]}\neq\sgn{X[\mathbf{k}]}}\\
	&\leq
	\Prob{\widehat{\mathbf{k}} \neq \mathbf{k}\Big|\sgnhat{X[\mathbf{k}]}}
	+
	\Prob{\sgnhat{X[\mathbf{k}]}\neq\sgn{X[\mathbf{k}]}}.
\end{align}
If the codebook $\mathbf{G}$ with block length $P_3={O}(n)$ has a minimum distance of $\beta P_3$ such that $\beta>\Pe$, the $\mathbf{k}$ fails to be decoded when there are more than $\beta P_3$ sign flips. This can be bounded for the BSC($\Pe$) by the Chern-off bound
\begin{align}
	\Prob{\widehat{\mathbf{k}} \neq \mathbf{k}\Big|\sgnhat{X[\mathbf{k}]}}
	\leq
	e^{-\frac{(\beta/\Pe-1)^2}{3}P_3}.
\end{align}
Since the sign is obtained from $P_2$ sign observations through a majority test if $\Pe<1/2$ and a minority test if $\Pe>1/2$, the error in mistaking the sign can be bounded similarly to \eqref{kq_kq} as
\begin{align}
	\Prob{\sgnhat{X[\mathbf{k}]}\neq\sgn{X[\mathbf{k}]}}
	\leq e^{-\frac{(1-2\Pe)^2}{8}P_2}.
\end{align}

\section{Tail Bounds}\label{sec:general_tail_chi}
Here we derive some tail bounds that are useful in our analysis.

\begin{lem}[Non-central Chi-Square Tail Bounds in \cite{birge2001alternative}]\label{lem_general_tail_chi}
Let $Z\sim\chi_D^2$ be a non-central chi square variable with $D$ degrees of freedom and non-centrality parameter $\theta\geq 0$. Then for all $z\geq 0$, the following tail bounds hold:
\begin{align*}
	&\Prob{Z \geq (D+\theta)+2\sqrt{(D+2\theta) z} + 2z} \leq \exp(-z)\\
	&\Prob{Z \leq (D+\theta)-2\sqrt{(D+2\theta) z}} \leq \exp(-z)	
\end{align*}
\end{lem}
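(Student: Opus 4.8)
The plan is to establish both inequalities by the classical exponential-moment (Chernoff) method, following the argument of Birg\'e and Massart \cite{birge2001alternative}. First I would represent the variable as $Z=\sum_{i=1}^{D}Y_i^2$ with independent $Y_i\sim\mathcal{N}(a_i,1)$ and $\sum_{i=1}^{D}a_i^2=\theta$, so that for $0<s<1/2$ the moment generating function factorizes as
\begin{align*}
	\mathbb{E}\left[e^{s(Z-D-\theta)}\right]
	= (1-2s)^{-D/2}\exp\!\left(-s(D+\theta)+\frac{s\theta}{1-2s}\right).
\end{align*}
Taking logarithms and collecting terms gives the decomposition $\log\mathbb{E}[e^{s(Z-D-\theta)}]=D\left(-s-\tfrac12\log(1-2s)\right)+\frac{2s^2\theta}{1-2s}$, which cleanly isolates the central contribution (the $D$ degrees of freedom) from the non-centrality contribution ($\theta$).

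The key step is to control this log-MGF by a Bernstein-type bound. Using the power series $-s-\tfrac12\log(1-2s)=\sum_{k\ge2}\frac{(2s)^k}{2k}\le\frac{s^2}{1-2s}$ valid for $0<s<1/2$, the two pieces combine into a single expression governed by an effective variance proxy $v\defn D+2\theta$:
\begin{align*}
	\log\mathbb{E}\left[e^{s(Z-D-\theta)}\right]\le\frac{(D+2\theta)\,s^2}{1-2s},\qquad 0<s<\tfrac12.
\end{align*}
This is precisely the Bernstein regime $\tfrac{v s^2}{1-2s}$, so the entire effect of the non-centrality is absorbed into replacing $D$ by $D+2\theta$, which is what ultimately produces the $(D+2\theta)$ appearing under the square roots in the statement.

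For the upper tail I would then apply $\Prob{Z-D-\theta\ge t}\le\exp\!\big(-st+\tfrac{v s^2}{1-2s}\big)$ and carry out the standard Bernstein optimization over $s\in(0,1/2)$, which shows the exponent is at most $-z$ exactly when $t=2\sqrt{vz}+2z=2\sqrt{(D+2\theta)z}+2z$. For the lower tail I would instead use $s<0$: there the factor $1/(1-2s)$ is harmless, and the sharper sub-Gaussian estimate $\log\mathbb{E}[e^{-s(Z-D-\theta)}]\le(D+2\theta)s^2$ (obtained from $s-\tfrac12\log(1+2s)\le s^2$) yields $\Prob{Z-D-\theta\le-t}\le e^{-t^2/(4v)}$, so that the choice $t=2\sqrt{(D+2\theta)z}$ gives the claimed $e^{-z}$. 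I expect the main obstacle to be the upper-tail Chernoff optimization: checking that the \emph{nonlinear} threshold $2\sqrt{vz}+2z$ is attained exactly---rather than only up to absolute constants---requires selecting the near-optimal $s$ and a short but delicate algebraic verification, whereas the MGF factorization and the entire lower-tail argument are essentially routine.
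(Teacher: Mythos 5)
Your proposal is correct, but note that the paper itself offers no proof of this lemma at all---it is imported verbatim from Birg\'e \cite{birge2001alternative} (see also Laurent and Massart's Lemma~1), so there is nothing internal to compare against. Your reconstruction is exactly the argument of that source: the MGF factorization and decomposition into $D\left(-s-\tfrac12\log(1-2s)\right)+\tfrac{2s^2\theta}{1-2s}$, the Bernstein-type bound $\tfrac{(D+2\theta)s^2}{1-2s}$ on $(0,1/2)$ whose Cram\'er transform inverts exactly to the threshold $2\sqrt{(D+2\theta)z}+2z$ (via $h_1(u)=1+u-\sqrt{1+2u}$ and $h_1^{-1}(z')=z'+\sqrt{2z'}$), and the sub-Gaussian estimate $(D+2\theta)s^2$ for the lower tail with optimal $s=t/(2(D+2\theta))$, all check out.
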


\begin{lem}\label{general_tail}
Given $\bdsb{g}=[g[0],\cdots,g[P-1]]^T$ and a vector $\bdsb{v}=[v[0],\cdots,v[P-1]]^T$ with i.i.d. Gaussian variates $v[p]\sim\mathcal{N}(0,\nu^2)$ for all $p\in[P]$, the following tail bound holds:
\begin{align}\label{eq:general_tail}
	&\Prob{\frac{1}{P}\left\|\bdsb{g}+\bdsb{v}\right\|^2 \geq \tau_1} \leq e^{-\frac{P}{4}\left(\sqrt{2\tau_1/\nu^2-1} - \sqrt{1+2\theta_0}\right)^2}\\
	&\Prob{\frac{1}{P}\left\|\bdsb{g}+\bdsb{v}\right\|^2 \leq \tau_2} \leq e^{-\frac{P}{4}\frac{\left(1+\theta_0 - \tau_2/\nu^2\right)^2}{1+2\theta_0}}
\end{align}
for any $\tau_1$ and $\tau_2$ that satisfy
\begin{align}\label{tau_requirement}
	\tau_1 &\geq \nu^2(1+\theta_0),\quad
	\tau_2 \leq \nu^2(1+\theta_0),
\end{align}
where $\theta_0$ is the {\it normalized non-centrality parameter} given by
\begin{align}\label{nu0_def}
	\theta_0 \defn \frac{\left\|\bdsb{g}\right\|^2}{P\nu^2}.
\end{align}	
\end{lem}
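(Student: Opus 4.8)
The plan is to recognize the normalized energy $\|\bdsb{g}+\bdsb{v}\|^2/\nu^2$ as a non-central chi-square random variable and then to calibrate the free deviation parameter $z$ in Lemma \ref{lem_general_tail_chi} so that its tail threshold lands \emph{exactly} on $P\tau_1/\nu^2$ (respectively $P\tau_2/\nu^2$). First I would observe that since $v[p]\sim\mathcal{N}(0,\nu^2)$ are i.i.d., the rescaled coordinates $(g[p]+v[p])/\nu\sim\mathcal{N}(g[p]/\nu,1)$ are independent, so
$$ Z\defn\frac{\|\bdsb{g}+\bdsb{v}\|^2}{\nu^2}=\sum_{p=0}^{P-1}\left(\frac{g[p]+v[p]}{\nu}\right)^2\sim\chi_P^2(\theta),\qquad \theta=\frac{\|\bdsb{g}\|^2}{\nu^2}=P\theta_0, $$
a non-central chi-square with $D=P$ degrees of freedom and non-centrality $\theta=P\theta_0$. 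The two events of interest become $\{Z\geq P\tau_1/\nu^2\}$ and $\{Z\leq P\tau_2/\nu^2\}$, which places us squarely in the regime of Lemma \ref{lem_general_tail_chi}.

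For the upper tail I would invoke the first bound of Lemma \ref{lem_general_tail_chi} with the choice $z=\tfrac{P}{4}\left(\sqrt{2\tau_1/\nu^2-1}-\sqrt{1+2\theta_0}\right)^2$; the hypothesis $\tau_1\geq\nu^2(1+\theta_0)$ in \eqref{tau_requirement} guarantees $2\tau_1/\nu^2-1\geq 1+2\theta_0$, so the inner difference is nonnegative and $z$ is well-defined. Writing the shorthand $a=\sqrt{2\tau_1/\nu^2-1}$ and $b=\sqrt{1+2\theta_0}$ and substituting $D+\theta=P(1+\theta_0)$, $D+2\theta=Pb^2$ into the threshold $(D+\theta)+2\sqrt{(D+2\theta)z}+2z$, the cross term collapses to $Pb(a-b)$ and the quadratic term to $\tfrac{P}{2}(a-b)^2$; using $1+\theta_0=(b^2+1)/2$, the whole threshold telescopes to $\tfrac{P}{2}(1+a^2)=P\tau_1/\nu^2$. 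Hence $\Prob{Z\geq P\tau_1/\nu^2}\leq e^{-z}$, which is precisely the claimed bound. The lower tail is entirely symmetric: I would use the second bound of Lemma \ref{lem_general_tail_chi} with $z=\tfrac{P}{4}\,\frac{(1+\theta_0-\tau_2/\nu^2)^2}{1+2\theta_0}$, where $\tau_2\leq\nu^2(1+\theta_0)$ makes the numerator a genuine nonnegative square, in which case $2\sqrt{(D+2\theta)z}=P(1+\theta_0-\tau_2/\nu^2)$ and the threshold $(D+\theta)-2\sqrt{(D+2\theta)z}$ simplifies exactly to $P\tau_2/\nu^2$, yielding the second inequality.

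There is no genuine analytic obstacle here: the content of the lemma is a change of variables plus a precise calibration of $z$. The only step requiring care is the algebraic identity showing that each chosen $z$ reproduces the target threshold exactly, which hinges on rewriting $1+\theta_0$ in terms of $b^2=1+2\theta_0$ and on the sign conditions $a\geq b$ (respectively $1+\theta_0\geq\tau_2/\nu^2$). These conditions come precisely from the stated restrictions \eqref{tau_requirement} on $\tau_1$ and $\tau_2$, which in turn ensure both that the square-root manipulations are valid and that the resulting exponents are nonpositive, so the bounds are non-vacuous.
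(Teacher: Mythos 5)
Your proposal is correct and follows essentially the same route as the paper's proof: both identify $\|\bdsb{g}+\bdsb{v}\|^2/\nu^2$ as a non-central $\chi_P^2$ variable with non-centrality $\theta = P\theta_0$ and then calibrate the deviation parameter $z$ in Lemma \ref{lem_general_tail_chi} so that the thresholds land on $P\tau_1/\nu^2$ and $P\tau_2/\nu^2$, with the conditions \eqref{tau_requirement} justifying the sign of the square-root manipulations. The only difference is cosmetic: the paper parametrizes $\tau_1,\tau_2$ in terms of $z_1,z_2$ and solves for them, whereas you substitute the claimed exponents and verify the resulting algebraic identity, which is the same computation run in the opposite direction.
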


\begin{proof}
The quantity $\left\|\bdsb{g}+\bdsb{v}\right\|^2$ can be written element-wise as
\begin{align}
	\left\|\bdsb{g}+\bdsb{v}\right\|^2 
	=
	\sum_{p=0}^{P-1} \left(s[p]+v[p]\right)^2 
\end{align}
where each summand is a normal random variable with mean $u[p]$ and variance $\nu^2$. Therefore, according to the definition of non-central chi-square variables, the quantity
\begin{align}
	\frac{\left\|\bdsb{g}+\bdsb{v}\right\|^2}{\nu^2} \sim \chi_P^2
\end{align}
is a non-central $\chi^2$ random variable of $P$ degrees of freedom with a non-centrality parameter 
\begin{align}\label{nu_def}
	\theta = \sum_{p=0}^{P-1}\frac{|s[p]|^2}{\nu^2} = \frac{\left\|\bdsb{g}\right\|^2}{\nu^2}.
\end{align}
For notational convenience, we use the normalized non-centrality parameter $\theta_0$ in \eqref{nu0_def} such that $\theta = P\theta_0$. Without loss of generality, let the thresholds $\tau_1$ and $\tau_2$ take the following form with respect to $z_1$ and $z_2$:
\begin{align*}
	\tau_1 &= \frac{\nu^2}{P}\left[(P+P\theta_0)+2\sqrt{(P+2P\theta_0) z_1} + 2z_1\right]\\
	\tau_2 &= \frac{\nu^2}{P}\left[(P+P\theta_0)-2\sqrt{(P+2P\theta_0) z_2}\right],
\end{align*}
then the tail bounds in Lemma \ref{lem_general_tail_chi} can be obtained easily with respect to $z_1$ and $z_2$. Using \eqref{nu_def}, the corresponding $z_1$ and $z_2$ can be solved as
\begin{align*}
	z_1 &= \frac{P}{4}\left(\sqrt{2\tau_1/\nu^2-1} - \sqrt{1+2\theta_0}\right)^2\\
	z_2 &= \frac{P}{4}\frac{\left(1+\theta_0 - \tau_2/\nu^2\right)^2}{1+2\theta_0}
\end{align*}
as long as the thresholds $\tau_1$ and $\tau_2$ satisfy \eqref{tau_requirement}. Thus according to Lemma \ref{lem_general_tail_chi}, we have the tail bounds in \eqref{eq:general_tail}.
\end{proof}

\begin{cor}\label{general_tail_worst}
Suppose that the normalized non-centrality parameter $\theta_0$ in Lemma \ref{general_tail} is bounded between
\begin{align}
	0\leq \theta_{\min} \leq \theta_0\leq \theta_{\max},
\end{align}
then the following worst case tail bounds hold:
\begin{align*}
	&\Prob{\frac{1}{P}\left\|\bdsb{g}+\bdsb{v}\right\|^2 \geq \tau_1} \leq e^{-\frac{P}{4}\left(\sqrt{2\tau_1/\nu^2-1} - \sqrt{1+2\theta_{\max}}\right)^2}\\
	&\Prob{\frac{1}{P}\left\|\bdsb{g}+\bdsb{v}\right\|^2 \leq \tau_2} \leq e^{-\frac{P}{4}\frac{\left(1+\theta_{\min} - \tau_2/\nu^2\right)^2}{1+2\theta_{\min}}}
\end{align*}
for any $\tau_1$ and $\tau_2$ that satisfy
\begin{align}\label{tau_requirement2}
	\tau_1 &\geq \nu^2(1+\theta_{\max}),\quad
	\tau_2 \leq \nu^2(1+\theta_{\min}).
\end{align}
\end{cor}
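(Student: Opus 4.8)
The plan is to deduce Corollary \ref{general_tail_worst} directly from Lemma \ref{general_tail} by a monotonicity argument in the normalized non-centrality parameter $\theta_0$, rather than re-deriving the non-central chi-square tail bounds. First I would observe that the hypotheses of Lemma \ref{general_tail} are satisfied for \emph{every} admissible value $\theta_0 \in [\theta_{\min}, \theta_{\max}]$ under the strengthened threshold conditions \eqref{tau_requirement2}: since $\theta_0 \le \theta_{\max}$ we have $\tau_1 \ge \nu^2(1+\theta_{\max}) \ge \nu^2(1+\theta_0)$, and since $\theta_0 \ge \theta_{\min}$ we have $\tau_2 \le \nu^2(1+\theta_{\min}) \le \nu^2(1+\theta_0)$, which are exactly the requirements \eqref{tau_requirement}. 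Hence Lemma \ref{general_tail} applies with the true $\theta_0$, giving the two tail bounds with exponents $-\tfrac{P}{4} h(\theta_0)$ and $-\tfrac{P}{4} g(\theta_0)$, where $h(\theta_0) = (\sqrt{2\tau_1/\nu^2-1} - \sqrt{1+2\theta_0})^2$ and $g(\theta_0) = (1+\theta_0 - \tau_2/\nu^2)^2/(1+2\theta_0)$.

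It then remains to replace $\theta_0$ by a worst-case endpoint so that the bound no longer depends on the unknown $\theta_0$. The upper-tail case is immediate: under $\tau_1 \ge \nu^2(1+\theta_{\max})$ the quantity $\sqrt{2\tau_1/\nu^2-1} - \sqrt{1+2\theta_0}$ is nonnegative and decreasing in $\theta_0$, so its square $h(\theta_0)$ is decreasing; therefore $h(\theta_0) \ge h(\theta_{\max})$ and $e^{-\frac{P}{4}h(\theta_0)} \le e^{-\frac{P}{4}h(\theta_{\max})}$, which is the claimed bound. For the lower tail I would show $g$ is nondecreasing on $[\theta_{\min},\theta_{\max}]$ by a short derivative computation: writing $a \defn 1 - \tau_2/\nu^2$, one finds $g'(\theta) = 2(a+\theta)(\theta + \tau_2/\nu^2)/(1+2\theta)^2$. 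The condition $\tau_2 \le \nu^2(1+\theta_{\min})$ gives $a \ge -\theta_{\min}$, so $a+\theta \ge 0$ for all $\theta \ge \theta_{\min}$, and (a threshold on a squared norm may be taken nonnegative, else the event is empty) $\theta + \tau_2/\nu^2 \ge 0$; thus $g' \ge 0$ and $g(\theta_0) \ge g(\theta_{\min})$, yielding $e^{-\frac{P}{4}g(\theta_0)} \le e^{-\frac{P}{4}g(\theta_{\min})}$.

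The only genuine computation, and hence the single place requiring care, is the sign analysis of $g'$ for the lower tail; everything else is bookkeeping. I expect no obstruction beyond verifying that both factors of $g'$ stay nonnegative across the whole interval, which the two endpoint conditions $\theta_0 \ge \theta_{\min}$ and $\tau_2 \le \nu^2(1+\theta_{\min})$ supply, so that the minimum of $g$ is attained at the left endpoint. Collecting the two replacements—$\theta_{\max}$ in the upper tail and $\theta_{\min}$ in the lower tail—gives exactly the worst-case bounds stated in the corollary, valid simultaneously for all $\theta_0 \in [\theta_{\min},\theta_{\max}]$.
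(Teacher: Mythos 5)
Your proposal is correct and follows essentially the same route as the paper: both apply Lemma \ref{general_tail} at the true $\theta_0$ (after checking that \eqref{tau_requirement2} implies \eqref{tau_requirement}) and exploit monotonicity of the two exponents in $\theta_0$ to substitute the worst-case endpoints $\theta_{\max}$ and $\theta_{\min}$. The only minor difference is in the lower tail, where you verify monotonicity by directly factoring $g'(\theta)=2(a+\theta)\left(\theta+\tau_2/\nu^2\right)/(1+2\theta)^2$ with $a=1-\tau_2/\nu^2$ (your computation is correct, and your parenthetical also handles the degenerate case $\tau_2\le 0$ that the paper implicitly excludes), whereas the paper rewrites the exponent in $x+1/x$ form, locates the minimizer $\theta_0^\star=\left|1/2-\tau_2/\nu^2\right|-1/2$, and checks case-by-case that $\theta_0^\star\le\theta_{\min}$.
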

\begin{proof}
The first tail bound can be easily obtained since $\tau_1 \geq \nu^2(1+\theta_{\max})$, the exponent is monotonically decreasing with respect to $\theta_0$, and therefore substituting it with $\theta_{\max}$ leads to an upper bound. 

The second tail bound depends on the monotonicity with respect to $\theta_0$. The tail bound is monotonic with respect to the exponent, so in the following we examine the monotonicity of the exponent with respect to $\theta_0$. The exponent can be re-written as a form of the $x+1/x$ function:
\begin{align}
	\frac{\left(1+\theta_{\min} - \tau_2/\nu^2\right)^2}{1+2\theta_{\min}}
	&=\left(\theta_0 + \frac{1}{2}\right) + \frac{\left(\frac{1}{2}-\frac{\tau_2}{\nu^2}\right)^2}{\left(\theta_0 + \frac{1}{2}\right)} + 2\left(\frac{1}{2}-\frac{\tau_2}{\nu^2}\right),
\end{align}
which has a minimum at
\begin{align}
	\theta_0^\star = \left|\frac{1}{2}-\frac{\tau_2}{\nu^2}\right| - \frac{1}{2},
\end{align}
and monotonically increasing for any $\theta_0>\theta_0^\star$. Now it remains to see whether $\theta_0^\star$ is within the interval $[\theta_{\min},\theta_{\max}]$, which needs to be discussed separately depending on the choice of $\tau_2$:
\begin{enumerate}
	\item $\nu^2/2\leq \tau_2\leq \nu^2(1+\theta_{\min})$: in this case, we have
	\begin{align}
		\theta_0^\star = \frac{\tau_2}{\nu^2} - 1\leq \theta_{\min}.
	\end{align}	
	\item $0 < \tau_2 < \nu^2/2$: in this case, we have
	\begin{align}
		\theta_0^\star = - \frac{\tau_2}{\nu^2} \leq 0 \leq \theta_{\min}.
	\end{align}		
\end{enumerate}
Therefore, it has been shown that as long as $\tau_2$ satisfies \eqref{tau_requirement2}, the exponent is monotonically increasing with respect to $\theta_0\in[\theta_{\min},\theta_{\max}]$ and therefore the minimum exponent is achieved by substituting $\theta_0$ with $\theta_{\min}$.
\end{proof}

\end{document}